\newtheorem{theorem}{Theorem}
\newtheorem{lemma}{Lemma}
\newtheorem{definition}{Definition}
\newtheorem{remark}{Remark}
\begin{document}
%
\title{Optimizing Information Freshness in Two-Hop Status Update Systems under a Resource Constraint
\thanks{The work of H. Chen was supported by the CUHK direct grant for research under the project code 4055126. The work of Yonghui Li was supported by ARC under Grant DP190101988 and DP210103410. The work of B. Vucetic was supported in part by the Australian Research Council Laureate Fellowship grant number FL160100032.}
\thanks{Y. Gu, Y. Li and B. Vucetic are with School of Electrical and Information Engineering, The University of Sydney, Sydney, NSW 2006, Australia (email: \{yifan.gu, yonghui.li, branka.vucetic\}@sydney.edu.au).}
\thanks{Q. Wang is with School of Electrical and Information Engineering, The University of Sydney, Sydney, NSW 2006, Australia, and Department of Information Engineering, The Chinese University of Hong Kong, Hong Kong SAR, China. The work was done when she was a visiting student at CUHK (email:qian.wang2@sydney.edu.au).}
\thanks{H. Chen is with Department of Information Engineering, The Chinese University of Hong Kong (e-mail: he.chen@ie.cuhk.edu.hk).
}
}
\author{Yifan Gu, Qian Wang, He Chen, Yonghui Li, and Branka Vucetic
}

\maketitle

\begin{abstract}
In this paper, we investigate the age minimization problem for a two-hop relay system, under a resource constraint on the average number of forwarding operations at the relay. We first design an optimal policy by modelling the considered scheduling problem as a constrained Markov decision process (CMDP) problem. Based on the observed multi-threshold structure of the optimal policy, we then devise a low-complexity double threshold relaying (DTR) policy with only two thresholds, one for relay's AoI and the other one for the age gain between destination and relay. We derive approximate closed-form expressions of the average AoI at the destination, and the average number of forwarding operations at the relay for the DTR policy, by modelling the tangled evolution of age at relay and destination as a Markov chain (MC). Numerical results validate all the theoretical analysis, and show that the low-complexity DTR policy can achieve near optimal performance compared with the optimal CMDP-based policy. Moreover, the relay should always consider the threshold for its local age to maintain a low age at the destination. When the resource constraint is relatively tight, it further needs to consider the threshold on the age gain to ensure that only those packets that can decrease destination's age dramatically will be forwarded.
\end{abstract}

\begin{IEEEkeywords}
Information freshness, age of information, status update, constrained Markov decision process, Markov chain.
\end{IEEEkeywords}


\IEEEpeerreviewmaketitle

\section{Introduction}
The next generation of wireless communication networks will support the connectivity of Internet of Things (IoT) devices, where one critical step is to deliver timely status update information for the underlying physical processes monitored by the IoT devices \cite{IoTbackground}. In the applications such as environment monitoring, vehicle tracking, smart parking, etc., the status update information generated from the end devices needs to be kept as fresh as possible from the perspective of the receiver side. To quantize the freshness and timeliness of the information, a new performance metric, termed the Age of Information (AoI), has been proposed recently \cite{AoIbackground}. The AoI is defined as the time elapsed since the generation of the last successfully received status update. Different from the conventional performance metrics such as delay and throughput, the AoI captures both the latency and the generation time for each status update from the receiver's perspective. Consider a point-to-point status update system as an example, where the source samples a time-varying physical process and transmits the sampled status update to a destination. Due to the error-prone channel that introduces delay, it takes time for the status update to be successfully received by the destination. If the most recently received status update carries the information sampled at time $r\left(t\right)$, the status update age at time $t$ from the perspective of destination is defined as $t- r\left(t\right)$ \cite{AoIbackground}.
\subsection{Background}
The minimization of AoI has been studied extensively in the existing literature for various system setups and assumptions. The seminal works on AoI focused on the queueing theory-based studies for point-to-point systems \cite{AoIbackground, AoI_LCFS, AEphremides_management, AEphremides_MM12_deadline, Updateorwait}. For instance, the authors in \cite{AoIbackground} derived the average AoI of a First-Come-First-Serve (FCFS) queueing system and showed that there exists an optimal generation rate for the status updates to minimize the average AoI. The optimal generation rate is different from those maximize the throughput or minimize the delay. Reference \cite{AoI_LCFS} studied a Last-Come-First-Serve (LCFS) queueing system and concluded that compared with the FCFS, the LCFS queueing system can reduce the average AoI by always transmitting the latest generated status updates. The minimization of AoI for more complicated networks were studied for cognitive-radio (CR)-based networks \cite{AValehi_CR, SLeng_CR, YifanCR, QianCR}, and for networks with multiple sources or destinations \cite{Ryates_multiplesources,R.Talak-Centralized-UnknownCSI-MAC,I.Kadota-Centralized-MAC,Z.Jiang-Decentralized,S.Kaul-Distributed-Centralized-MAC,yifanmac}. The optimal sampling strategy at the source to minimize the average age at the destination was studied in \cite{Rnew}, where the optimization problem was modelled by a constrained Markov decision process. The age optimal policies for energy limited devices were studied in \cite{AoI_EH1,AoI_EH2,AoI_EH3,AoI_EH4,AoI_EH5,AoI_EH6} for energy harvesting nodes, and in \cite{YIFANAOI, AoI_HARQ, BzhouCMDP} where a constraint on the long-term average number of transmissions was considered. All aforementioned works considered on the single-hop networks.

There have been some efforts on studying the AoI minimization of multi-hop networks, where the status update delivery between source and destination is assisted by relay(s) \cite{multihop1,multihop2,multihop3,multihop4,multihop5,simplerelay,bobrelay,R1,R2,R3,R4}. The authors in \cite{multihop1} considered a general multi-hop networks and pointed out that the non-preemptive LCFS policy can minimize the age at all nodes among all non-preemptive and work-conserving policies. Reference \cite{multihop2} derived optimal scheduling policies for a multi-hop networks under a general interference constraint. Reference \cite{multihop3} developed an age-based scheduler that uses the information of both the instantaneous age and the interarrival times of packets in making decisions. Asymptotic analysis was also performed to derive age upper and lower bounds for the considered age-based scheduler. The authors in \cite{multihop4} characterized lower bound for the average AoI of the multi-hop networks. A near-optimal scheduling algorithm was also derived and it was shown that the developed algorithm can achieve the lower bound within an additive gap scaling linearly with the size of the network. References \cite{multihop5,simplerelay,bobrelay} focused on the performance analysis of the average AoI for multi-hop systems. Reference \cite{multihop5} studied the AoI performance of a random decision-based schedule in gossip networks. Three performance analysis methods, including explicit analysis, algorithm-based analysis and asymptotic approximations, were proposed to characterize the average AoI. The authors in \cite{simplerelay} visited a simple relay system where the relay can generate age-sensitive packets. Besides, the relay needs to forward packets generated by another stream with priorities lower or higher than its own packets. Under this scenario, the average AoI of the relay's status update packets was analyzed by resorting to queueing theory. \cite{bobrelay} designed a retransmission scheme for a three node relay system with random arrival of status updates at source. The average AoI of the proposed retransmission scheme was then analyzed. The authors in \cite{R1} studied the age distribution for a two-hop system and minimized the tail of the AoI distribution by tuning the generation frequency of the status updates. M. Moradian and A. Dalani investigated two relay networks, one with direct link and the other one without direct link. The average AoI of the considered systems were then analyzed \cite{R2}. The authors in \cite{R3} derived the average AoI for tandem queues where the first queue follows FCFS while the second queue has a capacity of 1. Very recently, the authors in \cite{R4} studied the distribution of the Peak Age of Information (PAoI) for a system with two tandem $M/M/1$ queues.
\subsection{Motivation and Contributions}
All the aforementioned works on the multi-hop systems \cite{multihop1,multihop2,multihop3,multihop4,multihop5,simplerelay,bobrelay,R1,R2,R3,R4} assumed that there is no resource constraint\footnote{Note that \cite{multihop2,multihop3,multihop4,multihop5} considered a resource constraint from an interference-limited perspective while the resource constraint considered in this paper is from an energy consumption perspective.} in terms of energy consumption on the intermediate nodes. In practical system implementation, it is highly undesirable that the nodes keep transmitting the status updates due to interference or energy constraints \cite{YIFANAOI, AoI_HARQ, BzhouCMDP}. Besides, policies with simple structures and the corresponding closed-form expression of the average AoI are also crucial in practical system design. On the one hand, theoretical insights can be gained by the closed-form analytical results. On the other hand, they can be used to quickly verify whether the required AoI performance can be guaranteed or not with a given set of system parameters. In this context, several natural and fundamental questions arise: \textit{What is the optimal policy for a multi-hop system with a resource constraint? Is the optimal policy having a simple structure, or can we find simple yet near-optimal policies? What is the average AoI performance of the optimal or near-optimal policies in a multi-hop system under a resource constraint?} As an initial effort to answer these important questions, in this paper, we investigate a two-hop status update system consisting of one source, one destination and one relay. We consider a slotted communication scenario, where the relay can choose either receiving operation or forwarding operation at the beginning of each time slot. The goal of this paper is to derive optimal scheduling policies to be implemented at the relay to minimize the average AoI at the destination under a resource constraint for the average number of forwarding operations at the relay. The main contributions of this paper are summarized as follows:
\begin{itemize}
  \item We first derive an optimal scheduling policy by modeling the considered two-hop status update system as a constrained Markov decision process (CMDP) problem. To solve the CMDP problem and obtain the optimal scheduling policy, we apply the Lagrangian relaxation and transform the CMDP problem into an unconstrained Markov decision process (MDP). Structural analysis is then performed for the optimal scheduling policy to unveil that it has a multiple threshold structure.
  \item Based on the structure of the CMDP-based policy, we then devise a low-complexity policy, namely the double threshold relaying (DTR). In DTR, only two thresholds are adopted, one for the relay's age and the other one for the age gain between destination and relay. Different from most of the works on the scheduling of multi-hop systems that only provided simulation results, in order to gain some theoretical insights, we aim to characterize the average AoI performance of the proposed DTR policy in a closed-form. Specifically, we first obtain the stationary distribution of the DTR policy in terms of the instantaneous age of the relay and destination by applying a Markov chain (MC)-based method. Note that the evolution of the states in the MC is not straightforward due to the adoption of two thresholds, making the analysis for the stationary distribution non-trivial.
  \item Based on the derived stationary distribution of the instantaneous age, we characterize closed-form expressions for the average AoI at the destination, and the average number of forwarding operations at the relay for the proposed DTR policy. The analytical expressions can be used to optimize the two thresholds in the DTR policy efficiently. Besides, they reveal some insights on practical system designs, and the achievable average AoI performance of the DTR policy under a given resource constraint. Simulation results verify all the theoretical analysis, and show that the proposed DTR policy can achieve near-optimal performance compared with the optimal CMDP-based policy. It also indicates that only one threshold for the relay's age is needed in the DTR policy when there is no resource constraint or the resource constraint is loose. This is because that the relay should always forward the status updates with relatively low age by considering the threshold on its local AoI. When the resource constraint becomes tight, the relay needs to further decrease its number of transmissions by the threshold of the age gain, such that it only forwards those low-age status updates that can also decrease destination's age dramatically. 
\end{itemize}
It is worth emphasizing that the authors in \cite{EHtwohop} also considered a similar two-hop status update system with a resource constraint where both the source and relay are relying on the harvested energy from the nature. Based on the different arrival patterns of the energy at source and relay, optimal online and offline scheduling policies were designed for the considered system. Our work is fundamentally different from \cite{EHtwohop} on two aspects. First of all, despite the optimal CMDP-based policy, we also propose a low-complexity DTR policy for the considered system which is easy to be implemented and has a structure with only two thresholds. Secondly, we characterize the average AoI of the proposed DTR policy in a closed-form, while no theoretical analysis was provided in \cite{EHtwohop}.

We further point out that the performance analysis of the proposed DTR policy in terms of the stationary distribution, average AoI, and average number of forwarding operations is challenging and different from the existing schemes in the aforementioned literature. Specifically, the evolution of the instantaneous age becomes rather complicated when having two thresholds. The instantaneous age at both relay and destination are tangled together and the conventional performance analysis methods become intractable in our considered case where new performance analysis method is required. To the best knowledge of the authors, this is the first paper to study the age distribution and average AoI performance of a policy with two age thresholds. The proposed performance analysis method and the derived results in this paper remain general and incorporate the existing single threshold policies as special cases. At last, our proposed analytical method is not limited to the considered system model and can potentially be applied in other system models where schemes with two thresholds are considered.
\subsection{Organization}
The rest of the paper is organized as follows. Section II introduces the system model. We summarize the main results, insights, and Theorems provided in this paper in Section III. Sections IV, V and VI provide the proofs for the Theorems presented in Section III. Numerical results are presented in Section VII to validate the theoretical analysis and compare the CMDP-based and DTR policies. Finally, conclusions are drawn in Section VIII.

\textbf{\emph{Notation}}: Throughout this paper, $\mathbb{E}\left[ A  \right]$ is the expectation of $A$ and $\Pr\left\{A \right\}$ is the probability of $A$. $\left\lfloor \cdot \right\rfloor$ is the floor function and $\bmod$ is the modulo operation.

\section{System Model and Problem Formulation}
\subsection{System Description}
We consider a relay-assisted two-hop status update system consisting of one source node $S$, one decode-and-forward relay node $R$, and one destination node $D$. We assume that the direct link between $S$ and $D$ does not exist due to the long distance and high attenuation. For implementation simplicity and low cost, all nodes in the considered status update system are assumed to be single-antenna and half-duplex devices. We study a slotted communication scenario where time is divided into slots with equal durations, and each status update transmission for the $S-R$ and $R-D$ links occupy one time slot. Furthermore, the links $S-R$ and $R-D$ are considered to be error-prone, and the successful transmission probabilities of the $S-R$ and $R-D$ links are denoted by $p$ and $q$, respectively.

In this paper, we assume that $S$ adopts the generate-at-will model such that it can generate a fresh status update at the beginning of each time slot. Furthermore, we assume that $R$ can only store one decoded status update from $S$, and $R$ discards the stale status update when a new status update is decoded correctly from $S$. In order for $R$ to track the instantaneous AoI at $D$, it is assumed that $D$ can provide delay-free and error-free feedback to $R$ to indicate the transmission result for the transmissions over the $R-D$ link. According to the half-duplex model at the relay and with the instantaneous AoIs at $R$ and $D$, $R$ can choose one of the following two actions at the beginning of each time slot: receiving from $S$ or forwarding to $D$. In the receiving action, $R$ polls a transmission from $S$ and receives the status update transmitted by $S$. If the decoding is correct, $R$ stores the status update in its buffer. In the forwarding action, $R$ forwards the stored status update to $D$ by using one time slot.
\subsection{Age of Information}
We use $w \left(t\right) \in \mathcal{W}$ to represent the action taken by $R$ in the time slot $t$, where $\mathcal{W} \triangleq \left\{0,1\right\}$ is the action space. Specifically, $w \left(t\right)=0$ and $w \left(t\right)=1$ represent the receiving action and the forwarding action, respectively. We let $a_S \left(t\right)$, $a_R \left(t\right)$ and $a_D \left(t\right)$ denote the instantaneous AoI of $S$, $R$ and $D$, respectively, at the beginning of each time slot $t$. Furthermore, we use $g\left(t\right) = a_D \left(t\right) - a_R \left(t\right)$ to be the instantaneous age gain in the time slot $t$. Because $S$ adopts the generate-at-will model, we have $a_S\left(t\right) = 0, \forall t$. We then use $I_{SR} \left(t\right)$ and $I_{RD}\left(t\right)$ to be the indicator for the $S-R$ and $R-D$ links at time slot $t$, to indicate whether the transmission will be successful or not. We have $\Pr \left\{I_{SR} \left(t\right) = 1\right\} = p$, $\Pr \left\{I_{SR} \left(t\right) = 0\right\} = 1-p$, $\Pr \left\{I_{RD} \left(t\right) = 1\right\} = q$, and $\Pr \left\{I_{RD} \left(t\right) = 0\right\} = 1-q$, respectively. With the above definitions, the evolution of the instantaneous AoI at $R$, and the instantaneous age gain can be expressed as
\begin{equation}\label{AoIR}
a_R\left(t+1\right) =
\left\{ \begin{matrix}
\begin{split}
   &{a_S\left(t\right)+1, \quad w\left(t\right)=0 \cap I_{SR}\left(t\right) = 1},\\
   &{a_R\left(t\right)+1, \quad \text{Otherwise}},\\
\end{split}
\end{matrix}
\right.
\end{equation}
\begin{equation}\label{AoID}
\begin{split}
&g\left(t+1\right) =\left\{ \begin{matrix}
\begin{split}
   &{0, \quad w\left(t\right)=1 \cap I_{RD}\left(t\right) = 1},\\
   &{a_R\left(t\right) + g\left(t\right), \quad w\left(t\right)=0 \cap I_{SR}\left(t\right) = 1},\\
   &{g\left(t\right), \quad \text{Otherwise}}.
\end{split}
\end{matrix}
\right.
\end{split}
\end{equation}
\subsection{Optimization Problem Formulation}
We realize that successive transmission, i.e., keep choosing forwarding action at the relay\footnote{In this paper, we only consider the resource constraint at $R$ for a specific relay link and focus on the design and analysis of relaying policies. Note that the status updates generated from $S$ could target at multiple destinations and the resource constraint at $S$ is thus not considered for the one link model.}, is typically undesirable in practical systems due to energy or interference constraints. We thus follow \cite{YIFANAOI, AoI_HARQ, BzhouCMDP} and introduce a resource constraint on the average number of forwarding actions. We require that the long-term average number of forwarding actions cannot exceed $\eta_C$. In other words, the long-term probability of choosing forwarding actions at the relay cannot exceed $\eta_C$ and $0 < \eta_C \le 1$. We now give some important definitions before mathematically describe the optimization problem. We define $v\triangleq \left(k,d\right) \in \mathcal{V}$ as the system state of the considered status update system, where $k$ is the instantaneous AoI at $R$, and $d$ is the instantaneous age gain. The state space $\mathcal{V}$ is given in the following Lemma.
\begin{lemma}\label{lemmastatespace}
The state space $\mathcal{V}$ is given by
\begin{equation}\label{statespace}
\mathcal{V} = \left\{\left(k,d\right): \left(k \ge 2, d = 0\right)  \cup \left( k \ge 1, d \ge 2\right), k,d \in \mathbb{N} \right\}.
\end{equation}
\end{lemma}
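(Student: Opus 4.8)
The plan is to read the pair $(k,d)$ as $(a_R(t),g(t))$ and to characterize the set of states reachable under the dynamics \eqref{AoIR}--\eqref{AoID}. I would split the argument into two parts: an \emph{invariance} part, showing that every reachable state lies in the set claimed in \eqref{statespace}, and a \emph{reachability} part, showing that every state in that set is actually attained. As a preliminary I would record two elementary facts that hold in every slot. First, $a_R(t)\ge 1$: by \eqref{AoIR} a successful reception sets $a_R=a_S+1=1$ (recall $a_S\equiv 0$), while every other branch increments $a_R$, so the AoI at $R$ can never drop below $1$. Second, $g(t)\ge 0$: by \eqref{AoID} the gain is either reset to $0$, set to $a_R+g\ge 0$, or left unchanged, so nonnegativity is preserved.

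The core of the proof is an induction establishing the joint invariant that in every slot the state satisfies $d\neq 1$ together with the implication $d=0\Rightarrow k\ge 2$. For the inductive step I would enumerate the branches of \eqref{AoIR}--\eqref{AoID} from a state $(k,d)$ obeying the invariant: a failed slot (failed reception, or forwarding with $I_{RD}=0$) yields $(k+1,d)$; a successful forward yields $(k+1,0)$; and a successful reception yields $(1,\,k+d)$. The forwarding branch is immediate, since $k\ge 1$ gives the new relay age $k+1\ge 2$ while the new gain is $0$. The decisive branch is the successful reception, where the new gain equals $k+d$: here I would invoke the inductive hypothesis, so that $d=0$ forces $k\ge 2$ and hence $k+d=k\ge 2$, whereas $d\ge 2$ gives $k+d\ge 3$; in either case the new gain is at least $2$ and is never equal to $1$, while the new relay age is $1$.

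I expect the main obstacle to be exactly this coupling: the two conditions ``$d\neq 1$'' and ``$d=0\Rightarrow k\ge 2$'' cannot be established in isolation and must be carried as a single invariant. Ruling out $d=1$ hinges on the state $(1,0)$ never occurring, because only from $(1,0)$ could a successful reception produce a gain of $k+d=1$; and $(1,0)$ is excluded precisely by the other half of the invariant, $d=0\Rightarrow k\ge 2$, which is in turn sustained by the successful-forward branch always producing $k+1\ge 2$. Recognizing that these facts are interlocked, and therefore must be proved simultaneously by induction rather than sequentially, is the key step.

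Finally, for reachability I would exhibit explicit trajectories threading through all claimed states. From any $(k,0)$ with $k\ge 2$, a successful reception reaches $(1,k)$, which realizes every $(1,d)$ with $d\ge 2$; from $(1,d)$ a run of failed slots increments the relay age while leaving the gain fixed, realizing every $(k,d)$ with $k\ge 1,\,d\ge 2$; and from such a state a successful forward reaches $(k+1,0)$, realizing every $(k,0)$ with $k\ge 2$. Seeding this cycle at the initial state then shows that the reachable set is exactly \eqref{statespace}.
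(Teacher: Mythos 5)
Your proposal is correct and follows essentially the same route as the paper's Appendix A: both enumerate the three transition branches (successful forward to $(k+1,0)$, successful reception to $(1,k+d)$, failure to $(k+1,d)$), use the coupling ``$d=0\Rightarrow k\ge 2$'' to rule out a gain of $1$ after a successful reception, and use the failure branch to fill out all $k\ge 1$ for each $d\ge 2$. Your version merely makes explicit (as a joint inductive invariant plus a reachability cycle) what the paper argues informally through its Cases 1--3.
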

\begin{proof}
{See Appendix A.}
\end{proof}
From Lemma \ref{lemmastatespace}, we can observe that the age gain of the considered system cannot be $1$ in all states. Besides, when the relay and destination are synchronized, i.e., the age gain is $0$, the instantaneous age at the relay is no less than $2$. Throughout the analysis provided in this paper, we only consider the states within the state space $\mathcal{V}$ because the states out of $\mathcal{V}$ are not reachable. With the defined system states and actions, we then give the definition of the considered stationary policies.
\begin{definition}(Stationary Policy)\label{def1}
A stationary policy $\theta$ is defined as a mapping from the system state $\left(k,d\right) \in \mathcal{V}$ to the action $w \in \mathcal{W}$, where $\theta\left(k,d\right) = w$. The action is only dependent on the current states, and independent to the past states and actions, and time invariant.
\end{definition}
For a given stationary policy $\theta$, from (\ref{AoIR}) and (\ref{AoID}), the average AoI of the considered two-hop status update system, and the average number of forwarding actions are given by
\begin{equation}\label{averageAoIsystem}
\overline \Delta \left({\theta}\right)   = \mathop {\lim \sup }\limits_{T \to \infty } {1 \over T}\mathbb{E}\left[ {\sum\limits_{t = 1}^T {\left[ {{a_R}\left( t \right) + g\left( t \right)} \right]} } \right],
\end{equation}
\begin{equation}\label{longtermenergy}
\eta\left(\theta\right) = \mathop {\lim \sup }\limits_{T \to \infty } {1 \over T}\mathbb{E}\left[ {\sum\limits_{t = 1}^T {w\left( t \right)} }\right],
\end{equation}
where the expectation operation is introduced by the dynamic of system states. The age minimization problem can now be described by
\begin{equation}\label{optimizationP}
	\begin{aligned}
	& \text{Minimize} \quad \overline \Delta \left({\theta}\right),\\
	& {\text{Subject to}}  \quad \eta\left(\theta\right) \le \eta_c.
	\end{aligned}
\end{equation}

\section{Main Results}
This section presents the main results of this paper, including the design of two different policies for the considered status update system, denoted by $\theta_{\rm{CMDP}}$ and $\theta_{\rm{DTR}}$. Specifically, $\theta_{\rm{CMDP}}$ is the optimal stationary policy for the formulated problem (\ref{optimizationP}), and it is derived by adopting a CMDP method. We show that $\theta_{\rm{CMDP}}$ has a multiple threshold structure and it is intractable to characterize the average AoI performance in a closed-form. In order to reduce the implementation complexity and leverage the derived structural results of the CMDP-based policy, we then propose a low-complexity double threshold relaying (DTR) policy $\theta_{\rm{DTR}}$ based on the observed structure of $\theta_{\rm{CMDP}}$. The policy $\theta_{\rm{DTR}}$ can be implemented easily with two individual thresholds and has a near-optimal performance compared with $\theta_{\rm{CMDP}}$. Closed-form expressions for the age distribution, average AoI and average number of forwarding actions of $\theta_{\rm{DTR}}$ policy are also derived. The main results for $\theta_{\rm{CMDP}}$ (Theorems \ref{theoremCMDP1}-\ref{theoremCMDP2}) and $\theta_{\rm{DTR}}$ (Theorems \ref{TheoremDistribution}-\ref{TheoremTP}) are summarized in the following subsections. The proofs of Theorems \ref{theoremCMDP1}-\ref{theoremCMDP2} will be given in Section IV, the proof of Theorem \ref{TheoremDistribution} will be given in Section V, and the proofs of Theorems \ref{TheoremAoI}-\ref{TheoremTP} will be presented in Section VI. Insightful results from our analytical results are also discussed in this section.
\subsection{Optimal Stationary Policy $\theta_{\rm{CMDP}}$}
\begin{theorem}\label{theoremCMDP1}
The optimal stationary policy for problem (\ref{optimizationP}) $\theta_{\rm{CMDP}}$ randomizes in at most one state. Specifically, the optimal policy can be expressed as a randomized mixture of two stationary deterministic policies $\theta_{\lambda_1^*}$ and $\theta_{\lambda_2^*}$, mathematically,
\begin{equation}\label{theorem1eq}
\theta_{\rm{CMDP}} = \alpha \theta_{\lambda_1^*} + \left(1 - \alpha\right) \theta_{\lambda_2^*},
\end{equation}
where $\alpha \in \left[0,1\right]$ is the randomization parameter.
\end{theorem}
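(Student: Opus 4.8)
The plan is to cast problem~(\ref{optimizationP}) as a constrained Markov decision process with countable state space $\mathcal{V}$, binary action space $\mathcal{W}$, an average-cost objective given by the per-slot age cost $c\bigl((k,d)\bigr)=k+d$, and a single average-cost constraint given by the per-slot forwarding cost $e\bigl((k,d),w\bigr)=w$. The standard machinery for such problems (e.g.\ Altman's theory of constrained MDPs, or the Lagrangian/linear-programming approach used in \cite{YIFANAOI, AoI_HARQ, BzhouCMDP}) then delivers the stated structural conclusion, so the work is in verifying that the hypotheses of that machinery hold and then invoking the appropriate existence-and-structure theorem. Concretely, I would first establish that the induced Markov chain under any stationary policy is a well-behaved (unichain, positive recurrent) chain with a single recurrent class, so that both $\overline\Delta(\theta)$ and $\eta(\theta)$ are well-defined and independent of the initial state. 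This is where the floor of $a_R\ge 2$ when $d=0$ and the excluded age-gain value $d=1$ from Lemma~\ref{lemmastatespace} matter: they guarantee the chain lives exactly on $\mathcal{V}$.

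Next I would introduce the Lagrangian relaxation, forming for each multiplier $\lambda\ge 0$ the unconstrained average-cost MDP with per-slot cost $c\bigl((k,d)\bigr)+\lambda\,w$. For each fixed $\lambda$ this is a standard unconstrained average-cost MDP, and I would argue (via the average-cost optimality equation / relative value iteration, after a truncation or a suitable Lyapunov drift argument to handle the countable state space) that it admits a stationary \emph{deterministic} optimal policy $\theta_\lambda$. I would then study the optimal unconstrained cost and the constraint value $\eta(\theta_\lambda)$ as functions of $\lambda$: the optimal Lagrangian value is concave in $\lambda$, and $\eta(\theta_\lambda)$ is monotone non-increasing in $\lambda$ (larger penalty on forwarding induces fewer forwarding actions). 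The key quantitative step is to exhibit a critical multiplier $\lambda^\ast$ and two adjacent deterministic policies $\theta_{\lambda_1^\ast}$ and $\theta_{\lambda_2^\ast}$ that are both optimal for the unconstrained problem at $\lambda=\lambda^\ast$ and that \emph{straddle} the constraint, i.e.\ $\eta(\theta_{\lambda_1^\ast})\le \eta_C\le\eta(\theta_{\lambda_2^\ast})$. These two policies should differ in their action on a single state, which is precisely the state in which the CMDP-optimal policy randomizes.

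The final step is the randomization-and-optimality argument. Here I would appeal to the general CMDP result (Beutler--Ross / Altman) that for an average-cost MDP with a single constraint there exists an optimal stationary policy that is deterministic in every state except at most one, and that this optimal policy can be realized as a time-sharing / randomized mixture $\alpha\theta_{\lambda_1^\ast}+(1-\alpha)\theta_{\lambda_2^\ast}$ of the two straddling deterministic policies, with $\alpha\in[0,1]$ chosen so that the mixed forwarding rate $\alpha\,\eta(\theta_{\lambda_1^\ast})+(1-\alpha)\,\eta(\theta_{\lambda_2^\ast})=\eta_C$ meets the constraint with equality; linearity of the long-run average cost and constraint under such mixing, together with complementary slackness for the Lagrangian, then gives optimality of the mixture for~(\ref{optimizationP}). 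I expect the main obstacle to be the countable (unbounded) state space: the clean existence-of-deterministic-optimal-policy and strong-duality statements are cleanest for finite MDPs, so I would need either to truncate $\mathcal{V}$ at a large age and pass to the limit, or to supply a uniform Lyapunov/geometric-ergodicity bound showing the relevant quantities are finite and the truncation limits are well-behaved. Establishing that $\eta(\theta_\lambda)$ genuinely sweeps through $\eta_C$ as $\lambda$ varies (so that the straddling pair exists and the single randomizing state is isolated) is the delicate part; the multiple-threshold structure asserted later provides the intuition that the action changes monotonically in the state, which is what makes the ``randomize in at most one state'' conclusion plausible and the mixture two-policy rather than many-policy.
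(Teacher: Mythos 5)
Your proposal is correct and follows essentially the same route as the paper: cast the problem as an average-cost CMDP, apply the Lagrangian relaxation with multiplier $\lambda$, and invoke the standard constrained-MDP structural result that the optimal policy is a randomized mixture of two deterministic Lagrangian-optimal policies differing in at most one state, with $\alpha$ chosen to meet the forwarding constraint with equality. The only difference is bookkeeping: where you appeal to Beutler--Ross/Altman and worry about truncation or Lyapunov arguments for the countable state space, the paper directly cites Sennott's Theorem 2.5 (a constrained-MDP result formulated precisely for countable-state average-cost problems), which subsumes the hypotheses you propose to verify, and it determines $\lambda_1^*,\lambda_2^*$ numerically via the Robbins--Monro algorithm rather than through an explicit straddling argument.
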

Theorem \ref{theoremCMDP1} shows that the optimal policy may not be deterministic, which may involve the mixture of at most two deterministic policies. Moreover, when $\alpha$ equals to $0$ or $1$, the optimal policy is deterministic. Note that the detailed distribution of the randomized mixture in terms of $\alpha$, and the solutions to $\theta_{\lambda_1^*}$ and $\theta_{\lambda_2^*}$ are discussed in the proof given in Section IV.
\begin{theorem}\label{theoremCMDP2}
The structure of each stationary deterministic policy $\theta_{\lambda^*}$, ${\lambda^*} \in \left\{\lambda_1^*, \lambda_2^*\right\}$ which forms $\theta_{\rm{CMDP}}$, has a switching-type given by
 	\begin{itemize}
 		\item If $\theta_{\lambda^*}\left(k,d\right)=1$, then $\theta_{\lambda^*}(k,d+z)=1$, $\forall z \in \mathbb{N}$,
 		\item If $\theta_{\lambda^*}(k,d)=0$, then $\theta_{\lambda^*}(k+z,d)=0$, $\forall z \in \mathbb{N} $.
 	\end{itemize}
\end{theorem}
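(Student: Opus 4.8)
The plan is to carry out the proof inside the unconstrained average-cost MDP produced by the Lagrangian relaxation in the proof of Theorem~\ref{theoremCMDP1}: for a fixed multiplier $\lambda^*\in\{\lambda_1^*,\lambda_2^*\}$ the per-slot cost is $c\left((k,d),w\right)=(k+d)+\lambda^* w$, and, reading off (\ref{AoIR})--(\ref{AoID}), the receive action $w=0$ sends $(k,d)$ to $(1,k+d)$ with probability $p$ and to $(k+1,d)$ with probability $1-p$, while the forward action $w=1$ sends $(k,d)$ to $(k+1,0)$ with probability $q$ and to $(k+1,d)$ with probability $1-q$. Writing $V$ for a relative value function and $\rho$ for the optimal average cost, the Bellman equation is $V(k,d)+\rho=\min\{Q_0(k,d),Q_1(k,d)\}$ with
\begin{equation}
Q_0(k,d)=(k+d)+pV(1,k+d)+(1-p)V(k+1,d),
\end{equation}
\begin{equation}
Q_1(k,d)=(k+d)+\lambda^*+qV(k+1,0)+(1-q)V(k+1,d).
\end{equation}
Since $\theta_{\lambda^*}$ is an optimal deterministic policy of this MDP, the two bullets are equivalent to monotonicity of the advantage $\Delta Q(k,d)\triangleq Q_1(k,d)-Q_0(k,d)=\lambda^*+qV(k+1,0)+(p-q)V(k+1,d)-pV(1,k+d)$: the first bullet holds whenever $\Delta Q$ is nonincreasing in $d$ for fixed $k$ (forwarding, i.e.\ $\Delta Q\le 0$, then persists as $d$ grows), and the second holds whenever $\Delta Q$ is nondecreasing in $k$ for fixed $d$ (receiving, i.e.\ $\Delta Q\ge 0$, then persists as $k$ grows), with ties broken consistently.

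The engine of the proof is a set of structural properties of $V$, which I would establish by induction along the value-iteration recursion $V_{n+1}=\mathcal{T}V_n$, $V_0\equiv 0$, where $\mathcal{T}$ denotes the Bellman operator above. To make this rigorous I would first introduce a discounted MDP with factor $\beta<1$, prove the properties for the discounted value function, and then transfer them to $V$ by the standard vanishing-discount argument (the per-slot cost is nonnegative, unbounded, and the chain is communicating, so the usual existence conditions for $(V,\rho)$ hold). The properties I expect to need are: (i) $V$ is nondecreasing in $k$ and in $d$; (ii) $V$ is convex in $d$ for each fixed $k$; (iii) $V$ is submodular in $(k,d)$, i.e.\ its one-step increment in either coordinate is nonincreasing in the other; and (iv) a diagonal-dominance condition under which a one-step increment of $V$ in $k$ dominates the corresponding increment in $d$. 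At each induction step these must be shown to be preserved both by the two transition kernels and by the pointwise minimum in $\mathcal{T}$, the latter being routine once the former is secured.

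Granting (i)--(iv), the two bullets follow from taking discrete differences of $\Delta Q$. For the first bullet, $\Delta Q(k,d+1)-\Delta Q(k,d)=(p-q)\left[V(k+1,d+1)-V(k+1,d)\right]-p\left[V(1,k+d+1)-V(1,k+d)\right]$; when $p\le q$ both terms are nonpositive by (i), and when $p>q$ the combination of convexity (ii) and submodularity (iii) yields $V(k+1,d+1)-V(k+1,d)\le V(1,k+d+1)-V(1,k+d)$ (using $k+d\ge d$ and $k+1\ge 1$), so the difference is $\le 0$ and forwarding persists as $d$ grows. For the second bullet, $\Delta Q(k+1,d)-\Delta Q(k,d)=q\left[V(k+2,0)-V(k+1,0)\right]+(p-q)\left[V(k+2,d)-V(k+1,d)\right]-p\left[V(1,k+d+1)-V(1,k+d)\right]$, and a case split on the sign of $p-q$ together with (iii) and (iv) makes this $\ge 0$, so receiving persists as $k$ grows.

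The main obstacle is exactly the establishment of (ii)--(iv) and their invariance under $\mathcal{T}$. The receive transition moves $(k,d)$ to the state $(1,k+d)$, whose age-gain coordinate is the \emph{sum} $k+d$; this kernel therefore entangles the two state coordinates, so a naive argument treating convexity and submodularity coordinate-by-coordinate does not close. The crux is to identify precisely the combination of monotonicity, convexity, submodularity and diagonal dominance that is simultaneously invariant under the operator, and to verify its propagation through this coupled kernel by a careful case analysis on the sign of $p-q$. Once this invariant is pinned down, the difference computations in the preceding paragraph are mechanical.
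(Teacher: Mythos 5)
Your reduction of the two bullets to monotonicity of the advantage function $\Delta Q(k,d)=\lambda^*+qV(k+1,0)+(p-q)V(k+1,d)-pV(1,k+d)$ is set up correctly, and the difference computations you sketch would indeed close \emph{if} the structural properties (i)--(iv) of the relative value function were in hand. But that is exactly where the proposal stops: you yourself flag that establishing convexity, submodularity and the diagonal-dominance property, and propagating them through the Bellman operator across the receive kernel $(k,d)\mapsto(1,k+d)$, is ``the main obstacle,'' and you leave it open. This is not a routine missing detail --- it is the entire content of the theorem. Your own second-bullet computation makes the danger concrete: after applying submodularity it requires an inequality of the form $V(k+2,0)-V(k+1,0)\ge V(1,k+d+1)-V(1,k+d)$, i.e.\ a comparison of increments at unrelated base points, and whether such a cross-coordinate invariant is preserved under value iteration (and survives the vanishing-discount limit with unbounded per-slot cost) is precisely the kind of claim that can fail, or require a quite different invariant, for MDPs with this sort of entangled transition. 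As written, the proposal is a plan whose hardest step is unexecuted.

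The paper's proof avoids the value-function induction altogether. Instead of proving properties of $V$, it invokes the monotone-optimal-policy theorem \cite[Th.~8.11.3]{puterman2014markov}, which reduces the claim to four conditions on the \emph{primitives} of the MDP: (a) monotonicity and (b) subadditivity of the one-step Lagrangian cost $L(v,w,\lambda)=k+d+\lambda w$, together with (c) monotonicity and (d) subadditivity of the tail transition function $\Phi(u|v,w)=\sum_{v'\ge u}\Pr(v'|v,w)$, with states ordered by the age gain $d$ for the first bullet (and the analogous argument in $k$ for the second). These conditions are checked directly from the transition probabilities in (\ref{e3}), using the explicit piecewise form of $\Phi$ for $w=0$ and $w=1$; no convexity, submodularity or diagonal dominance of $V$ is ever needed, because the theorem internalizes that induction. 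If you want to salvage your route, you must actually pin down and propagate your invariant (i)--(iv); alternatively, note that conditions (c)--(d) above are the already-packaged form of exactly the kernel structure you are trying to extract by hand.
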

We can observe from Theorem \ref{theoremCMDP2} that the policy $\theta_{\lambda^*}$ may have a multiple threshold structure. More importantly, we realize that $\theta_{\lambda^*}$ tends to select forwarding action under two conditions: (1) $R$'s instantaneous AoI, i.e., $k$, is relatively low; (2) The age gain, i.e., $d$, is high is relatively high. The first condition is understandable because $R$ should only forward fresh status updates to $D$ and should receive new status updates from $S$ when the stored status update becomes stale. The rationale behind the second condition is that $R$ should only forward those status updates that can decrease $D$'s instantaneous AoI dramatically. This becomes especially important when a resource constraint is considered at $R$.

Furthermore, it is worth pointing out that when the resource constraint $\eta_C$ is loose, the considered problem can be solved by MDP. Specifically, let $\theta_{MDP}$ denote the optimal MDP policy without considering the resource constraint and $\eta\left(\theta_{MDP}\right)$ be the corresponding average number of forwarding actions. The optimal policy for problem (\ref{optimizationP}) will be $\theta_{MDP}$ when the required resource constraint $\eta\left(\theta_{MDP}\right) \le \eta_C \le 1$. This is understandable because the optimal policy without considering the resource constraint has already satisfied $\eta_C$, and $\eta\left(\theta_{MDP}\right)$ can be evaluated numerically when deriving $\theta_{MDP}$. On the other hand, if the required resource constraint $0< \eta_C < \eta\left(\theta_{MDP}\right)$, the problem (\ref{optimizationP}) needs to be solved by the CMDP model. Because the analysis of the $\theta_{MDP}$ is very similar to the $\theta_{CMDP}$ and also follows the structural results given in Theorem \ref{theoremCMDP2}. For the purpose of brevity, we will not provide the detailed analysis for the $\theta_{MDP}$ policy.
\subsection{Low-Complexity Stationary Deterministic Policy $\theta_{\rm{DTR}}$}
Based on the observation from Theorem \ref{theoremCMDP2}, in order to simplify the policy structure, we are motivated to implement two thresholds, $\delta_1$ for $R$'s instantaneous AoI, and $\delta_2$ for the instantaneous age gain, to efficiently decide the operation mode for the status update system. In the proposed $\theta_{\rm{DTR}}$ policy, $R$ chooses the forwarding action only when its instantaneous AoI is no greater than $\delta_1$, and the age gain is no less than $\delta_2$. Otherwise, $R$ selects the receiving action. The DTR policy can be described as
\begin{equation}\label{DTR}
\begin{split}
&\theta_{\rm{DTR}}\left(k,d\right) =\left\{ \begin{matrix}
\begin{split}
   &{1, \quad k \le \delta_1 \cap d \ge \delta_2},\\
   &{0, \quad \text{Otherwise}}.
\end{split}
\end{matrix}
\right.
\end{split}
\end{equation}
Compared with the $\theta_{\rm{CMDP}}$ policy, the proposed $\theta_{\rm{DTR}}$ policy can be implemented easily by using two thresholds $\delta_1$ and $\delta_2$. Two natural question arises: \textit{What is the theoretical performance in terms of average AoI for the $\theta_{\rm{DTR}}$ policy? How can we tune $\delta_1$ and $\delta_2$ to minimize the average AoI at $D$ and satisfy the resource constraint $\eta_c$?} In order to answer these questions, we first present Theorem \ref{TheoremDistribution} for the joint distribution of $k$ and $d$ for the $\theta_{\rm{DTR}}$ policy.
\begin{theorem}\label{TheoremDistribution}
Let $\pi_{k,d}$ denote the stationary distribution of state $v\triangleq \left(k,d\right)$, $v \in \mathcal{V}$. The stationary distribution for the $\theta_{\rm{DTR}}$ policy can be summarized in (\ref{stationaryMC1}) for the case that $\delta_1 \ge \delta_2-1$, and in (\ref{stationaryMC2}) for the case that $\delta_1 \le \delta_2-1$.
\begin{subequations}\label{stationaryMC1}
\begin{align}
&\pi_{k,d} =
\left\{{
\begin{matrix}
\begin{split}
   &{{{{{\left( {1 - p} \right)}^{k - 1}} - {{\left( {1 - q} \right)}^{k - 1}}} \over {q - p}}x, 2 \le k\le\delta_1+1, d=0,} \\
   &{{{\left[ {{{\left( {1 - p} \right)}^{{\delta _1}}} - {{\left( {1 - q} \right)}^{{\delta _1}}}} \right]{{\left( {1 - p} \right)}^{k - {\delta _1} - 1}}} \over {q - p}}x, k\ge \delta_1+1,d=0,} \\
\end{split}
\end{matrix}
}\right. \\
&\pi_{k,d} =
   {{{p{{\left( {1 - p} \right)}^{k - 1}}\left[ {1 - {{\left( {1 - q} \right)}^{d - 1}}} \right]} \over q}x, \forall k, 2\le d \le \delta_2,}  \\
&\pi_{k,d} \approx
\left\{{
\begin{matrix}
\begin{split}
   &{\left( {1 - q} \right)^{k-1}{\pi_{1,\delta_2}}\sum\limits_{l = 0}^n {{{\left( {N+l-1} \right)!} \over {l!\left( N-1 \right)!}}} {\left[ {p{{\left( {1 - q} \right)}^{{\delta _1}}}} \right]^l}{\left( {1 - p} \right)^{N-1}},1 \le k \le \delta_1+1, d \ge \delta_2, }\\
   &{\left( {1 - q} \right)^{\delta_1}\left(1-p\right)^{k-\delta_1-1}{\pi_{1,\delta_2}}\sum\limits_{l = 0}^n {{{\left( {N+l-1} \right)!} \over {l!\left( N-1 \right)!}}} {\left[ {p{{\left( {1 - q} \right)}^{{\delta _1}}}} \right]^l}{\left( {1 - p} \right)^{N-1}}, k \ge \delta_1+1, d \ge \delta_2,}
\end{split}
\end{matrix}
}\right.
\end{align}
\end{subequations}
\begin{subequations}\label{stationaryMC2}
\begin{align}
&\pi_{k,d} =
\left\{{
\begin{matrix}
\begin{split}
   &{{{{{\left( {1 - p} \right)}^{k - 1}} - {{\left( {1 - q} \right)}^{k - 1}}} \over {q - p}}x, 2 \le k\le\delta_1+1, d=0,} \\
   &{{{\left[ {{{\left( {1 - p} \right)}^{{\delta _1}}} - {{\left( {1 - q} \right)}^{{\delta _1}}}} \right]{{\left( {1 - p} \right)}^{k - {\delta _1} - 1}}} \over {q - p}}x, k\ge \delta_1+1,d=0,} \\
\end{split}
\end{matrix}
}\right. \\
&\pi_{k,d} =
\left\{{
\begin{matrix}
\begin{split}
   &{{{p{{\left( {1 - p} \right)}^{k - 1}}\left[ {1 - {{\left( {1 - q} \right)}^{d - 1}}} \right]} \over q}x, \forall k,2 \le d \le \delta_1+1,}  \\
   &{{{p{{\left( {1 - p} \right)}^{k - 1}}\left[ {1 - {{\left( {1 - q} \right)}^{\delta_1}}} \right]} \over q}x, \forall k,\delta_1+1 \le d \le \delta_2,}\\
\end{split}
\end{matrix}
}\right.  \\
&\pi_{k,d} =
\left\{{
\begin{matrix}
\begin{split}
   &{\left( {1 - q} \right)^{k-1}{\pi_{1,\delta_2}}\sum\limits_{l = 0}^n {{{\left( {N+l-1} \right)!} \over {l!\left( N-1 \right)!}}} {\left[ {p{{\left( {1 - q} \right)}^{{\delta _1}}}} \right]^l}{\left( {1 - p} \right)^{N-1}},1 \le k \le \delta_1+1, d \ge \delta_2, }\\
   &{\left( {1 - q} \right)^{\delta_1}\left(1-p\right)^{k-\delta_1-1}{\pi_{1,\delta_2}}\sum\limits_{l = 0}^n {{{\left( {N+l-1} \right)!} \over {l!\left( N-1 \right)!}}} {\left[ {p{{\left( {1 - q} \right)}^{{\delta _1}}}} \right]^l}{\left( {1 - p} \right)^{N-1}}, k \ge \delta_1+1, d \ge \delta_2.}
\end{split}
\end{matrix}
}\right.
\end{align}
\end{subequations}
The term $x$ in (\ref{stationaryMC1}) and (\ref{stationaryMC2}) is different for the two cases $\delta_1 \ge \delta_2-1$ and $\delta_1 \le \delta_2-1$ given by
\begin{equation}\label{x}
x =\left\{{
\begin{matrix}
\begin{split}
   &{{p{q^2}} \over {q\left( {1 - p} \right) + pq{\delta _2} + p{{\left( {1 - q} \right)}^{{\delta _2} - 1}}\left[ {1 - {{\left( {1 - q} \right)}^{{\delta _1} - {\delta _2} + 1}}} \right]}}, \delta_1 \ge \delta_2-1, \\
   &{{pq} \over {\left( {1 - p} \right) + p{\delta _2} - p\left( {{\delta _2} - {\delta _1} - 1} \right){{\left( {1 - q} \right)}^{{\delta _1}}}}},\delta_1 \le \delta_2-1. \\
\end{split}
\end{matrix}
}\right.
\end{equation}
The term $\pi_{1,\delta_2}$ in (\ref{stationaryMC1}) and (\ref{stationaryMC2}) is given by
\begin{equation}\label{initialterm}
\pi_{1,\delta_2} = \left\{{
\begin{matrix}
\begin{split}
   &{{{p\left[ {1 - {{\left( {1 - q} \right)}^{\delta_2 - 1}}} \right]} \over q}x, \quad \delta_1 \ge \delta_2-1,} \\
   &{{{p\left[ {1 - {{\left( {1 - q} \right)}^{\delta_1}}} \right]} \over q}x, \quad \delta_1 \le \delta_2-1.}  \\
\end{split}
\end{matrix}
}\right.
\end{equation}
The term $N = {\left( {n - l} \right)\left( {{\delta _1} + 1} \right) + m}+1$ with
\begin{equation}\label{n}
n = \left\lfloor {{{d - {\delta _2}} \over {{\delta _1} + 1}}} \right\rfloor ,
\end{equation}
\begin{equation}\label{m}
m = d - {\delta _2}\bmod {\delta _1} + 1.
\end{equation}
\end{theorem}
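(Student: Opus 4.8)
The plan is to regard the DTR policy as inducing a discrete-time Markov chain on the reachable state space $\mathcal{V}$ of (\ref{statespace}) and to solve its global balance equations region by region. First I would read off the one-step transitions from (\ref{AoIR})--(\ref{AoID}) together with the decision rule (\ref{DTR}). In a receiving state (where $k>\delta_1$ or $d<\delta_2$) the chain moves from $(k,d)$ to $(1,k+d)$ with probability $p$ and to $(k+1,d)$ with probability $1-p$; in a forwarding state (where $k\le\delta_1$ and $d\ge\delta_2$) it moves to $(k+1,0)$ with probability $q$ and to $(k+1,d)$ with probability $1-q$. The crucial structural observation is that the chain is skip-free in $k$: within a fixed column $d$ the index $k$ can only increment by one or reset to $1$, while $d$ can only stay put, reset to $0$, or jump up to $k+d$. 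This layered structure is what makes a closed-form solution possible.

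Next I would partition $\mathcal{V}$ into the synchronized column $d=0$, the low-gain transient columns with $2\le d<\delta_2$ (which are entirely receiving), and the high-gain region $d\ge\delta_2$ (forwarding for $k\le\delta_1$, receiving for $k>\delta_1$), and split the analysis into the two orderings $\delta_1\ge\delta_2-1$ and $\delta_1\le\delta_2-1$ exactly as in (\ref{stationaryMC1}) and (\ref{stationaryMC2}); the ordering determines whether a transient column is cut off by $\delta_1$ before or after it reaches the forwarding threshold. Writing the balance equation for a generic state in each region, the horizontal (fixed-$d$) recursions are first-order and homogeneous: in a receiving row the inflow is $(1-p)\pi_{k-1,d}$, giving a factor $(1-p)^{k-1}$, and in a forwarding row the inflow is $(1-q)\pi_{k-1,d}$, giving a factor $(1-q)^{k-1}$. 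A column that starts in the forwarding region and crosses $\delta_1$ therefore carries the composite factor $(1-q)^{\delta_1}(1-p)^{k-\delta_1-1}$ seen in the last lines of (\ref{stationaryMC1})--(\ref{stationaryMC2}).

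The $d=0$ column is the one inhomogeneous horizontal recursion: besides the receiving-failure term $(1-p)\pi_{k-1,0}$ it is fed by the forwarding successes $q\sum_{d\ge\delta_2}\pi_{k-1,d}$, whose aggregate already scales geometrically in $k$. Solving this two-rate linear recursion produces precisely the $\bigl[(1-p)^{k-1}-(1-q)^{k-1}\bigr]/(q-p)$ shape, the factor $1/(q-p)$ being the standard signature of a difference equation with two distinct geometric roots. I would then close the system through the vertical coupling at $k=1$: every receiving success lands at some $(1,k'+d')$, so the boundary values obey the convolution $\pi_{1,d}=p\sum_{k'+d'=d}\pi_{k',d'}$ taken over receiving states only. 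Iterating this convolution upward in $d$ in blocks of width $\delta_1+1$ is what generates the binomial-coefficient sum $\sum_{l}\binom{N+l-1}{l}\bigl[p(1-q)^{\delta_1}\bigr]^{l}(1-p)^{N-1}$, with the Euclidean decomposition $N=(n-l)(\delta_1+1)+m+1$ governed by $n$ and $m$ in (\ref{n})--(\ref{m}). Finally, the normalization $\sum_{(k,d)\in\mathcal{V}}\pi_{k,d}=1$ fixes the constant $x$ in (\ref{x}), and evaluating the $d\ge\delta_2$ expression at $(1,\delta_2)$ returns (\ref{initialterm}).

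I expect the vertical convolution at $k=1$ to be the main obstacle. Unlike the horizontal recursions, the inflow to $(1,d)$ mixes contributions from all three regions and accumulates age gain in increments tied to $\delta_1+1$, so resolving it into the stated binomial form requires a careful generating-function or induction argument together with a controlled approximation of the lower-order boundary contributions. This is exactly why the high-gain branches of (\ref{stationaryMC1})--(\ref{stationaryMC2}) are stated with $\approx$ rather than equality, whereas the $d=0$ and low-gain columns, whose recursions are exact, are stated as equalities.
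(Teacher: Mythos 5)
Your proposal follows essentially the same route as the paper's proof: model the DTR dynamics as a Markov chain, split $\mathcal{V}$ into the $d=0$ column (subspace $A$), the low-gain columns (subspace $B$), and the high-gain region (subspace $C$); solve the exact geometric horizontal recursions in $B$ and $C$; solve the inhomogeneous two-rate recursion on the $d=0$ column to get the $\left[(1-p)^{k-1}-(1-q)^{k-1}\right]/(q-p)$ shape; close the system through the $k=1$ convolution in blocks of width $\delta_1+1$; and normalize to fix $x$. Two points in your write-up diverge from what careful execution requires, and both matter for the theorem exactly as stated. First, the approximation is not symmetric between the two orderings: it enters only through the $d=0$ column, where for $\delta_2 \le d \le \delta_1+1$ (possible only when $\delta_1 \ge \delta_2-1$) the recursion $\pi_{d,0}=(1-p)\pi_{d-1,0}+x(1-q)^{d-1}$ carries an extra inflow term that the paper drops to make the convolution tractable. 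When $\delta_1 \le \delta_2-1$, every $d \ge \delta_2$ exceeds $\delta_1+1$, the extra term is absent, and the recursion---hence (9c) in (\ref{stationaryMC2})---is \emph{exact}; your claim that the high-gain branches of both cases are stated with $\approx$ would under-prove the second case. Second, to obtain the exact $x$ of (\ref{x}) you cannot normalize using the approximate itemized formulas for subspace $C$: the paper instead sums region $C$ exactly by combining the exact column structure (Lemma \ref{property2}) with the exact aggregate identity $\sum_{d\ge\delta_2}\pi_{1,d}=x/q$, which follows from balance at state $(2,0)$. Since you already invoke this aggregate when treating the $d=0$ column, the fix is available inside your framework, but it must be made explicit; otherwise the normalization inherits the approximation and fails to yield (\ref{x}) as an exact expression.
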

In Theorem \ref{TheoremDistribution}, we can see that the stationary distribution of the DTR policy is different for the two cases $\delta_1 \ge \delta_2-1$ and $\delta_1 \le \delta_2-1$ given in (\ref{stationaryMC1}) and (\ref{stationaryMC2}), respectively. This is because that the DTR policy adopts two thresholds $\delta_1$, $\delta_2$, and they can both influence the evolution of the ages at the relay and the destination. Moreover, for each case, we summarize the stationary distribution terms into three subspaces $A$, $B$ and $C$. Subspaces $A$, $B$ and $C$ represent the states with $k \ge 2$, $d = 0$, states with $\forall k$, $2 \le d \le \delta_2$, and states with $\forall k$, $d \ge \delta_2$, respectively. Note that the union space of $A$, $B$ and $C$ forms the entire state space $\mathcal{V}$ given in (\ref{statespace}). The stationary distribution for all the states in subspaces $A$, $B$ and $C$ are summarized in (8a), (8b) and (8c), respectively when $\delta_1 \ge \delta_2-1$, and in (9a), (9b) and (9c), respectively when $\delta_1 \le \delta_2-1$. It is worth emphasizing that due to the complicated evolution for the states within subspace $C$ when $\delta_1 \ge \delta_2-1$, we use an approximation\footnote{We use the approximation that $\pi_{d,0} \approx \left(1-p\right)\pi_{d-1,0}$ for $d \ge \delta_2$, the approximation is very tight for relative large value of $q$ and becomes exact when $q \to 1$. For detailed explanation, please refer to the discussion above (51) in Appendix B.} to obtain (8c) in a closed-form, and as shown in our simulations, the approximation is very tight for relatively large value of $q$. With the help of the stationary distribution for the state space $\mathcal{V}$, we finally characterize the average AoI, and the average number of forwarding actions, for the proposed $\theta_{\rm{DTR}}$ policy given in the following two Theorems.
\begin{theorem}\label{TheoremAoI}
The average AoI of the $\theta_{\rm{DTR}}$ policy is given in (\ref{AoI1}) and (\ref{AoI2}) for the case that $\delta_1 \ge \delta_2-1$, and the case that $\delta_1 \le \delta_2-1$, respectively.
\begin{equation}\label{AoI1}
\begin{split}
\bar \Delta\left(\theta_{\rm{DTR}}\right) &\approx {1 \over p} + {1 \over q} + {\delta _2} - {{\left( {p{\delta _1} - q{\delta _2}} \right){{\left( {1 - q} \right)}^{{\delta _1}}} + q{\delta _2}\left[ {{{p\left( {{\delta _2} - 1} \right)} \over 2} + 1} \right] + 1} \over {q\left( {1 - p} \right) + pq{\delta _2} + p{{\left( {1 - q} \right)}^{{\delta _2} - 1}} - p{{\left( {1 - q} \right)}^{{\delta _1}}}}} +\\
&\quad {{\left[ {1 - {{\left( {1 - q} \right)}^{{\delta _2} - 1}}} \right]\left[ {p\left( {p{\delta _1} - q{\delta _1} - q} \right){{\left( {1 - q} \right)}^{{\delta _1}}} + p - q - pq{\delta _1} + {{q\left( {p{\delta _1} + 1} \right)} \over {1 - {{\left( {1 - q} \right)}^{{\delta _1}}}}}} \right]} \over {p\left[ {1 - {{\left( {1 - q} \right)}^{{\delta _1}}}} \right]\left[ {q\left( {1 - p} \right) + pq{\delta _2} + p{{\left( {1 - q} \right)}^{{\delta _2} - 1}} - p{{\left( {1 - q} \right)}^{{\delta _1}}}} \right]}}, \delta_1 \ge \delta_2-1.
\end{split}
\end{equation}
\begin{equation}\label{AoI2}
\begin{split}
\bar \Delta\left(\theta_{\rm{DTR}}\right)& = {1 \over q} + {1 \over {p\left[ {1 - {{\left( {1 - q} \right)}^{{\delta _1}}}} \right]}} + {{{\delta _1} + {\delta _2}} \over 2} - {{\left[ {{\delta _2} + \left( {1 - p} \right){\delta _1} + p{\delta _1}{\delta _2}} \right]/2} \over {\left( {1 - p} \right) + p{\delta _2} - p\left( {{\delta _2} - {\delta _1} - 1} \right){{\left( {1 - q} \right)}^{{\delta _1}}}}}, \delta_1 \le \delta_2-1.
\end{split}
\end{equation}
\end{theorem}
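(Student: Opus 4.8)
The plan is to obtain the average AoI directly from the stationary distribution established in Theorem \ref{TheoremDistribution}, without re-analyzing the chain. The key observation is that in state $v=(k,d)$ the destination's instantaneous AoI is $a_D = a_R + g = k+d$, so the per-slot cost appearing in (\ref{averageAoIsystem}) is simply $k+d$. Since the $\theta_{\rm{DTR}}$ policy induces an irreducible, positive-recurrent Markov chain on $\mathcal{V}$ (witnessed by the existence of the summable stationary distribution $\pi_{k,d}$), the ergodic theorem lets me replace the limiting time average by the stationary expectation,
\[
\bar\Delta(\theta_{\rm{DTR}}) \;=\; \sum_{(k,d)\in\mathcal{V}} (k+d)\,\pi_{k,d} \;=\; \underbrace{\sum_{(k,d)} k\,\pi_{k,d}}_{\mathbb{E}[a_R]} \;+\; \underbrace{\sum_{(k,d)} d\,\pi_{k,d}}_{\mathbb{E}[g]}.
\]
I would then split each sum over the three subspaces $A$ ($d=0$), $B$ ($2\le d\le\delta_2$) and $C$ ($d\ge\delta_2$) and treat the two regimes $\delta_1\ge\delta_2-1$ and $\delta_1\le\delta_2-1$ separately, using (\ref{stationaryMC1}) and (\ref{stationaryMC2}) respectively.

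For subspaces $A$ and $B$ the distribution factorizes into a geometric term in $k$ and an age-gain-dependent term in $d$, so the contributions reduce to the standard closed forms for $\sum_k (1-p)^{k-1}$, $\sum_k k(1-p)^{k-1}$ and the (truncated or full) geometric sums $\sum_d (1-q)^{d-1}$, $\sum_d d(1-q)^{d-1}$ over the relevant ranges. These are elementary; I would combine them with the normalization constant $x$ from (\ref{x}) and keep them as explicit building blocks.

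Subspace $C$ is the crux. There the distribution carries the nested negative-binomial factor $\sum_{l=0}^{n}\binom{N+l-1}{l}\bigl[p(1-q)^{\delta_1}\bigr]^{l}(1-p)^{N-1}$, with $N$, $n$, $m$ defined through the floor and modulo relations (\ref{n})--(\ref{m}). To evaluate $\sum_{d\ge\delta_2}(k+d)\pi_{k,d}$ I would first carry out the sum over $k$ (splitting the two ranges $1\le k\le\delta_1+1$ and $k\ge\delta_1+1$), which again collapses to geometric sums and leaves the whole $d$-dependence inside the negative-binomial factor. For the remaining sum over $d$ I would substitute $j=d-\delta_2$ and reindex via $j=n(\delta_1+1)+m$ with $0\le m\le\delta_1$, so that summing over $d$ becomes a double sum over $n\ge0$ and $m$. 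I would then collapse the inner $l$-sum using the identity $\sum_{l\ge0}\binom{N+l-1}{l}z^{l}=(1-z)^{-N}$, together with the geometric summation of the residual $(1-p)^{N-1}$ factors; the weight $d=\delta_2+n(\delta_1+1)+m$ in the $\mathbb{E}[g]$ piece is handled by an extra differentiation of the same generating function. The approximation flagged in Theorem \ref{TheoremDistribution}, namely $\pi_{d,0}\approx(1-p)\pi_{d-1,0}$ (exact as $q\to1$), enters precisely when closing this sum and produces the ``$\approx$'' in (\ref{AoI1}); for $\delta_1\le\delta_2-1$ the analogous sum closes exactly, giving the equality in (\ref{AoI2}).

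Finally I would assemble the $A$, $B$, $C$ contributions, substitute the explicit $x$ from (\ref{x}) and $\pi_{1,\delta_2}$ from (\ref{initialterm}), and simplify to match (\ref{AoI1})--(\ref{AoI2}). I expect the main obstacle to be the subspace-$C$ summation: correctly unfolding the floor/modulo reindexing of $d$ and collapsing the coupled sums over $l$, $n$ and $m$ through the negative-binomial generating function, while carefully tracking where the $q\to1$ approximation is invoked. The $A$ and $B$ subspaces, the ergodic reduction, and the final simplification are routine by comparison.
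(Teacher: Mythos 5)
Your proposal is correct and follows essentially the same route as the paper: express the average AoI as the stationary expectation $\sum_{(k,d)}(k+d)\pi_{k,d}$, split it over the subspaces $A$, $B$, $C$ in the two regimes, close $A$ and $B$ with geometric sums, and handle $C$ by unfolding the floor/modulo indexing of $d$ into the pair $(l,N)$ and collapsing the resulting double sum with negative-binomial generating functions (with derivatives for the $d$-weight), inheriting the $\approx$ from the approximate form of (8c). The only cosmetic difference is the order in which you collapse the decoupled double sum (you close the $l$-sum first via $\sum_{l}\binom{N+l-1}{l}z^{l}=(1-z)^{-N}$, whereas the paper sums over $N$ first for fixed $l$), which is immaterial once the indices are decoupled.
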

\begin{remark}
In Theorem \ref{TheoremAoI}, we first can see that (\ref{AoI1}) is an approximate expression while (\ref{AoI2}) is an exact expression. This is due to the approximation of (8c) when $\delta_1 \ge \delta_2-1$. We next investigate the average AoI for the proposed DTR policy for some special cases based on the analytical results derived in (\ref{AoI1}) and (\ref{AoI2}).
When $\delta_1 = \delta_2-1$, the average AoI of the DTR policy can be obtained from both (\ref{AoI1}) and (\ref{AoI2}) given by
\begin{equation}\label{remarkeq1}
\mathop {{{\bar \Delta }\left(\theta_{\rm{DTR}}\right)}}\limits_{{\delta _1} = {\delta _2} - 1} ={1 \over q} + {{{\delta _1} + 1} \over 2} + {1 \over {p\left[ {1 - {{\left( {1 - q} \right)}^{{\delta _1}}}} \right]}} - {{{\delta _1} + 1} \over {2\left( {1 + p{\delta _1}} \right)}}.
\end{equation}
For the case that $q \to 1$, the average AoI of the DTR policy can be simplified to
\begin{equation}\label{remarkeq2}
\mathop {{{\bar \Delta }\left(\theta_{\rm{DTR}}\right)}}\limits_{q \to 1}  = 1 + {1 \over p} + {{p{\delta _2}\left( {{\delta _2} - 1} \right)} \over {2\left( {1 - p + p{\delta _2}} \right)}}.
\end{equation}
For the case that only $\delta_1$ is adopted, i.e., $\delta_2 = 2$, the average AoI can be re-written from (\ref{AoI1}) as
\begin{equation}\label{remarkeq3}
\begin{split}
\mathop {{{\bar \Delta }\left(\theta_{\rm{DTR}}\right)}}\limits_{{\delta _2} = 2} &\approx {1 \over p} + {1 \over q} - {{\left( {p{\delta _1} - 2q} \right){{\left( {1 - q} \right)}^{{\delta _1}}} + pq\left( {{\delta _1} + 1} \right) + 2q + 1} \over {q + p\left[ {1 - {{\left( {1 - q} \right)}^{{\delta _1}}}} \right]}}+ \\
&\quad {{p{\delta _1} + 1} \over {p{{\left[ {1 - {{\left( {1 - q} \right)}^{{\delta _1}}}} \right]}^2}}} - {{q\left[ {p\left( {q{\delta _1} + q} \right){{\left( {1 - q} \right)}^{{\delta _1}}} + q + pq{\delta _1}} \right]} \over {p\left[ {1 - {{\left( {1 - q} \right)}^{{\delta _1}}}} \right]\left[ {q + p\left[ {1 - {{\left( {1 - q} \right)}^{{\delta _1}}}} \right]} \right]}}.
\end{split}
\end{equation}
Note that it is meaningless to set $\delta_2 = 0$ because the age gain $d = 0$ indicates that relay and destination are synchronized. Therefore, the relay should always choose receiving action when $d=0$, and the setting of $\delta_2 = 2$ means that the second threshold $\delta_2$ is not considered in the DTR policy, i.e., $d \ge \delta_2=2$ always satisfies. For the case that only $\delta_2$ is implemented, i.e., $\delta_1 \to \infty$, the average AoI can be simplified from (\ref{AoI1}) and given by
\begin{equation}\label{remarkeq4}
\mathop {{{\bar \Delta }\left(\theta_{\rm{DTR}}\right)}}\limits_{{\delta _1} \to \infty } \approx {1 \over p} + {1 \over q} + {\delta _2} - {{q{\delta _2}\left[ {{{p\left( {{\delta _2} - 1} \right)} \over 2} + 1} \right] + {{\left( {1 - q} \right)}^{{\delta _2} - 1}}} \over {q\left( {1 - p} \right) + pq{\delta _2} + p{{\left( {1 - q} \right)}^{{\delta _2} - 1}}}}.
\end{equation}
We can observe from (\ref{remarkeq1})-(\ref{remarkeq4}) that the increase of $\delta_1$ or $\delta_2$ may not decrease the average AoI of the status update system. For instance, in (\ref{remarkeq1}), on the one hand, the increase of $\delta_1$ can decrease the average AoI because the term ${1 \over {p\left[ {1 - {{\left( {1 - q} \right)}^{{\delta _1}}}} \right]}}$ reduces. On the other hand, it can also increases the average AoI because the term ${{{\delta _1} + 1} \over 2}$ grows. This observation can be explained as follows. In the DTR policy, there exists a tradeoff between receiving action and forwarding action in terms of the average AoI at the destination. Too frequent forwarding action may lead to stale status updates stored at the relay, while too much receiving action results in infrequent reception of status updates at the destination. Therefore, it is critical to tune the two thresholds $\delta_1$ and $\delta_2$ in the proposed DTR policy in order to balance the tradeoff between receiving and forwarding actions to minimize the average AoI at the destination. Similar results can be also found for the general case given in (\ref{AoI1}) and (\ref{AoI2}). Due to the complicated structure of (\ref{AoI1}) and (\ref{AoI2}), it is intractable for us to further determine the optimal values of $\delta_1$ and $\delta_2$ in a closed-form. However, they can be easily solved based on the derived analytical results by numerical methods.
\end{remark}

\begin{theorem}\label{TheoremTP}
The average number of forwarding actions in $\theta_{\rm{DTR}}$ policy when $\delta_1 \ge \delta_2-1$ is given~by
\begin{equation}\label{AVERAGEtp1}
\begin{split}
\eta\left(\theta_{\rm{DTR}}\right) = {{{p\left[ {1 - {{\left( {1 - q} \right)}^{{\delta _1}}}} \right]} \over {\left( {1 - p} \right)q + pq{\delta _2} + p{{\left( {1 - q} \right)}^{{\delta _2} - 1}}\left[ {1 - {{\left( {1 - q} \right)}^{{\delta _1} - {\delta _2} + 1}}} \right]}}}.
\end{split}
\end{equation}
The average number of forwarding actions in $\theta_{\rm{DTR}}$ policy when $\delta_1 \le \delta_2-1$ is given~by
\begin{equation}\label{AVERAGEtp2}
\eta\left(\theta_{\rm{DTR}}\right) ={{{p\left[ {1 - {{\left( {1 - q} \right)}^{{\delta _1}}}} \right]} \over {\left( {1 - p} \right)q + pq{\delta _2} - pq\left( {{\delta _2} - {\delta _1} - 1} \right){{\left( {1 - q} \right)}^{{\delta _1}}}}}}.
\end{equation}
\end{theorem}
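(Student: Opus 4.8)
The plan is to avoid summing the complicated subspace-$C$ rows of (\ref{stationaryMC1}) and (\ref{stationaryMC2}) altogether, and instead to exploit a flow-balance (cut) identity for the Markov chain of Theorem \ref{TheoremDistribution}. First I would observe that for any stationary policy the long-term average in (\ref{longtermenergy}) equals the steady-state probability that the policy prescribes forwarding; under (\ref{DTR}) this is the stationary mass on $\{(k,d): k \le \delta_1,\, d \ge \delta_2\}$, which is exactly the forwarding region inside subspace $C$. Rather than evaluating that sum directly, I would relate $\eta(\theta_{\rm{DTR}})$ to the mass on subspace $A$, namely $P_A \triangleq \Pr\{d=0\} = \sum_{k \ge 2} \pi_{k,0}$, since the subspace-$A$ rows of (\ref{stationaryMC1}) and (\ref{stationaryMC2}) have the same closed form in both cases.

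The key step is a cut across the boundary of $\{d=0\}$. By the evolution rules (\ref{AoIR})--(\ref{AoID}), the gain $g$ is reset to $0$ only through a successful forwarding ($w=1 \cap I_{RD}=1$), which occurs with probability $q$ in every forwarding state; since all forwarding states have $d \ge \delta_2 \ge 2$ and hence lie outside $A$, the steady-state rate of entering $A$ equals $q\,\eta(\theta_{\rm{DTR}})$. Conversely, a state leaves $A$ only through a successful reception from a synchronized state $(k,0)$, which produces gain $g = k \ge 2$ with probability $p$; hence the rate of leaving $A$ equals $p\,P_A$. Equating the two rates gives the compact relation $\eta(\theta_{\rm{DTR}}) = \frac{p}{q}\,P_A$, valid uniformly in both cases.

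It then remains to compute $P_A$ from the subspace-$A$ rows of (\ref{stationaryMC1}) and (\ref{stationaryMC2}). I would split the sum at $k=\delta_1+1$, evaluate the finite geometric sum over $2 \le k \le \delta_1$ together with the infinite tail $\sum_{k \ge \delta_1+1}(1-p)^{k-\delta_1-1} = 1/p$, and simplify using the identity $q(1-p) - p(1-q) = q-p$, which cancels the $q-p$ denominator and collapses $P_A$ to $x\,[1-(1-q)^{\delta_1}]/(pq)$. Substituting $\eta(\theta_{\rm{DTR}}) = \frac{p}{q}P_A = x\,[1-(1-q)^{\delta_1}]/q^2$ and inserting the case-dependent value of $x$ from (\ref{x}) then yields (\ref{AVERAGEtp1}) for $\delta_1 \ge \delta_2-1$ and (\ref{AVERAGEtp2}) for $\delta_1 \le \delta_2-1$.

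I expect the main obstacle to be justifying the cut argument rigorously: confirming that every transition into or out of $\{d=0\}$ is captured by exactly the two mechanisms above (successful forwarding in, successful reception out), with no leakage through the internal transitions $(k,0)\to(k+1,0)$, and checking that the boundary term at $k=\delta_1+1$ is not double-counted when splitting the sum for $P_A$. Once the identity $q(1-p)-p(1-q)=q-p$ is invoked, the remaining algebra is routine and, notably, the unified relation $\eta(\theta_{\rm{DTR}}) = \frac{p}{q}P_A$ explains why the two reported formulas differ only through the value of $x$.
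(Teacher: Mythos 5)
Your proposal is correct and arrives at the paper's result by a genuinely different route. The paper computes $\eta(\theta_{\rm{DTR}})=\sum_{k=1}^{\delta_1}\sum_{d\ge\delta_2}\pi_{k,d}$ directly inside the forwarding region: it invokes the balance at the single state $(2,0)$, namely $x=q\sum_{d\ge\delta_2}\pi_{1,d}$, together with the column-geometric structure of subspace $C$ from Lemma \ref{property2}, so that $\sum_{d\ge\delta_2}\pi_{k,d}=\frac{x}{q}(1-q)^{k-1}$ for $1\le k\le\delta_1+1$, and then sums over $1\le k\le\delta_1$ to reach (\ref{TPeq2}), i.e.\ $\eta(\theta_{\rm{DTR}})=\frac{1-(1-q)^{\delta_1}}{q^2}x$, before substituting (\ref{x}). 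You instead draw a global cut around subspace $A$: entries into $\{d=0\}$ occur only via successful forwarding (flow $q\,\eta(\theta_{\rm{DTR}})$, since every forwarding state has $d\ge\delta_2$ and hence lies outside $A$), exits occur only via successful reception taking $(k,0)$ to $(1,k)$ with $k\ge2$ (flow $p\,P_A$), and the internal transitions $(k,0)\to(k+1,0)$ as well as failed forwardings never cross the cut; stationarity then gives $\eta(\theta_{\rm{DTR}})=\frac{p}{q}P_A$, uniformly in both cases. Your evaluation $P_A=\frac{1-(1-q)^{\delta_1}}{pq}x$ is exactly the sum the paper computes in (\ref{app3term1}) while proving Lemma \ref{lemmax} (with the same cancellation $q(1-p)-p(1-q)=q-p$), so the two routes merge at the identical intermediate identity and finish by inserting the case-dependent $x$. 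What your approach buys: it bypasses Lemma \ref{property2} and the structure of subspace $C$ entirely, and it makes structurally transparent why (\ref{AVERAGEtp1}) and (\ref{AVERAGEtp2}) differ only through $x$, since both the cut identity and the subspace-$A$ closed form are case-independent. What the paper's route buys: with Lemma \ref{property2} and $\sum_{d\ge\delta_2}\pi_{1,d}=x/q$ already established for Theorem \ref{TheoremDistribution}, the direct summation is a two-line computation. Both arguments are exact; neither relies on the approximation adopted for subspace $C$ in (\ref{stationaryMC1}).
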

\begin{remark}
In Theorem \ref{TheoremTP}, we can deduce from (\ref{AVERAGEtp1}) and (\ref{AVERAGEtp2}) that the decrease of $\delta_1$ and the increase of $\delta_2$ can both reduce the average number of forwarding actions in the proposed $\theta_{\rm{DTR}}$. This is understandable because from (\ref{DTR}), the $\theta_{\rm{DTR}}$ policy tends to choose receiving actions when $\delta_1$ reduces or $\delta_2$ grows. Jointly considering the analytical expressions for the average AoI characterized in Theorem \ref{TheoremAoI}, and the average number of forwarding actions derived in Theorem \ref{TheoremTP}, we can solve the age minimization problem formulated in (\ref{optimizationP}) and find the optimal values of $\delta_1$ and $\delta_2$ in the DTR policy numerically by a two-dimensional exhaustive search.
\end{remark}

\section{Proof of Theorems \ref{theoremCMDP1} and \ref{theoremCMDP2}: Optimal CMDP Policy and its Structure}
In this section, we first prove Theorem \ref{theoremCMDP1} by recasting problem (\ref{optimizationP}) as a CMDP problem. We then validate Theorem \ref{theoremCMDP2} by analyzing the structure of the optimal stationary policy derived from the CMDP model.
\subsection{Proof of Theorem \ref{theoremCMDP1}}
The formulated CMDP model for problem (\ref{optimizationP}) can be described by the 5-tuple $\{\mathcal{V},\mathcal{W},\mathrm{P},r,c\}$. The state space $\mathcal{V}$ and the action space $\mathcal{W}$ have already been defined in Section II. We now give definitions for the other three tuples.
\begin{itemize}
	\item Transition probabilities $\mathrm{P} = \Pr\left\{v^\prime|v,w\right\}$: The probability of transition from state $v$ to $v^\prime$ when taking action $w$. According to the AoI evolution of $R$ and $D$ given in \eqref{AoIR} and \eqref{AoID}, we have all the none-null transition probabilities given below
	\begin{equation}\label{e3}
\begin{aligned}
&\Pr\left\{\left(k+1,d\right)\left| {\left(k,d\right),w=0} \right.\right\}=1-p,\\
&\Pr\left\{\left(1,k+d\right)\left| {\left(k,d\right),w=0} \right.\right\}=p,\\
&\Pr\left\{\left(k+1,d\right)\left| {\left(k,d\right),w=1} \right.\right\}=1-q,\\
&\Pr\left\{\left(k+1,0\right)\left| {\left(k,d\right),w=1} \right.\right\}=q.\\
\end{aligned}
\end{equation}
	\item  The reward function $r$: $\mathcal{V} \times \mathcal{W}  \rightarrow \mathbb{R}$ is the AoI at $D$, and it is defined as $r\left(v,w\right)=k+d$.
	\item  The cost function $c$: $\mathcal{V} \times \mathcal{W}  \rightarrow \mathbb{R}$ is the cost for the action $w$, and it is given by $c\left(v,w\right)=w$.
\end{itemize}

We use $J\left( \theta \left| s_0 \right. \right)$ and $C\left( \theta \left| s_0 \right. \right)$ to denote the infinite horizon average AoI and average number of forwarding actions, respectively, when the policy $\theta$ is implemented, given the initial state $s_0$. The problem \eqref{optimizationP} can be transformed into the CMDP problem given by
\begin{equation}\label{cmdp}
	\begin{aligned}
	& \text{Minimize}  \quad J\left( \theta \left| s_0 \right.  \right) \buildrel \Delta \over = \mathop {\lim \sup }\limits_{T \to \infty } {1 \over T}\mathbb{E}\left[ {\sum\limits_{t = 1}^T {\left[ {{a_R}\left( t \right) + g\left( t \right)} \right] \left| s_0 \right.} } \right],\\
	& {\text{Subject to}} \quad  C\left( \theta \left| s_0 \right. \right) \buildrel \Delta \over = \mathop {\lim \sup }\limits_{T \to \infty } {1 \over T}\mathbb{E}\left[ {\sum\limits_{t = 1}^T {w\left( t \right)}\left| s_0 \right. } \right] \le {\eta _C}.
	\end{aligned}
\end{equation}

In order to solve the CMDP problem \eqref{cmdp}, we now apply the Lagrange relaxation with a Lagrange relaxation multiplier $\lambda>0$. The average Lagrangian cost for a given policy $\theta$ and multiplier $\lambda$ is defined as
\begin{equation}\label{mdp}
{L_\lambda }\left( \theta  \right) = \mathop {\lim }\limits_{T \to \infty } {1 \over T}\left( {\mathbb{E}\left[ {\sum\limits_{t = 1}^T {\left[ {{a_R}\left( t \right) + g\left( t \right)} \right]} } \right] + \lambda \mathbb{E}\left[ {\sum\limits_{t = 1}^T {w\left( t \right)} } \right]} \right).
\end{equation}
With the above definitions and according to \cite[Th. 2.5]{sennott1993constrained}, we can now claim the following result:
If there exists a given multiplier ${\lambda}^*>0$ such that $C\left(\theta_{\lambda^*}\left| s_0\right.\right)=\eta_C$, the optimal policy for (\ref{cmdp}) is the optimal policy $\theta_{\lambda^*}$ for the unconstrained problem that minimizes (\ref{mdp}). Otherwise, if no such ${\lambda}^*$ can be found, there exists an optimal stationary policy $\theta_{\rm{CMDP}}$ for problem (\ref{cmdp}) which is a randomized mixture of two stationary deterministic policies $\theta_{\lambda_1^*}$ and $\theta_{\lambda_2^*}$ that differ in at most a single state. Each deterministic policy $\theta_{{\lambda}^*}$, ${\lambda}^* \in \left\{{\lambda_1^*}, {\lambda_2^*}\right\}$ is the optimal solution to the unconstrained problem (\ref{mdp}) for a given multiplier ${\lambda}^*$. The optimal values of the multipliers ${\lambda_1^*}$ and ${\lambda_2^*}$ can be solved by iterative algorithms such as Robbins-Monro algorithm \cite{spall2005introduction}. The policy $\theta_{\rm{CMDP}}$ selects $\theta_{\lambda_1^*}$ with a probability $\alpha$, and chooses $\theta_{\lambda_2^*}$ with a probability $1-\alpha$. The randomization parameter $\alpha\in[0,1]$ is mapped from $\lambda_1^*$ and $\lambda_2^*$, and given by \cite[Eq. 11]{AoI_HARQ}.
\subsection{Proof of Theorem \ref{theoremCMDP2}}
We now turn to Theorem \ref{theoremCMDP2} and analyze the structure of each stationary deterministic policy $\theta_{\lambda^*}$, ${\lambda}^* \in \left\{{\lambda_1^*}, {\lambda_2^*}\right\}$. To proceed, the switching-type policy described in Theorem \ref{theoremCMDP2} is equivalent to the two conclusions: (1) When relay AoI $k$ is fixed, the policy $\theta_{\lambda^*}$ is a monotone nondecreasing policy in terms of $d$; (2) When age gain $d$ is fixed, the policy $\theta_{\lambda^*}$ is a monotone nonincreasing policy in terms of $k$.
\subsubsection{Proof of Conclusion 1}
We first give the following definition.
\begin{definition}\label{defsubadditive}
	(Subadditive \cite{puterman2014markov}) A multivariable function $Q(v,w): \mathcal{V} \times \mathcal{W} \rightarrow \mathbb{R}$ is subadditive in $(v,w)$ , if for all $v^{+}\geq v^{-}$ and $w^{+}\geq w^{-}$, the following inequality holds
	\begin{equation}
	\label{e8}
	\begin{aligned}
	Q(v^{+},w^{+})+ Q(v^{-},w^{-}) \leq Q(v^{+},w^{-})+ Q(v^{-},w^{+}).
	\end{aligned}
	\end{equation}
\end{definition}
To use the above definition, we order the state by the age gain $d$, i.e., $v^+ > v^-$ if $d^+>d^-$, where $v^+ = (\cdot, d^+)$ and  $v^- = (\cdot, d^-)$. We now give the one-step Lagrangian cost function based on (\ref{mdp}), and it is given by
\begin{equation}\label{onestep}
{L}\left( v,w,\lambda  \right) = k+d+ \lambda w.
\end{equation}
With Definition \ref{defsubadditive}, and function (\ref{onestep}), according to \cite[Th. 8.11.3]{puterman2014markov}, Conclusion 1 holds if the following four conditions hold.
\begin{itemize}
	\item [(a)] ${L}\left( v,w,\lambda  \right)$ is nondecreasing in $v$ for all $w \in \mathcal{W}$;
    \item [(b)] ${L}\left( v,w,\lambda  \right)$ is a subadditive function on $\mathcal{V}\times \mathcal{W}$
	\item [(c)] $\Phi(u|v,w)=\sum_{v^\prime=u}^{\infty}\Pr(v^\prime|v,w)$ is nondecreasing in $v$ for all $u\in \mathcal{V}$ and $w\in \mathcal{W}$ where $\Pr(v^\prime|v,w)$ is the transition probability given in \eqref{e3};
	\item [(d)] $\Phi(u|v,w)=\sum_{v^\prime=u}^{\infty}\Pr(v^\prime|v,w)$ is a subadditive function on $\mathcal{V}\times \mathcal{W}$ for all $u\in \mathcal{V}$.
\end{itemize}
It is straightforward to prove that Conditions (a) and (b) hold through some mathematical manipulations. We now turn to Conditions (c) and (d), according to the transition probabilities given in \eqref{e3}, if the current state is $v=(k,d)$, the next possible states are $v_1=(1,k+d)$, $v_2=(k+1,d)$ and $v_3=(k+1,0)$, we have
\begin{equation}
\label{req2}
\Phi(u|v,w=0)=\left\{
\begin{array}{rcl}
1,& &\text{if } u\leq v_2,  \\
1-p,& &\text{if } v_2 <u \le v_1, \\
0,& &\text{if }  u > v_1, \\
\end{array}
\right.
\end{equation}
\begin{equation}
\label{req4}
\Phi(u|v,w=1)=\left\{
\begin{array}{rcl}
1,& & \text{if } u \le v_3, \\
1-q,& &\text{if } v_3 < u \le v_2, \\
0,& &\text{if }  u > v_2.\\
\end{array}
\right.
\end{equation}
With the results for function $\Phi(u|v,w)$ derived in (\ref{req2}) and (\ref{req4}), we can then prove Condition (c) holds. At last, we turn to Condition (d), from (\ref{e8}), we realize that there are three possible combinations of $w^{+}, w^{-}$ in the considered problem: (i) $w^+=1$, $w^-=0$, (ii) $w^+=w^-=1$ and (iii) $w^+=w^-=0$. By considering those three combinations, and substitute them into the expression of $\Phi(u|v,w)$ derived in (\ref{req2}) and (\ref{req4}), we can finally verify that Condition (d) holds. Till this end, by verifying Conditions (a)-(d), we have shown that Conclusion 1 hold.
\subsubsection{Proof of Conclusion 2}
In Conclusion 2, $d$ is fixed and $\theta_{\lambda^*}$ is a monotone nonincreasing policy in terms of $k$. Due to the limited space, we will not give the detailed proof for Conclusion 2 and it can be verified with a similar procedure as Conclusion~1.


\section{Proof of Theorem \ref{TheoremDistribution}: Stationary Age Distribution of DTR Policy}
We proof Theorem \ref{TheoremDistribution} by first modelling the status update system implementing the DTR policy as an MC, and then derive the stationary distribution of the MC.
\subsection{MC Modelling}
The state of the MC has already been defined in Section II-C, e.g., $v\triangleq \left(k,d\right)$, with a state space $\mathcal{V}$ given in (\ref{statespace}). We now define the transition probability $\Pr\left\{v^\prime \left| {v} \right.\right\}$ for the MC as the probability of transition from state $v$ to $v^\prime$. Substitute the actions taken by the $\theta_{\rm{DTR}}$ policy described in (\ref{DTR}) into the transition probability given in (\ref{e3}), we have all the none null transition probabilities for the MC given by
\begin{subequations}\label{MCtransition}
\begin{align}
&\Pr\left\{\left({k+1,d}\right)\left| \left({{k,d}}\right) \right.\right\}=1-p, \quad k > \delta_1 \cup d < \delta_2,\\
&\Pr\left\{\left({k+1,d}\right)\left| \left({{k,d}}\right) \right.\right\}=1-q,\quad k \le \delta_1 \cap d \ge \delta_2,\\
&\Pr\left\{\left({k+1,0}\right)\left| \left({{k,d}}\right) \right.\right\}=q,\quad k \le \delta_1 \cap d \ge \delta_2,\\
&\Pr\left\{\left({1,k+d}\right)\left| \left({{k,d}}\right) \right.\right\}=p,\quad k > \delta_1 \cup d < \delta_2.
\end{align}
\end{subequations}
For a better understanding, we depict the transition of the $\theta_{\rm{DTR}}$ policy in Fig. \ref{fig:markov1} for (29a) and (29b). We show the transition for (29c) in Fig. \ref{fig:markov2}. We use two Figs \ref{fig:markov3} and \ref{fig:markov4} to illustrate the transition for (29d). Specifically, Fig. \ref{fig:markov3} depicts the transition (29d) for the states with $\forall k, d \le \delta_2$, and Fig. \ref{fig:markov4} considers the transition (29d) for the states with $\forall k, d \ge \delta_2$. Figs. \ref{fig:markov1}-\ref{fig:markov4} will be used throughout this section to explain the derived analytical results for the purposes of comprehensive illustration.
\begin{figure}
\centering \scalebox{0.35}{\includegraphics{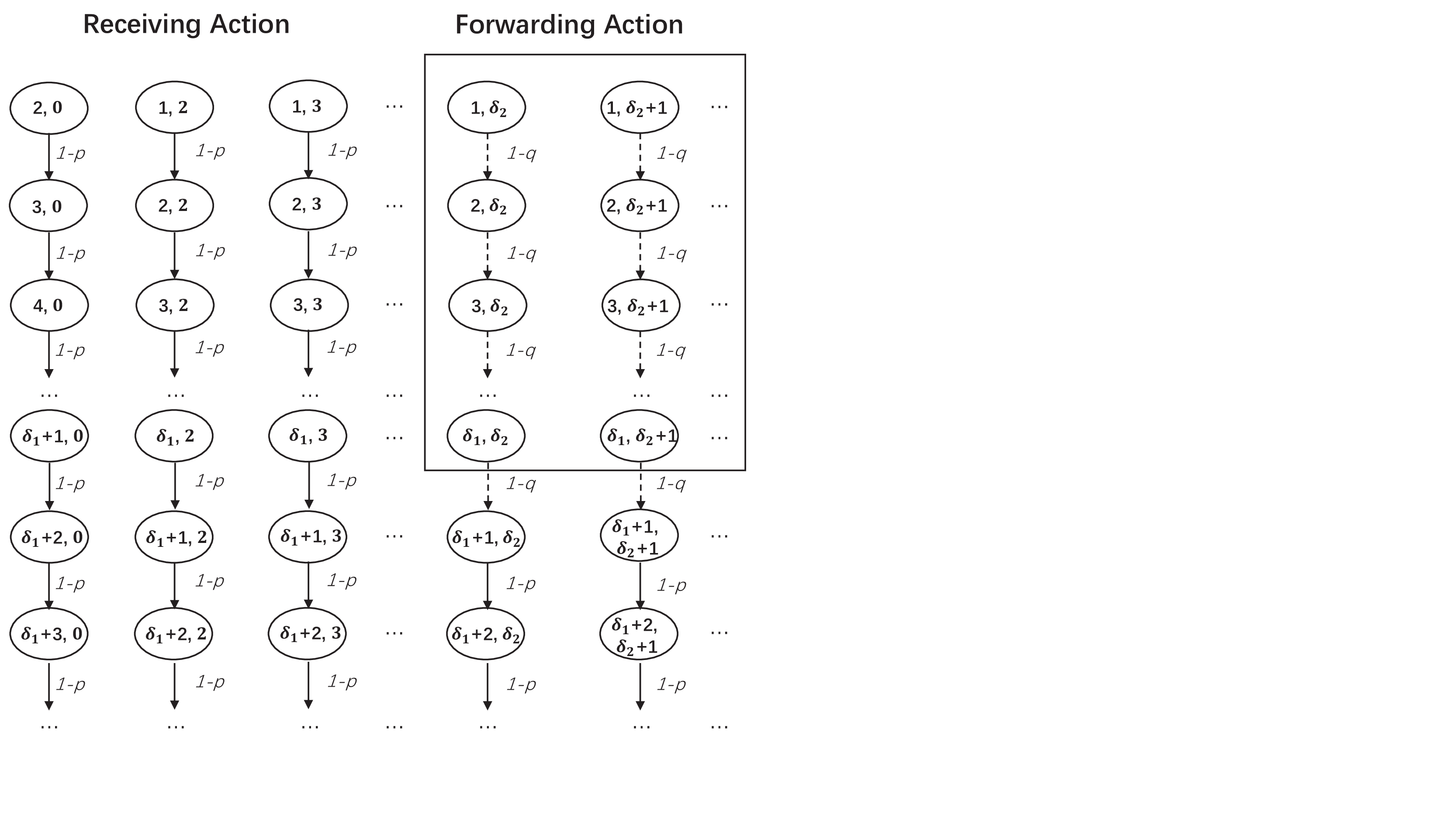}}
\caption{State transition of the proposed DTR policy for (29a) and (29b).}\label{fig:markov1}
\end{figure}
\begin{figure}
\centering \scalebox{0.35}{\includegraphics{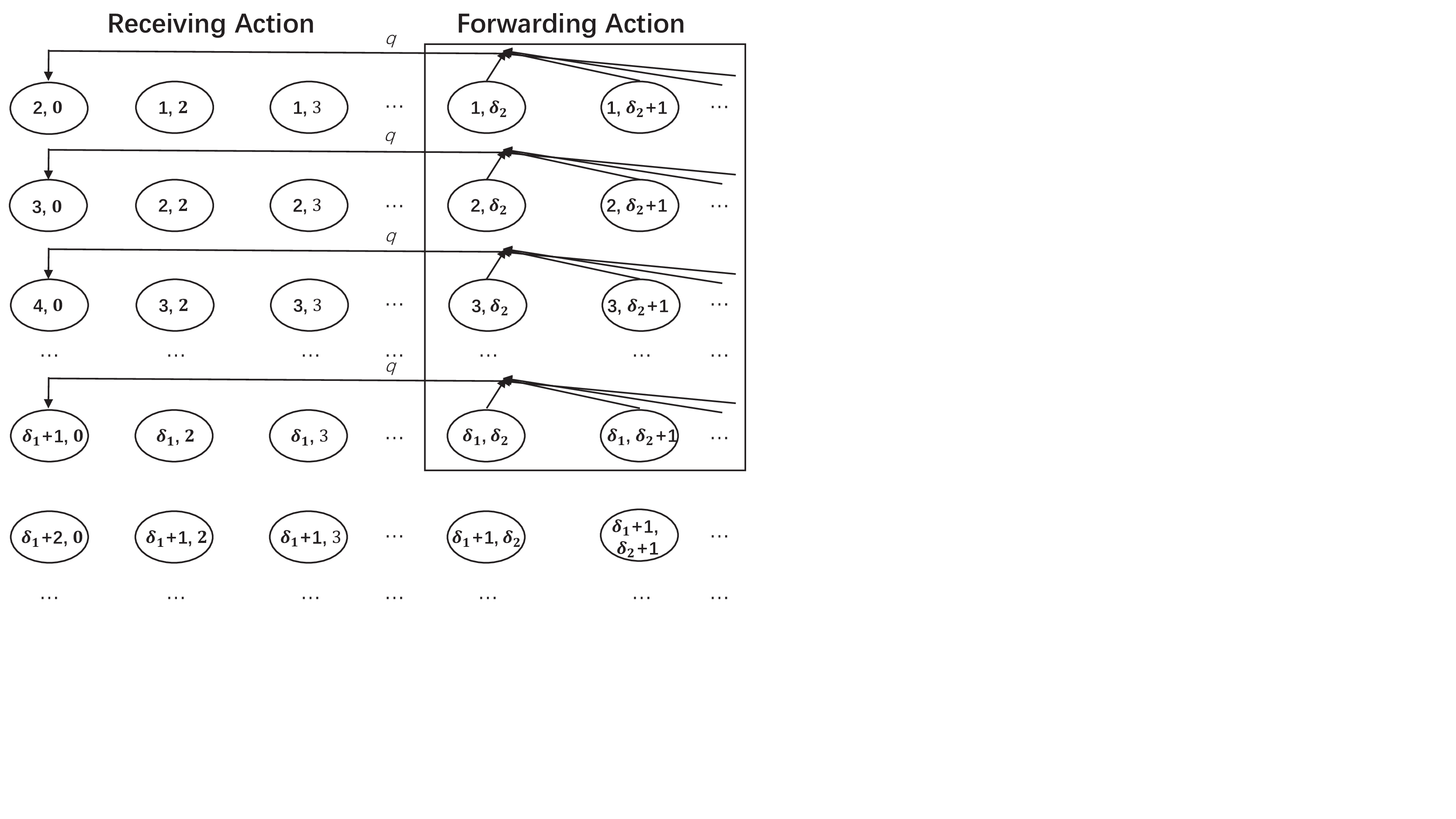}}
\caption{State transition of the proposed DTR policy for (29c).}\label{fig:markov2}
\end{figure}
\begin{figure}
\centering \scalebox{0.35}{\includegraphics{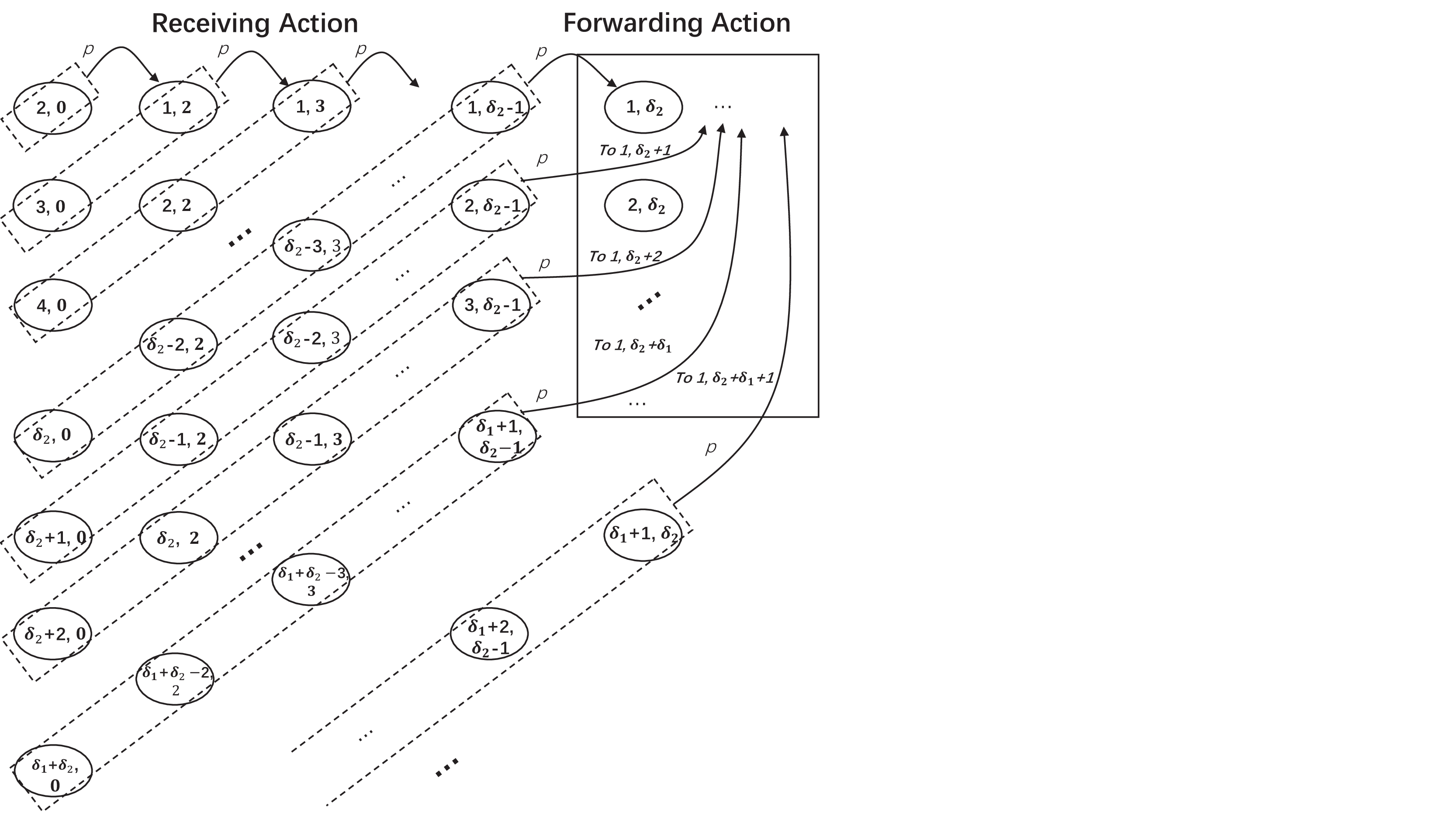}}
\caption{State transition of the proposed DTR policy for (29d) for states with $\forall k, d \le \delta_2$.}\label{fig:markov3}
\end{figure}
\begin{figure}
\centering \scalebox{0.35}{\includegraphics{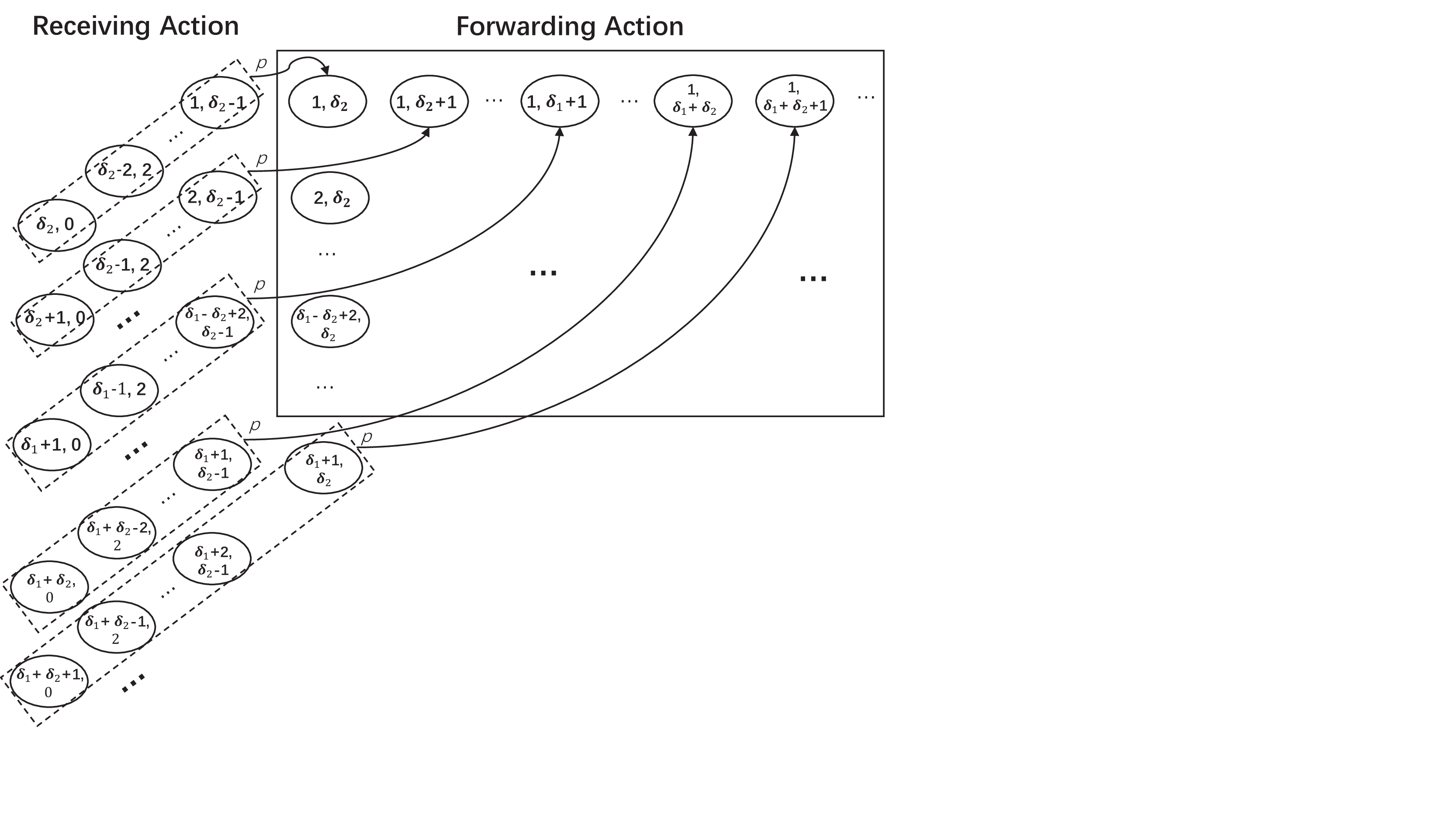}}
\caption{State transition of the proposed DTR policy for (29d) for states with $\forall k, d \ge \delta_2$.}\label{fig:markov4}
\end{figure}
\subsection{Stationary Distribution}
\begin{figure}
\centering \scalebox{0.35}{\includegraphics{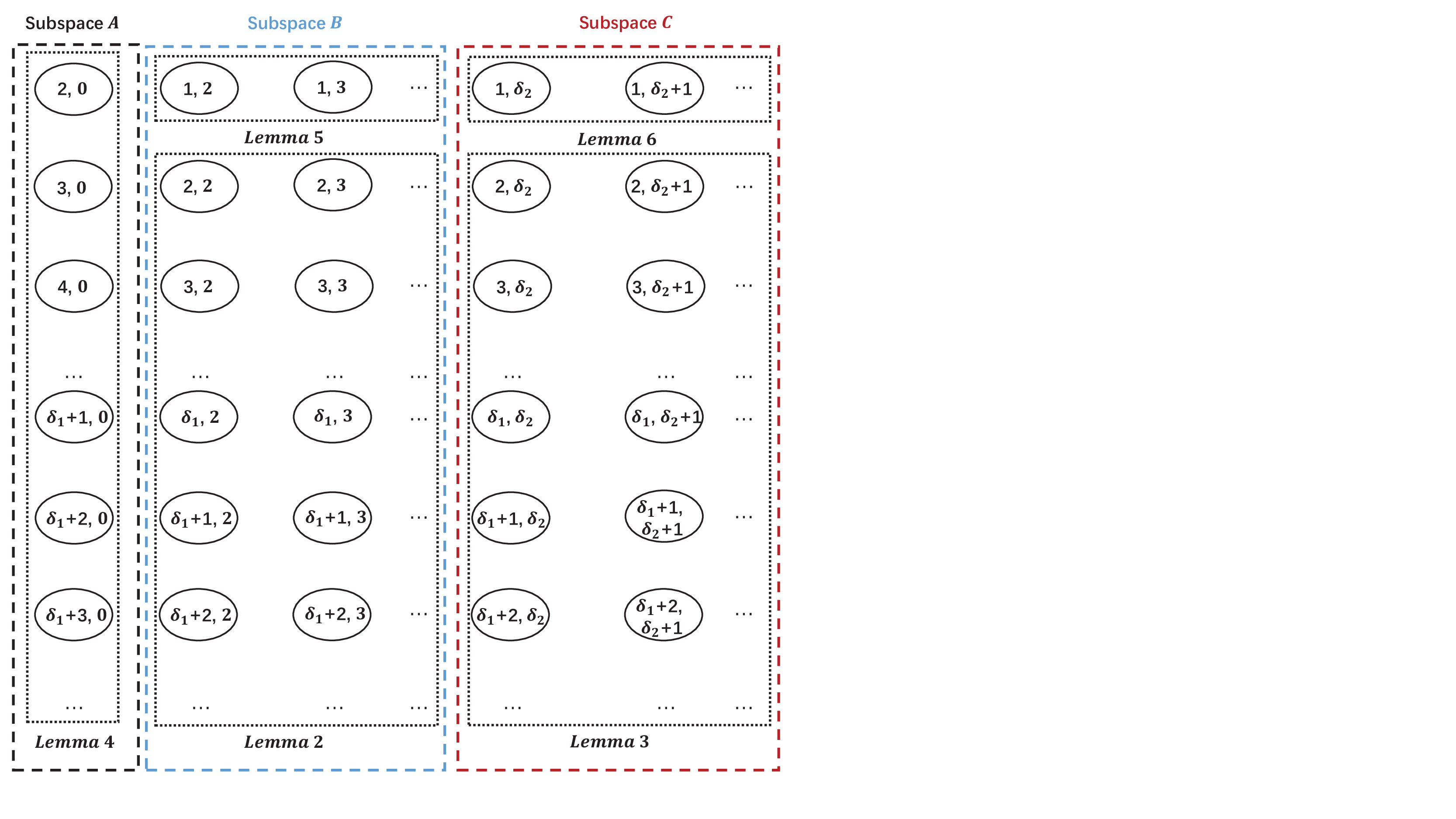}}
\caption{The mappings of Lemmas \ref{property1}-\ref{property5} to the subspaces $A$, $B$ and $C$.}\label{fig:explain}
\end{figure}
With the transition probabilities of the MC, we then derive the stationary distribution of the MC and proof Theorem \ref{TheoremDistribution}. We use Lemmas \ref{property1}-\ref{property5} to prove the analytical expressions of the stationary distribution given in (\ref{stationaryMC1}) and (\ref{stationaryMC2}). Recall that we summarize the stationary distribution in (\ref{stationaryMC1}) and (\ref{stationaryMC2}) by considering three subspaces $A$, $B$ and $C$. For a better understanding, we depict the mappings of the Lemmas to the subspaces $A$, $B$ and $C$ in Fig. \ref{fig:explain}. Specifically, Lemmas \ref{property1} and \ref{property2} depict the evolution of the stationary distribution for the states in subspaces $B$ and $C$ in terms of each column, respectively. In other words, if the stationary distribution of the first row in subspaces $B$ and $C$ is obtained, i.e., $\pi_{1,d}$, $d \ge 2$, by using Lemmas \ref{property1} and \ref{property2}, the stationary distribution for all the other states in subspaces $B$ and $C$ can be derived. Lemma \ref{property3} shows the evolution of the stationary distribution for the states in subspace $A$, which can be used to solve the stationary distribution for states in subspace $A$. Lemmas \ref{property4} and \ref{property5} present the evolution of the stationary distribution for the first row of subspaces $B$ and $C$, respectively. They can be used to derive the stationary distribution for the states in the first row of subspaces $B$ and $C$, respectively. We provide the proofs for all the Lemmas given in this subsection in Appendix B. With the above explanations, we are ready to give the Lemmas.
\begin{lemma}\label{property1}
The evolution of $\pi_{k,d}$ for each column in subspace $B$, e.g., $\forall k, 2\le d \le \delta_2-1$, is given~by
\begin{equation}\label{fact1}
\pi_{k,d} = \left(1-p\right)\pi_{k-1,d} =  \left(1-p\right)^{k-1}\pi_{1,d}, \forall k, 2\le d \le \delta_2-1.
\end{equation}
\end{lemma}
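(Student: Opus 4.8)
The plan is to obtain (\ref{fact1}) directly from the global balance equation of the Markov chain evaluated at an interior state $(k,d)$ with $k\ge 2$ and $2\le d\le\delta_2-1$, and then to unroll the resulting one-step recursion down to the first row. The first thing I would observe is that throughout these columns we have $d<\delta_2$, so the forwarding condition $k\le\delta_1\cap d\ge\delta_2$ in the DTR rule (\ref{DTR}) can never be satisfied, \emph{regardless} of $k$. Hence every state in this $d$-range takes the receiving action, and there is no $\delta_1$-induced threshold inside these columns of subspace $B$. This uniformity is what keeps the argument clean: the only transitions out of such states are the two receiving branches in (\ref{MCtransition}), namely the failure branch $(k,d)\to(k+1,d)$ with probability $1-p$ and the success branch $(k,d)\to(1,k+d)$ with probability $p$.

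Next I would enumerate every transition that can land on the target state $(k,d)$. Scanning the four transition types in (\ref{MCtransition}): a jump of the form $(\cdot)\to(1,\cdot)$ lands on a first-row state and is excluded since $k\ge 2$; a jump $(\cdot)\to(k'+1,0)$ lands on a synchronized state with age gain $0$ and is excluded since $d\ge 2$; the only remaining possibility, $(k-1,d)\to(k,d)$, preserves the age gain $d$. Because $d\le\delta_2-1<\delta_2$, this edge is precisely the receiving-failure branch carrying probability $1-p$, while the competing forwarding-failure branch (probability $1-q$, valid only when $d\ge\delta_2$) is inactive here. Consequently the balance equation collapses to a single incoming term,
\begin{equation}
\pi_{k,d}=(1-p)\,\pi_{k-1,d},\qquad k\ge 2,\ 2\le d\le\delta_2-1,
\end{equation}
which is the first equality of (\ref{fact1}). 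Iterating this recursion successively for $k,k-1,\dots,2$ telescopes to $\pi_{k,d}=(1-p)^{k-1}\pi_{1,d}$, giving the second equality; the first-row value $\pi_{1,d}$ is a boundary quantity determined separately in Lemma \ref{property4}, so the present lemma only needs to establish the column decay for $k\ge 2$, with the $k=1$ case holding trivially since $(1-p)^0=1$.

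The one point that requires care — and essentially the only place the argument could fail — is the incoming-edge enumeration: I must be certain that the two ``reset'' transitions (to $k=1$ and to $d=0$) cannot feed any interior state of these columns and that no state outside the column contributes. Both facts follow immediately from the coordinates of the target states ($k\ge 2$, $d\ge 2$) once the receiving action is fixed, so no case analysis on $\delta_1$ is needed. This matches the picture in Fig.~\ref{fig:markov1}, where within each column below $\delta_2$ the chain advances with probability $1-p$ while the remaining probability mass $p$ escapes to the first row.
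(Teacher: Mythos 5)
Your proof is correct and follows essentially the same route as the paper: the paper's proof likewise observes from (\ref{MCtransition}) that a state $(k,d)$ with $2\le d\le\delta_2-1$ can only be reached from $(k-1,d)$ with probability $1-p$, and then iterates this one-step relation down the column. Your version simply makes explicit the incoming-edge enumeration (ruling out the resets to $k=1$ and $d=0$ and the forwarding branch) that the paper leaves to inspection of the transition probabilities and Fig.~\ref{fig:markov1}.
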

\begin{lemma}\label{property2}
The evolution of $\pi_{k,d}$ for each column in subspace $C$, e.g., $\forall k, d \ge \delta_2$, is given by
\begin{equation}\label{fact2}
\pi_{k,d} =
\left\{{
\begin{matrix}
\begin{split}
   &{\left( {1 - q} \right)^{k-1}{\pi _{1,d}},1 \le k \le \delta_1+1,d \ge \delta_2 }, \\
   &{\left( {1 - q} \right)^{\delta_1}\left(1-p\right)^{k-\delta_1-1}{\pi _{1,d}}, k \ge \delta_1+1,d \ge \delta_2}.  \\
\end{split}
\end{matrix}
}\right.
\end{equation}
\end{lemma}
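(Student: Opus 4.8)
The plan is to establish the claimed column-wise relation by writing the global balance equation for a generic state $(k,d) \in C$ with $k \ge 2$ and then unrolling the resulting scalar recursion in $k$, for each fixed $d \ge \delta_2$. The crucial preliminary observation is that, within subspace $C$, every state with $k \ge 2$ can be entered only from the single state $(k-1,d)$ directly below it in the same column. Indeed, inspecting the four non-null transition probabilities in (\ref{MCtransition}), the two ``reset'' transitions --- the one to $(1,k'+d')$ with probability $p$ and the one to $(k'+1,0)$ with probability $q$ --- land respectively on states whose relay age is $1$ and on states whose age gain is $0$. Since the target state $(k,d)$ satisfies $k \ge 2$ and $d \ge \delta_2 \ge 2 \ne 0$, neither reset transition can feed it, so the entire incoming probability flow arrives through the ``$k \to k+1$ with $d$ unchanged'' transition emanating from $(k-1,d)$, which is a legitimate state in $\mathcal{V}$ because $k-1 \ge 1$ and $d \ge 2$.

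I would then read off the relevant transition probability from $(k-1,d)$ to $(k,d)$. Because $d \ge \delta_2$ holds throughout subspace $C$, the DTR action chosen at $(k-1,d)$ is governed solely by whether $k-1$ exceeds $\delta_1$: by (\ref{MCtransition}), the transition carries probability $1-q$ (a forwarding failure) when $k-1 \le \delta_1$, and probability $1-p$ (a receiving failure) when $k-1 > \delta_1$. The balance equation for $(k,d)$ therefore collapses to $\pi_{k,d} = (1-q)\pi_{k-1,d}$ for $2 \le k \le \delta_1+1$, and to $\pi_{k,d} = (1-p)\pi_{k-1,d}$ for $k \ge \delta_1+2$. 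Unrolling this recursion by induction on $k$ from the base $\pi_{1,d}$, the first $\delta_1$ steps each contribute a factor $(1-q)$, yielding $\pi_{k,d} = (1-q)^{k-1}\pi_{1,d}$ for $1 \le k \le \delta_1+1$, while each step beyond $k=\delta_1+1$ contributes an additional factor $(1-p)$, yielding $\pi_{k,d} = (1-q)^{\delta_1}(1-p)^{k-\delta_1-1}\pi_{1,d}$ for $k \ge \delta_1+1$. A one-line check confirms the two branches agree at the shared boundary $k=\delta_1+1$, so the expression in the statement follows exactly, with no approximation required.

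The main obstacle is not the algebra but the transition bookkeeping in the first step: one must argue cleanly that, for states in subspace $C$ with $k \ge 2$, all of the reset-type inflows vanish and the column decouples into an independent birth-type chain. This decoupling is precisely what makes the per-column recursion one-dimensional and hence solvable in closed form; it hinges on the joint conditions $k \ge 2$ and $d \ge \delta_2 \ge 2$, so I would state these explicitly and verify that the only surviving predecessor $(k-1,d)$ itself sits in $C$ (when $k-1 \ge \delta_1+1$) or on its lower boundary. Once this is settled, the remainder is a routine geometric unrolling whose change of rate at $k=\delta_1+1$ mirrors the switch in the relay's action from forwarding to receiving.
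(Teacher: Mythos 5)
Your proof is correct and follows essentially the same route as the paper: the paper's (very terse) argument likewise reads off from (\ref{MCtransition}) that a state $(k,d)$ in subspace $C$ with $k\ge 2$ has the single predecessor $(k-1,d)$, reached with probability $1-q$ when the relay forwards ($k-1\le\delta_1$, $d\ge\delta_2$) and $1-p$ when it receives ($k-1>\delta_1$), and then unrolls the resulting geometric recursion. You simply make explicit the bookkeeping (why the reset transitions to relay age $1$ or age gain $0$ cannot feed such states) that the paper delegates to its figures.
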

Lemmas \ref{property1} and \ref{property2} can be used to derive the stationary distribution for all the states within subspaces $B$ and $C$ once the stationary distribution of the first row is obtained, i.e., $\pi_{1,d}$. Let $\pi_{2,0} = x$, we then present Lemma \ref{property3}.
\begin{lemma}\label{property3}
The evolution of $\pi_{k,d}$ in subspace $A$, e.g., $\forall k$ and $d = 0$, is given by
\begin{equation}\label{fact3}
{\pi _{k,0}} =
\left\{{
\begin{matrix}
\begin{split}
   &{ {\pi _{k - 1,0}}\left( {1 - p} \right) + x{\left( {1 - q} \right)^{k - 1}},3 \le k \le {\delta _1} + 1 }, \\
   &{ {\pi _{k - 1,0}}\left( {1 - p} \right) = {\pi _{{\delta _1} + 1,0}}{\left( {1 - p} \right)^{k - {\delta _1} - 1}},k \ge {\delta _1} + 1},  \\
\end{split}
\end{matrix}
}\right.
\end{equation}
where the initial term of the evolution is $\pi_{2,0}$, i.e., $x$.
\end{lemma}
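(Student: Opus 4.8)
The plan is to read off (\ref{fact3}) directly from the global balance equations of the chain, $\pi_{v'}=\sum_{v}\Pr\{v'\mid v\}\pi_v$, specialised to a target state $v'=(k,0)$ on the synchronised column $d=0$. The first task is to determine, from the transition rules in (\ref{MCtransition}), which states can reach age gain $d=0$ in a single step. Only two mechanisms do so: the receiving-failure move $(k-1,0)\to(k,0)$ with probability $1-p$, and the forwarding-success move $(k-1,d)\to(k,0)$ with probability $q$, the latter being admissible only when the source lies in the forwarding region, i.e.\ $k-1\le\delta_1$ and $d\ge\delta_2$. The receiving-success move always resets the row to $1$, while both failure moves preserve $d\ge\delta_2>0$; hence neither feeds the $d=0$ column. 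Since $(1,0)\notin\mathcal V$ by (\ref{statespace}), the receiving-failure inflow is absent at $k=2$, which is why the recursion is anchored at $k=3$ with $\pi_{2,0}=x$.

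This inventory of inflows splits the argument at $k=\delta_1+1$. For $3\le k\le\delta_1+1$ the source row $k-1\le\delta_1$ still forwards, so the balance equation reads $\pi_{k,0}=(1-p)\pi_{k-1,0}+q\sum_{d\ge\delta_2}\pi_{k-1,d}$, which I intend to reduce to the first line of (\ref{fact3}). For $k>\delta_1+1$ the source row $k-1>\delta_1$ no longer forwards, the forwarding inflow vanishes entirely, and the equation degenerates to the pure receiving recursion $\pi_{k,0}=(1-p)\pi_{k-1,0}$; unrolling this from the boundary value $\pi_{\delta_1+1,0}$ produces the geometric tail $\pi_{\delta_1+1,0}(1-p)^{k-\delta_1-1}$ asserted in the second line of (\ref{fact3}).

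The crux, and the step I expect to be the main obstacle, is evaluating the forwarding-success inflow $q\sum_{d\ge\delta_2}\pi_{k-1,d}$ in closed form, since it is a sum over the entire infinite column $d\ge\delta_2$ of subspace $C$. The resolution is Lemma \ref{property2}: because $k-1\le\delta_1$ places every source state in the first branch of (\ref{fact2}), each entry factorises through its first-row value as a common geometric power of $1-q$ times $\pi_{1,d}$, so that row-dependent factor pulls outside the sum and leaves only $q\sum_{d\ge\delta_2}\pi_{1,d}$. This residual sum is pinned down, without invoking normalisation, by the balance equation at the anchor $(2,0)$, whose sole inflow is the forwarding success from row $1$: that equation reads $\pi_{2,0}=q\sum_{d\ge\delta_2}\pi_{1,d}$, hence $q\sum_{d\ge\delta_2}\pi_{1,d}=x$. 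Substituting collapses the forwarding inflow to a constant multiple of $x$ decaying geometrically in the row index, which is precisely the additive geometric term in the first line of (\ref{fact3}); assembling the two regimes and recalling the anchor $\pi_{2,0}=x$ then completes the proof.
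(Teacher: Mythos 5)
Your proposal is correct and essentially identical to the paper's own proof: the same inflow inventory yielding the balance equation $\pi_{k,0}=(1-p)\pi_{k-1,0}+q\sum_{d\ge\delta_2}\pi_{k-1,d}$ for $3\le k\le\delta_1+1$, the same collapse of the infinite column sum via Lemma \ref{property2}, the same anchor identity $x=q\sum_{d\ge\delta_2}\pi_{1,d}$ from the balance at state $(2,0)$, and the same pure $(1-p)$-recursion once $k-1>\delta_1$. One caveat that applies equally to the paper's own proof: carried out carefully, the substitution produces the additive term $x(1-q)^{k-2}$ (the source row is $k-1$), not the $x(1-q)^{k-1}$ printed in the lemma; this is an off-by-one typo in the lemma statement itself, since the closed form (8a) satisfies the recursion with exponent $k-2$, so the mismatch is not an error in your argument.
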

We can solve the stationary distribution for the states in subspace $A$ based on the evolution described in Lemma \ref{property3} and the initial term $x$. We use a mathematical induction method to obtain the stationary distribution expressions for subspace $A$ given in (8a) and (9a). Due to the limited space, we will not provide the detailed analysis of the mathematical induction method. The analytical expressions in (8a) and (9a) can be easily verified by substituting them into (\ref{fact3}).
\begin{lemma}\label{property4}
The evolution of $\pi_{k,d}$ for the first row in subspace $B$, e.g., $k=1$ and $2 \le d \le \delta_2$ is given below. When $\delta_1 \ge \delta_2-1$,
\begin{equation}\label{fact41}
\pi_{1,d} ={\pi _{1,d-1}} + p{\left( {1 - q} \right)^{d - 2}}x, 3\le d \le \delta_2,\\
\end{equation}
when $\delta_1 \le \delta_2-1$,
\begin{equation}\label{fact42}
\pi_{1,d} =
\left\{{
\begin{matrix}
\begin{split}
   &{{\pi _{1,d-1}} + p{{\left( {1 - q} \right)}^{d - 2}}x,3 \le d \le {\delta _1+1} }, \\
   &{{\pi _{1,d-1}}, {\delta _1}+2  \le d \le {\delta _2}},  \\
\end{split}
\end{matrix}
}\right.
\end{equation}
where the initial term of the evolution $\pi_{1,2} = px$.
\end{lemma}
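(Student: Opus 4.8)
The plan is to derive both recursions (\ref{fact41}) and (\ref{fact42}) from a single global-balance equation at the first-row state $(1,d)$, and then to extract the increment $\pi_{1,d}-\pi_{1,d-1}$ by a telescoping argument whose boundary term is supplied by Lemma \ref{property3}. First I would determine the predecessors of $(1,d)$ for $d\ge 2$. Scanning the transitions (\ref{MCtransition}), the age-incrementing moves $(k,d)\to(k+1,d)$ cannot reach a state with relay age $1$, and the forwarding-success move $(k,d)\to(k+1,0)$ only reaches states with age gap $0$; hence the sole incoming flow is the receiving-success transition $(k',d')\to(1,k'+d')$ of probability $p$. Thus $(k',d')$ feeds $(1,d)$ precisely when $k'+d'=d$ and $(k',d')$ is in the receiving mode of (\ref{DTR}). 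Because $d\le\delta_2$, every such anti-diagonal state obeys $d'\le d-1\le\delta_2-1<\delta_2$ and therefore receives; combined with the fact that the state space (\ref{statespace}) forbids $d'=1$, the admissible predecessors are $(d,0)$ and $(d-d',d')$ for $d'=2,\dots,d-1$, giving the balance equation
\[
\pi_{1,d}=p\,\pi_{d,0}+p\!\!\sum_{d'=2}^{d-1}\!\pi_{d-d',d'} .
\]

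Next I would collapse the off-diagonal sum. Lemma \ref{property1} applies on the whole range $2\le d'\le\delta_2-1$, so $\pi_{d-d',d'}=(1-p)^{\,d-d'-1}\pi_{1,d'}$ and the sum is a geometric convolution of the first-row masses. Writing the same identity at $d-1$ and forming the combination $\pi_{1,d}-(1-p)\pi_{1,d-1}$ telescopes this convolution: all interior terms cancel and only the $d'=d-1$ term and the $d=0$ column survive, which after rearrangement gives the clean increment
\[
\pi_{1,d}=\pi_{1,d-1}+p\bigl[\pi_{d,0}-(1-p)\pi_{d-1,0}\bigr].
\]
The base case $\pi_{1,2}=p\,\pi_{2,0}=px$ follows from the same balance equation, since $(2,0)$ is the unique admissible predecessor on the anti-diagonal $k'+d'=2$.

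It then remains to evaluate the boundary bracket through Lemma \ref{property3}, and this is exactly where the case distinction materialises. For $3\le d\le\delta_1+1$ the first branch of Lemma \ref{property3} yields $\pi_{d,0}-(1-p)\pi_{d-1,0}=x(1-q)^{\,d-2}$, producing the increment $px(1-q)^{d-2}$; since the hypothesis $\delta_1\ge\delta_2-1$ forces $d\le\delta_2\le\delta_1+1$, this single branch covers the entire range and establishes (\ref{fact41}). When instead $\delta_1\le\delta_2-1$, the range $3\le d\le\delta_2$ crosses $\delta_1+1$: for $d\le\delta_1+1$ the same computation gives the $px(1-q)^{d-2}$ increment, whereas for $\delta_1+2\le d\le\delta_2$ the second branch of Lemma \ref{property3}, $\pi_{d,0}=(1-p)\pi_{d-1,0}$, annihilates the bracket and leaves $\pi_{1,d}=\pi_{1,d-1}$, which is precisely the two-part form (\ref{fact42}).

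I expect the principal difficulty to lie not in the algebra but in the bookkeeping of the incoming flow: one must argue carefully that every anti-diagonal predecessor is genuinely in the receiving mode, that the forbidden gap $d'=1$ is excluded by (\ref{statespace}), and that the switch between the two branches of Lemma \ref{property3} at $k=\delta_1+1$ is what separates (\ref{fact41}) from (\ref{fact42}). Once the balance equation is set up correctly, the telescoping and the geometric-sum simplification are routine.
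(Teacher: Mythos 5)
Your proof is correct and follows essentially the same route as the paper's: the identical incoming-flow balance equation $\pi_{1,d}=p\pi_{d,0}+p\sum_{d'=2}^{d-1}\pi_{d-d',d'}$ collapsed via Lemma \ref{property1}, with the boundary term $\pi_{d,0}$ handled by Lemma \ref{property3} and the case split at $d=\delta_1+1$ separating (\ref{fact41}) from (\ref{fact42}); your explicit telescoping of $\pi_{1,d}-(1-p)\pi_{1,d-1}$ is just a cleaner phrasing of the paper's substitute-and-recognize step. One incidental remark: your increment $\pi_{d,0}-(1-p)\pi_{d-1,0}=x(1-q)^{d-2}$ uses the exponent $k-2$ rather than the $k-1$ printed in Lemma \ref{property3} (the printed exponent is a typo, inconsistent with (8a)), and the paper's own proof of this lemma makes the same silent correction, so this does not distinguish the two arguments.
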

Lemma \ref{property4} can be used to solve the stationary distribution for the first row of subspace $B$ by using a mathematical induction method. By jointly considering Lemma \ref{property1}, we can characterize the stationary distribution for all the states in subspace $B$ in (8b) and (9b) for the two cases when $\delta_1 \ge \delta_2-1$, and $\delta_1 \le \delta_2-1$, respectively. Similarly, the analytical results derived in (8b) and (9b) can be verified by substituting them into (\ref{fact41}) and (\ref{fact42}), respectively.
\begin{lemma}\label{property5}
The evolution of $\pi_{k,d}$ for the first row in subspace $C$, e.g., $k=1$ and $d \ge \delta_2$ is given~by
\begin{equation}\label{fact5}
\begin{split}
\pi_{k,d} \approx \left\{{
\begin{matrix}
\begin{split}
   &{\left( {1 - p} \right){\pi _{1,d-1}}, \delta_2+1 \le d \le {\delta _1+\delta_2} } \\
   &{\left[\left( {1 - p} \right){\pi _{1,d-1}} + p\left( {1 - q} \right)^{\delta_1}{\pi _{1,d - {\delta _1-1}}}\right], d \ge {\delta _1+\delta_2+1}}   \\
\end{split}
\end{matrix}
}\right.,
\end{split}
\end{equation}
where the initial term of the evolution $\pi_{1,\delta_2}$ is given by (\ref{initialterm}).
\end{lemma}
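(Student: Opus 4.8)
The plan is to derive the recursion in \eqref{fact5} by writing the exact global balance equation for each first-row state $(1,d)$ with $d \ge \delta_2+1$ and then simplifying it using the column relations already established in Lemmas~\ref{property1}--\ref{property3}. Inspecting the transition probabilities in \eqref{MCtransition}, the only transition that lands on a state with $k=1$ is (29d), the receiving-and-success transition $(k,d')\to(1,k+d')$ with probability $p$, which is active precisely when $(k,d')$ is a receiving state, i.e. $k>\delta_1$ or $d'<\delta_2$. Hence every predecessor of $(1,d)$ lies on the anti-diagonal $k+d'=d$ and is in receiving mode, giving $\pi_{1,d}=p\sum\pi_{k,d'}$ summed over these states. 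I would partition this sum into three groups according to the subspace of the predecessor: Group~A with $d'=0$ and $k=d$; Group~B with $2\le d'\le\delta_2-1$ and $k=d-d'$; and Group~C with $d'\ge\delta_2$ and $k=d-d'>\delta_1$, the last being the receiving states of subspace $C$.

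First I would observe that the receiving condition $k=d-d'>\delta_1$ in Group~C forces $d'\le d-\delta_1-1$, so combined with $d'\ge\delta_2$ this group is empty when $\delta_2+1\le d\le\delta_1+\delta_2$ and nonempty when $d\ge\delta_1+\delta_2+1$; this dichotomy is exactly the two-case split in \eqref{fact5}. The core of the argument is then to compare the balance equation for $(1,d)$ against $(1-p)$ times the balance equation for $(1,d-1)$ and show the bulk of the terms cancel. Using Lemma~\ref{property1}, each Group~B term satisfies $\pi_{d-d',d'}=(1-p)\pi_{d-1-d',d'}$ exactly, so Group~B cancels term by term; using Lemma~\ref{property2} (second line, valid since $k\ge\delta_1+1$ for receiving states), each Group~C term likewise obeys $\pi_{d-d',d'}=(1-p)\pi_{d-1-d',d'}$, so the two Group~C sums cancel except for the single surviving boundary term at $d'=d-\delta_1-1$, where $k=\delta_1+1$ and Lemma~\ref{property2} gives $\pi_{\delta_1+1,d-\delta_1-1}=(1-q)^{\delta_1}\pi_{1,d-\delta_1-1}$. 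Multiplying by the leading $p$ yields exactly the $p(1-q)^{\delta_1}\pi_{1,d-\delta_1-1}$ term in \eqref{fact5}.

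The main obstacle is Group~A, the $d'=0$ column: the clean cancellation requires $\pi_{d,0}\approx(1-p)\pi_{d-1,0}$, which Lemma~\ref{property3} shows is exact only for $d\ge\delta_1+2$, while for $\delta_2\le d\le\delta_1+1$ it carries the extra correction $x(1-q)^{d-1}$. I would invoke the approximation of dropping this correction (the same one flagged in the footnote above), which is tight for large $q$ and becomes exact as $q\to1$; this is precisely the source of the $\approx$ in \eqref{fact5} and of the approximate nature of (8c) in the regime $\delta_1\ge\delta_2-1$. Collecting the three groups then gives $\pi_{1,d}\approx(1-p)\pi_{1,d-1}$ in the first range and $\pi_{1,d}\approx(1-p)\pi_{1,d-1}+p(1-q)^{\delta_1}\pi_{1,d-\delta_1-1}$ in the second, with the recursion anchored at the base case $\pi_{1,\delta_2}$ supplied by \eqref{initialterm}, which I would obtain separately from the balance equation at the $B$--$C$ boundary.
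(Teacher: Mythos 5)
Your proposal is correct and follows essentially the same route as the paper: the same anti-diagonal balance equation for $(1,d)$, the same use of Lemmas \ref{property1}--\ref{property3} to pull out a factor $(1-p)$ term by term, the same approximation $\pi_{d,0}\approx(1-p)\pi_{d-1,0}$ as the sole source of the $\approx$, and the same boundary term $\pi_{\delta_1+1,d-\delta_1-1}=(1-q)^{\delta_1}\pi_{1,d-\delta_1-1}$ from Lemma \ref{property2} producing the second case. Your "cancel against $(1-p)$ times the balance equation for $(1,d-1)$" framing is just a restatement of the paper's factor-and-recognize step, including the observation that the emptiness of the subspace-$C$ predecessor group exactly yields the two-case split.
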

Due to the complicated evolution of the states in subspace $C$ when $\delta_1 \ge \delta_2-1$, we use an approximation in deriving Lemma \ref{property5} for the first row in subspace $C$. Similar to Lemma \ref{property4}, we can characterize (8c) and (9c) from (\ref{fact5}). Note that (8c) is an approximate expression due to the adopted approximation. We have now characterized the stationary distribution of all the states in a closed-form in (8) and (9) based on $x$, i.e. $\pi_{2,0}$. At last, we show the following Lemma to solve $x$ in a closed-form.
\begin{lemma}\label{lemmax}
The term $x$, i.e., $\pi_{2,0}$ is given in (\ref{x}).
\end{lemma}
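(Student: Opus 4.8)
The plan is to pin down the single remaining constant $x=\pi_{2,0}$ by enforcing the normalization condition $\sum_{(k,d)\in\mathcal{V}}\pi_{k,d}=1$. By Lemmas \ref{property1}--\ref{property5} every stationary probability has already been written as an explicit multiple of $x$, so this condition reduces to one scalar identity of the form $S\cdot x=1$, where $S=S(p,q,\delta_1,\delta_2)$ is the total of all normalized weights; solving gives $x=1/S$, which I expect to coincide with (\ref{x}) after simplification. I would carry out the summation over the three subspaces $A$, $B$, $C$ exactly as the distribution is organized in (\ref{stationaryMC1}) and (\ref{stationaryMC2}), taking care to assign the boundary column $d=\delta_2$ to a single subspace so that the pieces stay disjoint; the two boundary expressions agree there, which serves as a consistency check between (8b) and (\ref{initialterm}).

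For subspace $A$ (the column $d=0$, $k\ge2$) I sum the two geometric pieces of (8a)/(9a) split at the threshold $k=\delta_1+1$; each collapses in closed form, giving $\Sigma_A$ as a rational multiple of $x$. For subspace $B$ ($2\le d\le\delta_2$) I first collapse each column using Lemma \ref{property1}: since $\pi_{k,d}=(1-p)^{k-1}\pi_{1,d}$, summing over $k\ge1$ yields $\sum_k\pi_{k,d}=\pi_{1,d}/p$, after which I sum the first-row weights $\pi_{1,d}$ over $d$ using Lemma \ref{property4}. Both are finite geometric sums, so $\Sigma_B$ follows; the work here is routine bookkeeping of summation ranges at the threshold indices.

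The main obstacle is the subspace $C$ contribution, because its explicit first-row weights in (8c)/(9c) carry the binomial-type coefficients $\frac{\left(N+l-1\right)!}{l!\left(N-1\right)!}$ together with the floor/mod bookkeeping for $n$, $m$, $N$ from (\ref{n})--(\ref{m}), so $\sum_{d\ge\delta_2}\pi_{1,d}$ is not a plain geometric series. I would avoid resumming those coefficients directly and instead telescope the \emph{recursion} of Lemma \ref{property5}. Writing $T=\sum_{d\ge\delta_2}\pi_{1,d}$ and summing (\ref{fact5}) over all $d\ge\delta_2+1$ (the tail terms vanish since the stationary distribution exists), the shift-by-one term contributes $(1-p)T$ and the shift-by-$(\delta_1+1)$ term contributes $p(1-q)^{\delta_1}T$, so that
\begin{equation*}
T-\pi_{1,\delta_2}=\left[(1-p)+p(1-q)^{\delta_1}\right]T,\qquad\text{hence}\qquad T=\frac{\pi_{1,\delta_2}}{p\left[1-(1-q)^{\delta_1}\right]}.
\end{equation*}
Collapsing each column of $C$ through Lemma \ref{property2} then multiplies $T$ by the constant $C_0=\frac{1-(1-q)^{\delta_1+1}}{q}+(1-q)^{\delta_1}\frac{1-p}{p}$, giving $\Sigma_C=C_0\,T$ in closed form without ever touching the binomial sums.

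Finally, I would substitute $\pi_{1,\delta_2}$ from (\ref{initialterm}), add $\Sigma_A+\Sigma_B+\Sigma_C$, set the total equal to $1$, and solve for $x$. Because the pieces (\ref{stationaryMC1}) and (\ref{stationaryMC2}) differ between the regimes $\delta_1\ge\delta_2-1$ and $\delta_1\le\delta_2-1$, and $\pi_{1,\delta_2}$ differs accordingly, I would treat the two regimes separately; in each the denominator of $x$ collapses to the corresponding expression in (\ref{x}), with the $\delta_1\ge\delta_2-1$ case inheriting the same tight approximation already used in Lemma \ref{property5}. As an independent cross-check I would verify the stationarity flux-balance identity $q\sum_{(k,d):\,k\le\delta_1,\,d\ge\delta_2}\pi_{k,d}=p\sum_{k\ge2}\pi_{k,0}$, which equates the probability flow that resets the age gain to $0$ with the flow leaving the $d=0$ column and is consistent with the forwarding-rate result of Theorem \ref{TheoremTP}. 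I expect the bulk of the effort to be careful handling of the summation ranges at the threshold indices $k=\delta_1+1$ and $d=\delta_2$ rather than any conceptual difficulty.
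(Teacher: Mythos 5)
There is a genuine gap, and it sits exactly where you anticipated the least trouble: the subspace $C$ sum in the regime $\delta_1 \ge \delta_2-1$. Your telescoping of Lemma \ref{property5} gives
$T=\sum_{d\ge\delta_2}\pi_{1,d}=\frac{\pi_{1,\delta_2}}{p\left[1-(1-q)^{\delta_1}\right]}$,
and with the exact value $\pi_{1,\delta_2}=\frac{p\left[1-(1-q)^{\delta_2-1}\right]}{q}x$ from (\ref{initialterm}) this yields
$T=\frac{1-(1-q)^{\delta_2-1}}{1-(1-q)^{\delta_1}}\cdot\frac{x}{q}$.
But the true value is $T=x/q$ exactly: state $(2,0)$ can be entered \emph{only} from the first-row states $(1,d)$, $d\ge\delta_2$, via a successful forwarding (probability $q$), so stationarity of $\pi_{2,0}$ forces $x=qT$. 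When $\delta_1>\delta_2-1$ your telescoped $T$ carries the extra factor $\frac{1-(1-q)^{\delta_2-1}}{1-(1-q)^{\delta_1}}<1$, because the recursion in Lemma \ref{property5} is itself only approximate in that regime (it rests on $\pi_{d,0}\approx(1-p)\pi_{d-1,0}$). Feeding this $T$ into the normalization produces a denominator different from (\ref{x}); your proof therefore establishes a different (approximate) constant, not the stated one, and in particular cannot support the paper's claim that (\ref{x}) is \emph{exact} despite (8c) being approximate.

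The fix is small and is precisely what the paper does: never touch Lemma \ref{property5} when computing $\Sigma_C$. Use the entry-flux identity $x=q\sum_{d\ge\delta_2}\pi_{1,d}$ directly, so $\sum_{d\ge\delta_2}\pi_{1,d}=x/q$ with no approximation, then collapse the columns of $C$ with Lemma \ref{property2} exactly as you did (your constant $C_0$ agrees with the paper's $\frac{p+(q-p)(1-q)^{\delta_1}}{pq}$), giving $\Sigma_C=\frac{p+(q-p)(1-q)^{\delta_1}}{pq^2}x$ in both regimes. Your subspace $A$ and $B$ computations match the paper and are fine. Ironically, you had the right tool in hand — your proposed "cross-check" flux-balance identity is a cousin of the relation you actually need — but you relegated it to a sanity check instead of making it the engine of the subspace $C$ computation; used as the main step, it makes the two regimes uniform and renders (\ref{x}) exact.
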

We obtain Lemma \ref{lemmax} by using the fact that the summation of the stationary distribution for all the states is equal to $1$. It is worth emphasizing that we obtain the summation of the stationary distribution without using Lemma \ref{property5} where an approximation is adopted. Therefore, the expression of $x$ given in (\ref{x}) is exact. Till this end, we have derived the stationary distribution in a closed-form and finished the proof of Theorem \ref{TheoremDistribution}. Note that the proofs for all the Lemmas in this section are given in Appendix B.

\section{Proof of Theorems \ref{TheoremAoI} and \ref{TheoremTP}: Performance Analysis for DTR}
In this section, by using the age stationary distribution of the $\theta_{\rm{DTR}}$ policy presented in Theorem \ref{TheoremDistribution}, we calculate the average AoI and the average number of forwarding actions for the proposed $\theta_{\rm{DTR}}$ policy. All the proofs of the Lemmas in this section are given in Appendix C.
\subsection{Proof of Theorem \ref{TheoremAoI}}
Recall that each state $\left(k,d\right)$ is defined as the instantaneous AoI of $R$ being $k$, and the AoI gain between $D$ and $R$ being $d$. The average AoI of the considered cooperative system using the $\theta_{\rm{DTR}}$ policy can thus be expressed as
\begin{equation}\label{DTRexpression}
\begin{split}
\bar \Delta \left(\theta_{\rm{DTR}}\right)&=\sum\limits_{\forall k,d} {\left( {k + d} \right){\pi _{k,d}}} \\
&=\sum\limits_{k = 2}^\infty  {k{\pi _{k,0}}}  + \sum\limits_{k = 1}^\infty  {\sum\limits_{d = 2}^{{\delta _2} - 1} {\left( {k + d} \right){\pi _{k,d}}} } + \sum\limits_{k = 1}^\infty  {\sum\limits_{d = {\delta _2}}^\infty  {\left( {k + d} \right){\pi _{k,d}}} }. \\
\end{split}
\end{equation}
In the following, we characterize the three terms $\sum\limits_{k = 2}^\infty  {k{\pi _{k,0}}}$, $\sum\limits_{k = 1}^\infty  {\sum\limits_{d = 2}^{{\delta _2} - 1} {\left( {k + d} \right){\pi _{k,d}}} }$ and $\sum\limits_{k = 1}^\infty  {\sum\limits_{d = {\delta _2}}^\infty  {\left( {k + d} \right){\pi _{k,d}}} }$ in (\ref{DTRexpression}) by the Lemmas \ref{AoItermlemma}-\ref{AoIterm3lemma} and they represent the AoI terms for subspaces $A$, $B$ and $C$, respectively.
\begin{lemma}\label{AoItermlemma}
The term $\sum\limits_{k = 2}^\infty  {k{\pi _{k,0}}}$ in (\ref{DTRexpression}) for subspace $A$ is given by
\begin{equation}\label{AoIterm0}
\sum\limits_{k = 2}^\infty  {k{\pi _{k,0}}} ={{\left( {p + q} \right)\left[ {1 - {{\left( {1 - q} \right)}^{{\delta _1}}}} \right] - pq{\delta _1}{{\left( {1 - q} \right)}^{{\delta _1}}}} \over {{p^2}{q^2}}}x.
\end{equation}
\end{lemma}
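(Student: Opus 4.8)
The plan is to substitute the closed-form stationary distribution for subspace $A$ directly into $\sum_{k=2}^\infty k\pi_{k,0}$ and evaluate the resulting arithmetic--geometric series. Because the expression for $\pi_{k,0}$ coincides in the two cases (8a) and (9a) of Theorem \ref{TheoremDistribution}, a single computation covers both regimes $\delta_1\ge\delta_2-1$ and $\delta_1\le\delta_2-1$. First I would split the sum at the threshold $k=\delta_1+1$, using the top branch of (8a)/(9a) on the finite head and the bottom branch on the infinite tail, obtaining
\begin{equation}
\sum_{k=2}^\infty k\pi_{k,0} = \frac{x}{q-p}\left[\sum_{k=2}^{\delta_1+1} k\left((1-p)^{k-1}-(1-q)^{k-1}\right) + \left((1-p)^{\delta_1}-(1-q)^{\delta_1}\right)\sum_{k=\delta_1+2}^\infty k(1-p)^{k-\delta_1-1}\right].
\end{equation}
Each series converges since $0<1-p,1-q<1$, so the splitting is legitimate.

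Next I would exploit that the homogeneous factor $(1-p)^{k-1}$ is common to both branches: in the tail, $(1-p)^{\delta_1}(1-p)^{k-\delta_1-1}=(1-p)^{k-1}$. Regrouping around this common factor collapses all the $(1-p)$ contributions into the single full series $\sum_{k\ge2}k(1-p)^{k-1}$, while the $(1-q)$ contributions reduce to one finite partial sum $\sum_{k=2}^{\delta_1+1}k(1-q)^{k-1}$ together with one geometric tail $(1-q)^{\delta_1}\sum_{k\ge\delta_1+2}k(1-p)^{k-\delta_1-1}$. Every piece is then a standard arithmetic--geometric sum, which I would evaluate by differentiating the geometric series, i.e. $\sum_{k\ge1}kr^{k-1}=(1-r)^{-2}$ and its partial-sum analogue, applied with $r=1-p$ and $r=1-q$.

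The final step is to combine the evaluated sums over the common prefactor $x/(q-p)$ and simplify to the stated form. I expect the main obstacle to be purely algebraic bookkeeping: the target has denominator $p^2q^2$, contains no $(1-p)^{\delta_1}$ term, and carries no factor $q-p$, so I must verify that the $(1-p)^{\delta_1}$ contributions cancel completely and that the numerator produces a compensating factor $q-p$. The apparent singularity at $p=q$ is in fact removable --- it reflects the resonance denominator of the first-order recurrence of Lemma \ref{property3}, whose homogeneous part behaves like $(1-p)^{k}$ and whose particular part behaves like $(1-q)^{k}$ --- so the summed quantity extends continuously through $p=q$. Tracking these cancellations to land exactly on $\big((p+q)[1-(1-q)^{\delta_1}]-pq\delta_1(1-q)^{\delta_1}\big)/(p^2q^2)\,x$ is the delicate part, whereas convergence and the structural splitting are immediate.
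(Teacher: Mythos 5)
Your proposal is correct and follows essentially the same route as the paper: the paper's proof likewise substitutes the (identical) expressions (8a)/(9a), splits the sum at $k=\delta_1+1$, and evaluates the resulting arithmetic--geometric series, compressing the algebra into a single "sum of geometric sequence" step. Your regrouping of the $(1-p)$ contributions into the single full series $\sum_{k\ge 2}k(1-p)^{k-1}$ is a tidy organizational device that automatically disposes of the $(1-p)^{\delta_1}$ terms and makes the emergence of the overall $q-p$ factor transparent, but it is the same computation the paper performs.
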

\begin{lemma}\label{AoIterm2lemma}
The term $\sum\limits_{k = 1}^\infty  {\sum\limits_{d = 2}^{{\delta _2} - 1} {\left( {k + d} \right){\pi _{k,d}}} }$ in (\ref{DTRexpression}) for subspace $B$ is given in (\ref{AoIterm2case1}) and (\ref{AoIterm2case2}) for the two cases that $\delta_1 \ge \delta_2-1$, and $\delta_1 \le \delta_2-1$, respectively.
\begin{equation}\label{AoIterm2case1}
\begin{split}
\sum\limits_{k = 1}^\infty  {\sum\limits_{d = 2}^{{\delta _2} - 1} {\left( {k + d} \right){\pi _{k,d}}} }& = {{q{\delta _2} - q - \left[ {1 - {{\left( {1 - q} \right)}^{{\delta _2} - 1}}} \right]} \over {p{q^2}}}x+{{\left( {{\delta _2} + 1} \right)\left( {{\delta _2} - 2} \right)} \over {2q}}x \\
&\quad - {{1 - {q^2} - \left( {1 - q + q{\delta _2}} \right){{\left( {1 - q} \right)}^{{\delta _2} - 1}}} \over {{q^3}}}x, \delta_1 \ge \delta_2-1,
\end{split}
\end{equation}
\begin{equation}\label{AoIterm2case2}
\begin{split}
&\sum\limits_{k = 1}^\infty  {\sum\limits_{d = 2}^{{\delta _2} - 1} {\left( {k + d} \right){\pi _{k,d}}} } = {{q{\delta _2} - q - 1 + \left( {1 - q{\delta _2} + q{\delta _1} + q} \right){{\left( {1 - q} \right)}^{{\delta _1}}}} \over {{pq^2}}}x+{{\left( {{\delta _2} + 1} \right)\left( {{\delta _2} - 2} \right)} \over {2q}}x - \\
&\quad {{1 - {q^2} - \left( {1 + q + q{\delta _1}} \right){{\left( {1 - q} \right)}^{{\delta _1} + 1}}} \over {{q^3}}}x - {{{{\left( {1 - q} \right)}^{{\delta _1}}}\left( {{\delta _1} + {\delta _2} + 1} \right)\left( {{\delta _2} - {\delta _1} - 2} \right)} \over {2q}}x, \delta_1 \le \delta_2-1.
\end{split}
\end{equation}
\end{lemma}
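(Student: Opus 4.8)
The plan is to substitute the closed-form subspace-$B$ stationary distribution from Theorem~\ref{TheoremDistribution} directly into the double sum $\sum_{k=1}^\infty\sum_{d=2}^{\delta_2-1}(k+d)\pi_{k,d}$ and evaluate it in closed form. The key structural feature I would exploit is that on every piece of subspace $B$ the distribution factorizes as $\pi_{k,d}=\frac{p}{q}(1-p)^{k-1}f(d)\,x$, where the $k$-dependence is always carried by the geometric factor $(1-p)^{k-1}$ and $f(d)$ depends on $d$ alone. Writing $(k+d)\pi_{k,d}$ as the sum of a $k$-weighted and a $d$-weighted term and summing over $k$ first, the inner sums collapse to the two elementary series $\sum_{k=1}^\infty(1-p)^{k-1}=1/p$ and $\sum_{k=1}^\infty k(1-p)^{k-1}=1/p^2$. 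This reduces the whole double sum to
\begin{equation}
\sum_{k=1}^\infty\sum_{d=2}^{\delta_2-1}(k+d)\pi_{k,d}=\frac{x}{pq}\sum_{d=2}^{\delta_2-1}f(d)+\frac{x}{q}\sum_{d=2}^{\delta_2-1}d\,f(d),
\end{equation}
so the remaining task is purely one of evaluating finite sums in the single variable $d$.

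For the case $\delta_1\ge\delta_2-1$ we have $f(d)=1-(1-q)^{d-1}$ uniformly on $2\le d\le\delta_2-1$, and the required ingredients are the arithmetic sums $\sum_{d=2}^{\delta_2-1}1=\delta_2-2$ and $\sum_{d=2}^{\delta_2-1}d=\frac{(\delta_2+1)(\delta_2-2)}{2}$, the finite geometric sum $\sum_{d=2}^{\delta_2-1}(1-q)^{d-1}=\frac{(1-q)-(1-q)^{\delta_2-1}}{q}$, and the arithmetic-geometric sum $\sum_{d=2}^{\delta_2-1}d(1-q)^{d-1}$, which I would evaluate via the standard identity $\sum_{d=1}^{n}d\,s^{d-1}=\frac{1-(n+1)s^n+ns^{n+1}}{(1-s)^2}$ with $s=1-q$ and $n=\delta_2-1$. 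Collecting and simplifying then reproduces the three-term expression (\ref{AoIterm2case1}): the $\frac{x}{pq}\sum f(d)$ contribution yields the first bracket, the $\sum d$ part of $\frac{x}{q}\sum d\,f(d)$ gives the $\frac{(\delta_2+1)(\delta_2-2)}{2q}x$ term, and the arithmetic-geometric part produces the final $1/q^3$ term.

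For the case $\delta_1\le\delta_2-1$ the only change is that $f(d)$ is piecewise, equal to $1-(1-q)^{d-1}$ on $2\le d\le\delta_1+1$ and to the constant $1-(1-q)^{\delta_1}$ on $\delta_1+2\le d\le\delta_2-1$. I would accordingly split both $d$-sums at $d=\delta_1+1$. On the lower range the computation has exactly the same form as in the first case but truncated at $\delta_1+1$; on the upper range $f$ is constant in $d$, so the sums reduce to $(\delta_2-\delta_1-2)[1-(1-q)^{\delta_1}]$ and $[1-(1-q)^{\delta_1}]\sum_{d=\delta_1+2}^{\delta_2-1}d=[1-(1-q)^{\delta_1}]\frac{(\delta_1+\delta_2+1)(\delta_2-\delta_1-2)}{2}$. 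Expanding the factor $1-(1-q)^{\delta_1}$ in this last expression is precisely what generates the extra term $-\frac{(1-q)^{\delta_1}(\delta_1+\delta_2+1)(\delta_2-\delta_1-2)}{2q}x$ in (\ref{AoIterm2case2}); merging all pieces and simplifying gives the stated four-term form.

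The substitution and the $k$-summation are immediate, so the main obstacle is entirely computational: it lies in the arithmetic-geometric sum $\sum d(1-q)^{d-1}$ and, for the second case, in the careful bookkeeping of the split at $d=\delta_1+1$ together with the consolidation of several partial fractions in $q$ into the compact stated expressions. As a consistency safeguard I would check that at the boundary $\delta_1=\delta_2-1$ the upper range $\delta_1+2\le d\le\delta_2-1$ is empty, so that (\ref{AoIterm2case2}) collapses to (\ref{AoIterm2case1}), confirming that both expressions agree where the two cases overlap.
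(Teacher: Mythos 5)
Your proposal is correct and follows essentially the same route as the paper: it exploits the geometric factorization of $\pi_{k,d}$ in subspace $B$ (the content of Lemma \ref{property1}), sums over $k$ first to reduce the double sum to $\frac{x}{pq}\sum_d f(d)+\frac{x}{q}\sum_d d\,f(d)$ — exactly the paper's decomposition into $\frac{1}{p^2}\sum_d\pi_{1,d}+\frac{1}{p}\sum_d d\,\pi_{1,d}$ — and then evaluates the finite $d$-sums with arithmetic, geometric, and arithmetic--geometric identities, splitting at $d=\delta_1+1$ in the case $\delta_1\le\delta_2-1$. The only cosmetic differences are that the paper invokes the combined series $\sum_{n=0}^{\infty}(n+a)(1-p)^n=\frac{pa+1}{p^2}$ and reuses the sums already computed in the proof of Lemma \ref{lemmax} for the $\sum_d\pi_{1,d}$ part, while you recompute them directly and add a useful boundary check at $\delta_1=\delta_2-1$.
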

\begin{lemma}\label{AoIterm3lemma}
The term $\sum\limits_{k = 1}^\infty  {\sum\limits_{d = {\delta _2}}^\infty  {\left( {k + d} \right){\pi _{k,d}}} }$ in (\ref{DTRexpression}) for subspace $C$ is by
\begin{equation}\label{AoIterm3case1}
\begin{split}
&\sum\limits_{k = 1}^\infty  {\sum\limits_{d = {\delta _2}}^\infty  {\left( {k + d} \right){\pi _{k,d}}} }\\
&\approx {{{{\left( {1 - q} \right)}^{{\delta _1}}}\left( {{{{\delta _1} + {\delta _2}} \over p} + {1 \over {{p^2}}} - {{{\delta _1} + {\delta _2}} \over q} - {1 \over {{q^2}}}} \right) + {{{\delta _2}} \over q} + {1 \over {{q^2}}}} \over q}x +\left[ {\left( {{1 \over p} - {1 \over q}} \right){{\left( {1 - q} \right)}^{{\delta _1}}} + {1 \over q}} \right]x \times\\
&\quad \left\{ {{{{\delta _1}{{\left( {1 - q} \right)}^{{\delta _1}}}\left[ {1 - {{\left( {1 - q} \right)}^{{\delta _2} - 1}}} \right]} \over {q{{\left[ {1 - {{\left( {1 - q} \right)}^{{\delta _1}}}} \right]}^2}}} + {{\left[ {1 - {{\left( {1 - q} \right)}^{{\delta _2} - 1}}} \right]} \over {pq{{\left[ {1 - {{\left( {1 - q} \right)}^{{\delta _1}}}} \right]}^2}}} - {{\left[ {1 - {{\left( {1 - q} \right)}^{{\delta _2} - 1}}} \right]} \over {q\left[ {1 - {{\left( {1 - q} \right)}^{{\delta _1}}}} \right]}}} \right\}, \delta_1 \ge \delta_2-1,\\
\end{split}
\end{equation}
\begin{equation}\label{AoIterm3case2}
\begin{split}
&\sum\limits_{k = 1}^\infty  {\sum\limits_{d = {\delta _2}}^\infty  {\left( {k + d} \right){\pi _{k,d}}} }\\
&={{{{\left( {1 - q} \right)}^{{\delta _1}}}\left( {{{{\delta _1} + {\delta _2}} \over p} + {1 \over {{p^2}}} - {{{\delta _1} + {\delta _2}} \over q} - {1 \over {{q^2}}}} \right) + {{{\delta _2}} \over q} + {1 \over {{q^2}}}} \over q}x +\\
&\quad \left\{ {{{{\delta _1}{{\left( {1 - q} \right)}^{{\delta _1}}}} \over {q\left[ {1 - {{\left( {1 - q} \right)}^{{\delta _1}}}} \right]}} + {1 \over {pq\left[ {1 - {{\left( {1 - q} \right)}^{{\delta _1}}}} \right]}} - {1 \over q}} \right\}\left[ {\left( {{1 \over p} - {1 \over q}} \right){{\left( {1 - q} \right)}^{{\delta _1}}} + {1 \over q}} \right]x, \delta_1 \le \delta_2-1.\\
\end{split}
\end{equation}
\end{lemma}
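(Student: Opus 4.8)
The plan is to exploit the exact column structure of Lemma~\ref{property2}, which says that throughout subspace $C$ the dependence of $\pi_{k,d}$ on $k$ factors away from $d$: for every $d\ge\delta_2$ one has $\pi_{k,d}=f(k)\,\pi_{1,d}$, where $f(k)=(1-q)^{k-1}$ for $1\le k\le\delta_1+1$ and $f(k)=(1-q)^{\delta_1}(1-p)^{k-\delta_1-1}$ for $k\ge\delta_1+1$. First I would collapse the inner sum over $k$ by introducing the two column moments $S_0=\sum_{k\ge1}f(k)$ and $S_1=\sum_{k\ge1}kf(k)$, so that $\sum_{k\ge1}(k+d)\pi_{k,d}=(S_1+dS_0)\pi_{1,d}$ and therefore
\begin{equation}
\sum_{k\ge1}\sum_{d\ge\delta_2}(k+d)\pi_{k,d}=S_1T_0+S_0T_1,\qquad T_0=\sum_{d\ge\delta_2}\pi_{1,d},\quad T_1=\sum_{d\ge\delta_2}d\,\pi_{1,d}.
\end{equation}
Each of $S_0,S_1$ is a geometric / arithmetico-geometric series that splits at the boundary $k=\delta_1+1$; summing the two pieces yields closed forms, in particular $S_0=\left(\tfrac1p-\tfrac1q\right)(1-q)^{\delta_1}+\tfrac1q$, which is precisely the bracket multiplying the second term of (\ref{AoIterm3case1}) and (\ref{AoIterm3case2}).

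The harder step is the two row moments $T_0$ and $T_1$, because the closed form of $\pi_{1,d}$ supplied by Theorem~\ref{TheoremDistribution} is an unwieldy binomial sum. Rather than summing that expression term by term, I would sum the first-row recursion of Lemma~\ref{property5}. Summing it over all $d\ge\delta_2+1$ and re-indexing the two shifted contributions---the $(1-p)\pi_{1,d-1}$ term and the $p(1-q)^{\delta_1}\pi_{1,d-\delta_1-1}$ term both fold back onto $T_0$---produces $T_0-\pi_{1,\delta_2}=\big[(1-p)+p(1-q)^{\delta_1}\big]T_0$, hence $T_0=\pi_{1,\delta_2}/\big(p[1-(1-q)^{\delta_1}]\big)$. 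Multiplying the recursion by $d$ before summing repeats the same telescoping, the shifts now also contributing the offsets $+1$ and $+(\delta_1+1)$; this gives a single linear equation for $T_1$ with solution $T_1=\big[\delta_2\pi_{1,\delta_2}+((1-p)+p(\delta_1+1)(1-q)^{\delta_1})T_0\big]/\big(p[1-(1-q)^{\delta_1}]\big)$. I would then insert the case-dependent value of $\pi_{1,\delta_2}$ from (\ref{initialterm}) together with the normaliser $x$ from (\ref{x}).

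Finally I would assemble $S_1T_0+S_0T_1$ and simplify, organising it as $(S_1+\delta_2S_0)T_0+S_0(T_1-\delta_2T_0)$ so that the first group produces the standalone leading term of the stated expressions and the second produces the $S_0\times\{\cdots\}$ product. The two regimes share the same algebra and differ only through $\pi_{1,\delta_2}$: when $\delta_1\le\delta_2-1$ the factor $\pi_{1,\delta_2}=\tfrac{p[1-(1-q)^{\delta_1}]}{q}x$ cancels the denominator $p[1-(1-q)^{\delta_1}]$ of $T_0$ and gives $T_0=x/q$, so (\ref{AoIterm3case2}) comes out exact; when $\delta_1\ge\delta_2-1$ the first-row values inherit the approximation flagged in Lemma~\ref{property5} (which becomes exact as $q\to1$), so the same computation yields the approximate (\ref{AoIterm3case1}). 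I expect the two principal obstacles to be the careful handling of the $k=\delta_1+1$ split in the arithmetico-geometric sum $S_1$ and the shift-and-telescope evaluation of $T_1$; once these closed forms are secured, the remaining simplification is routine though lengthy.
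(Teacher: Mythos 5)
Your decomposition $S_1T_0+S_0T_1$ is exactly the paper's $\Theta_1\sum_d\pi_{1,d}+\Theta_2\sum_d d\,\pi_{1,d}$ (your $S_0,S_1$ are the paper's $\Theta_2,\Theta_1$), and your telescoping of the Lemma~\ref{property5} recursion is actually a cleaner route for the hard part than the paper's: the paper evaluates $\sum_{d\ge\delta_2}(d-\delta_2)\pi_{1,d}$ by brute-force summation of the explicit binomial-sum form (8c)/(9c), invoking a WolframAlpha identity, whereas your shift-and-telescope argument gives
$T_1-\delta_2T_0=\frac{(1-p)+p(\delta_1+1)(1-q)^{\delta_1}}{p^2\left[1-(1-q)^{\delta_1}\right]^2}\,\pi_{1,\delta_2}$,
which is algebraically identical to the paper's braced expression in (\ref{term3lemmaeq3}) (put its three terms over the common denominator $p^2[1-(1-q)^{\delta_1}]^2$). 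Combined with $\pi_{1,\delta_2}$ from (\ref{initialterm}), this reproduces the second term of both (\ref{AoIterm3case1}) and (\ref{AoIterm3case2}), and your case $\delta_1\le\delta_2-1$ assembly recovers (\ref{AoIterm3case2}) exactly.

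There is, however, a genuine gap in the case $\delta_1\ge\delta_2-1$: your leading term uses the recursion-derived $T_0=\pi_{1,\delta_2}/\bigl(p[1-(1-q)^{\delta_1}]\bigr)$, which with (\ref{initialterm}) equals $\frac{1-(1-q)^{\delta_2-1}}{q\left[1-(1-q)^{\delta_1}\right]}x$. That is \emph{not} $x/q$ when $\delta_1>\delta_2-1$. The stated formula's leading term is $\frac{\Theta_1+\delta_2\Theta_2}{q}x$, i.e., it uses the \emph{exact} value $T_0=x/q$, which follows not from the approximate recursion but from the flow-balance equation of state $(2,0)$: that state is entered only from states $(1,d)$, $d\ge\delta_2$, each with probability $q$, so $x=q\sum_{d\ge\delta_2}\pi_{1,d}$ holds exactly (this is the relation the paper establishes below (\ref{col01equation}) and reuses here). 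As written, your assembly therefore produces a first term off by the factor $\frac{1-(1-q)^{\delta_2-1}}{1-(1-q)^{\delta_1}}$ — a different approximation, not the formula claimed in the lemma, and the discrepancy is not small for moderate $q$ (e.g.\ $q=0.3$, $\delta_2=2$, $\delta_1=10$ gives a factor of about $0.31$). The fix is simple and is precisely the paper's hybrid: use the exact $T_0=x/q$ wherever $T_0$ stands alone, i.e.\ in $(S_1+\delta_2S_0)T_0$, and reserve your telescoping argument for the deviation sum $T_1-\delta_2T_0$ only.
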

By using Lemmas \ref{AoItermlemma}-\ref{AoIterm3lemma} and (\ref{DTRexpression}), after some mathematical manipulations, we can obtain the desired results given in (\ref{AoI1}) and (\ref{AoI2}) for the two cases $\delta_1 \ge \delta_2-1$ and $\delta_1 \le \delta_2-1$, respectively.
\subsection{Proof of Theorem \ref{TheoremTP}}
According to the definition of the average number of forwarding actions given in (\ref{longtermenergy}), and the $\theta_{\rm{DTR}}$ policy described in (\ref{DTR}), the average number of forwarding actions can be evaluated as
\begin{equation}
\eta\left(\theta_{\rm{DTR}}\right) = \sum\limits_{k = 1}^{{\delta _1}} {\sum\limits_{d = {\delta _2}}^\infty  {{\pi _{k,d}}} }.
\end{equation}
Recall that $x = \pi_{2,0}$, from the fact that state $\left({2,0}\right)$ can only be reached from states $\left({1,d}\right)$, $d \ge \delta_2$ with a probability of $q$, we have $x = \sum\limits_{d = {\delta _2}}^\infty  {{\pi _{1,d}}}q$ and thus $\sum\limits_{d = {\delta _2}}^\infty  {{\pi _{k-1,d}}} = {x \over q}$. By considering the property given in Lemma \ref{property2}, we can obtain that $\sum\limits_{d = {\delta _2}}^\infty  {{\pi _{k,d}}} = \sum\limits_{d = {\delta _2}}^\infty  {{\pi _{1,d}}} \left(1-q\right)^{k-1}= {x \over q} \left(1-q\right)^{k-1}, 1 \le k \le \delta_1+1$. With this result, we can further simplify the above expression as
\begin{equation}\label{TPeq2}
\eta\left(\theta_{\rm{DTR}}\right)= \sum\limits_{k = 1}^{{\delta _1}} {{x \over q}{{\left( {1 - q} \right)}^{k - 1}}}  = {{1 - {{\left( {1 - q} \right)}^{{\delta _1}}}} \over {{q^2}}}x.
\end{equation}
Substitute $x$ given in (\ref{x}) into (\ref{TPeq2}), we have characterized the desired result in Theorem \ref{TheoremTP}.

\section{Numerical and Simulation Results}
In this section, we present the numerical and simulation results of the considered two-hop status update system applying the proposed CMDP-based and DTR policies. We first verify Theorems \ref{theoremCMDP1} and \ref{theoremCMDP2} by depicting the optimal policy $\theta_{\rm{CMDP}}$ for different resource constraints $\eta_C$ in Fig. \ref{FIGCMDP} when $p=0.6$ and $q=0.7$. Without loss of generality, we set $s_0 = \left(2,0\right)$ as the initial state of the CMDP problem in the following simulations. The optimal policies in Fig 6 for different system setups are obtained by Robbins-Monro algorithm \cite{spall2005introduction} for the search of $\lambda$ and Relative Value Iteration (RVI) for the corresponding policy. We apply RVI on finite states by setting the maximum instantaneous AoI for both relay and destination to 200 in order to approximate the countable infinite state space \cite{stochasticbook}. From Figs. \ref{fig:cmdp1} and \ref{fig:cmdp2}, we can see that $\theta_{\rm{CMDP}}$ is a randomized mixture of two deterministic policies $\theta_{\lambda_1^*}$ and $\theta_{\lambda_2^*}$, and they differs at a single state $\left(1,4\right)$ in Fig \ref{fig:cmdp1}, and state $\left(4,3\right)$ in Fig. \ref{fig:cmdp2}, respectively. In Fig. \ref{fig:cmdp3}, the two deterministic policies $\theta_{\lambda_1^*}$ and $\theta_{\lambda_2^*}$ are identical for the case that $\eta_C = 0.65$. All of these observations coincide well with Theorem \ref{theoremCMDP1}. Furthermore, the depicted policies in Fig. \ref{FIGCMDP} follows the switching-type structure and have multiple thresholds, which verifies our theoretical analysis provided in Theorem \ref{theoremCMDP2}. At last, as the resource constraint at the relay becomes loose, the optimal policy $\theta_{\rm{CMDP}}$ tends to select the forwarding actions more frequently. This is understandable as the instantaneous AoI at the destination can be potentially reduced by the forwarding actions.
\begin{figure}
\centering
 \subfigure[Resource Constraint $\eta_C = 0.25$]
  {\scalebox{0.36}{\includegraphics {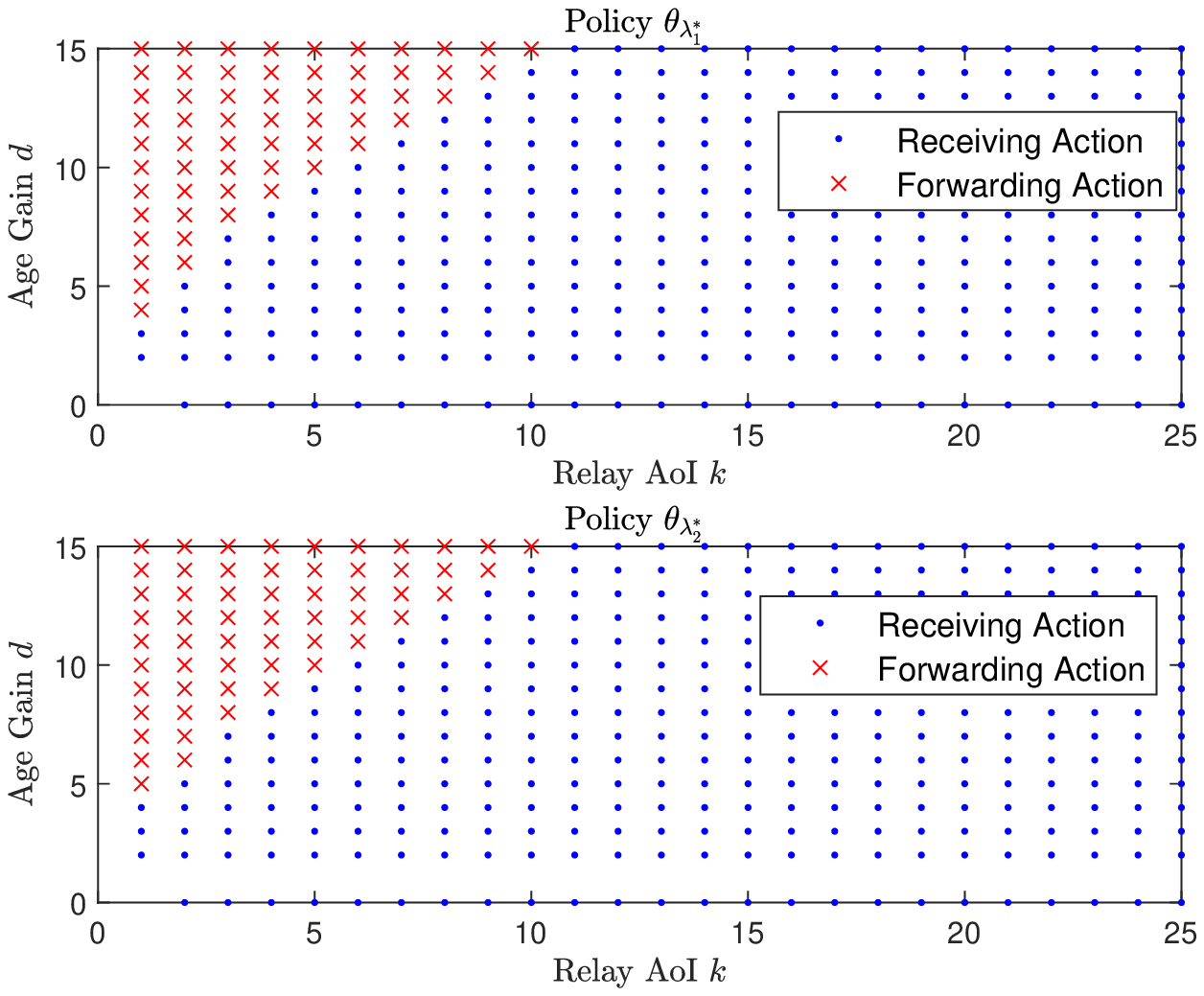}
  \label{fig:cmdp1}}}
\hfil
 \subfigure[Resource Constraint $\eta_C = 0.45$]
  {\scalebox{0.36}{\includegraphics {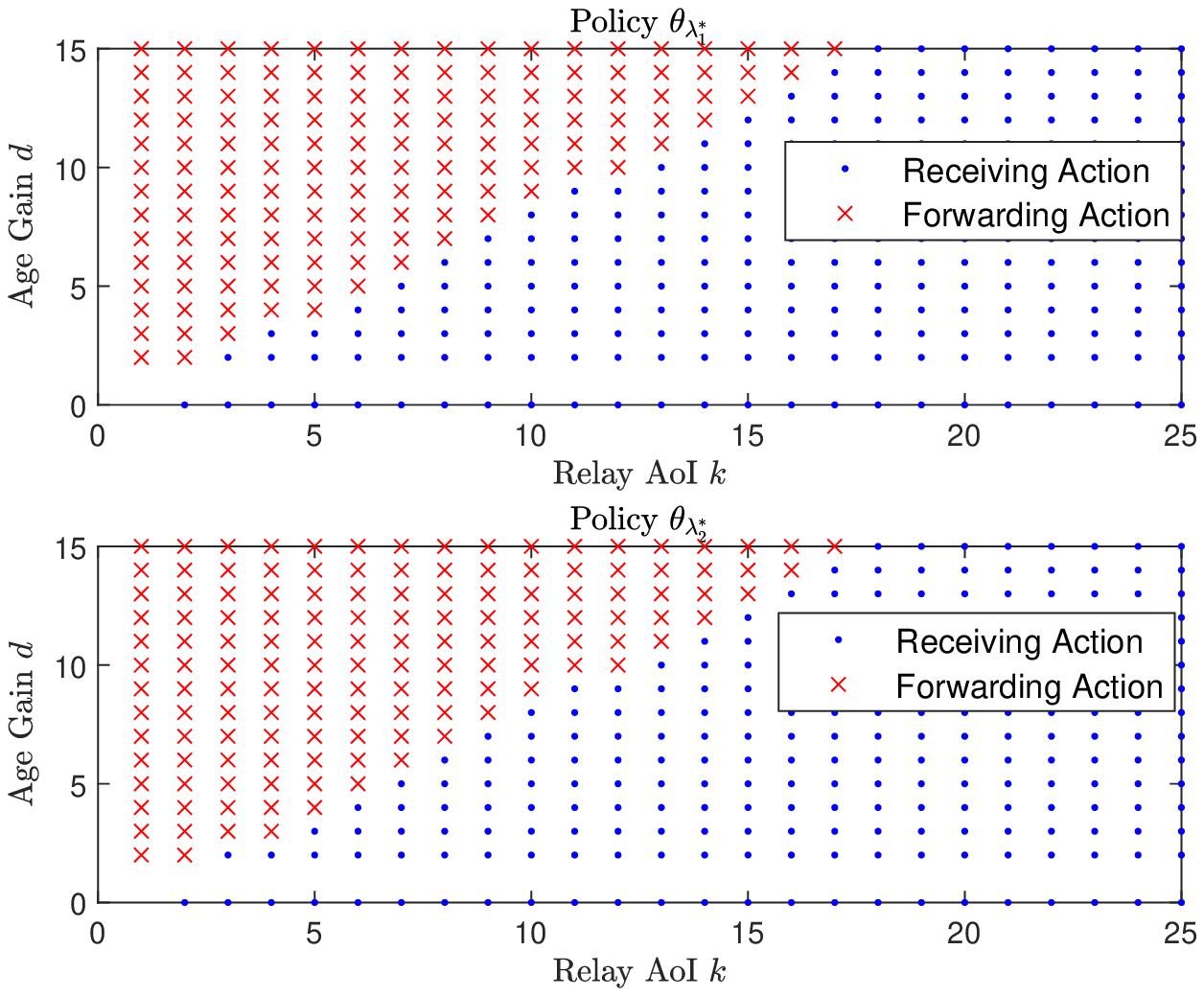}
\label{fig:cmdp2}}}
\hfil
 \subfigure[Resource Constraint $\eta_C =0.65$]
  {\scalebox{0.36}{\includegraphics {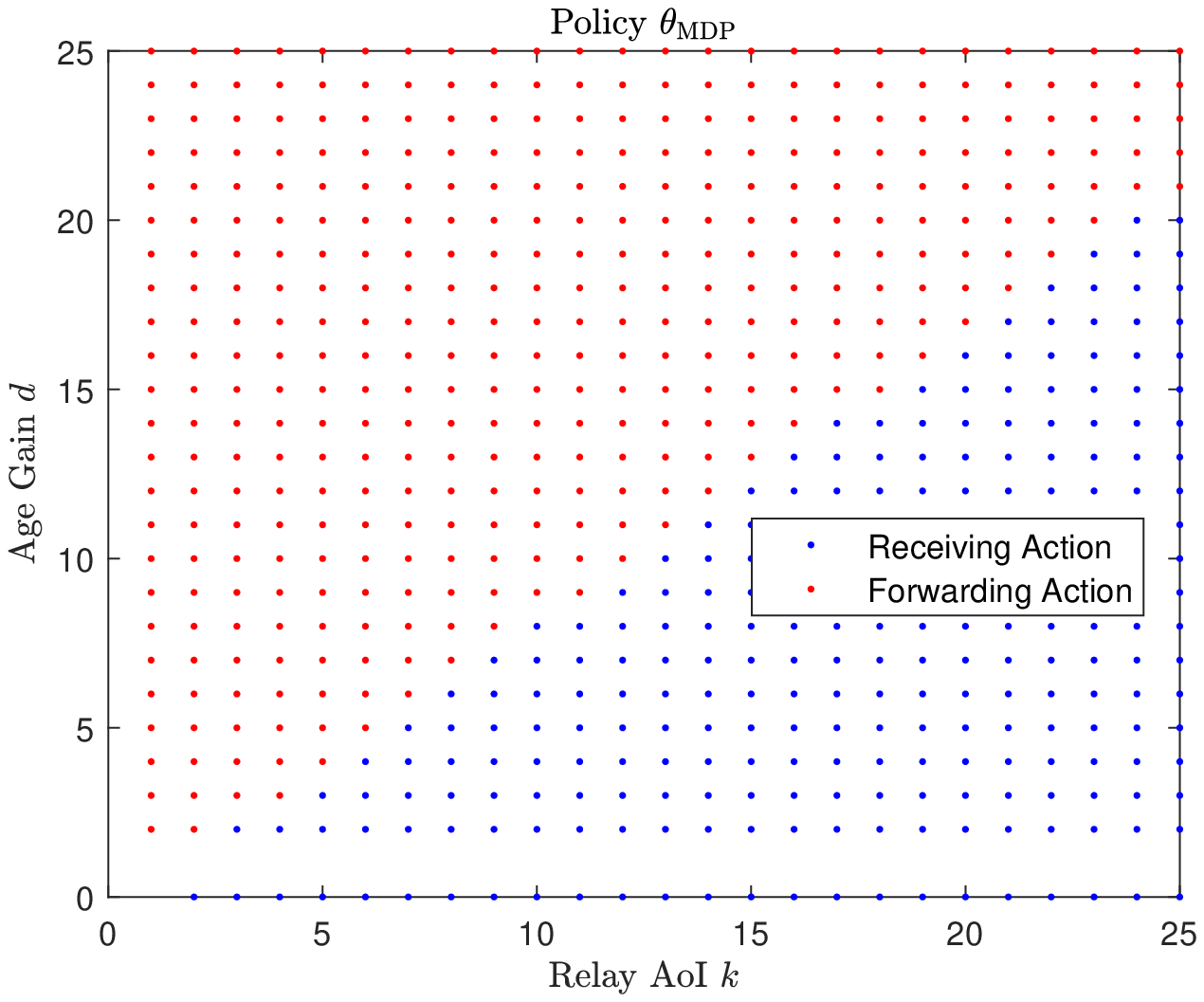}
\label{fig:cmdp3}}}
\caption{Optimal deterministic policies $\theta_{\lambda_1^*}$ and $\theta_{\lambda_2^*}$ for the $\theta_{\rm{CMDP}}$ for different resource constraints $\eta_C$ where $p = 0.6$ and $q = 0.7$.
\label{FIGCMDP}}
\end{figure}

We next plot the average AoI, and the average number of forwarding actions for the DTR policy in Figs. \ref{analytical1} and \ref{analytical2} for different combinations of $p, q, \delta_1$ and $\delta_2$. The analytical results shown in Figs. \ref{analytical1} and \ref{analytical2} are based on (\ref{AoI1}), (\ref{AoI2}) for the average AoI, and (\ref{AVERAGEtp1}), (\ref{AVERAGEtp2}) for the average number of forwarding actions, respectively. Recall that the analytical expressions are different and depending on the two cases $\delta_1 \ge \delta_2-1$ and $\delta_1 \le \delta_2-1$. We can first observe from Figs. \ref{analytical1} and \ref{analytical2} that the analytical results coincide well with the simulation results which verifies our theoretical analysis given in Theorems \ref{TheoremDistribution}-\ref{TheoremTP}. There is a slightly mismatch in Figs. \ref{analytical11} and \ref{analytical21} is due to our approximation in deriving (\ref{AoI1}) and the approximation is very tight in the simulated cases. Besides, we can conclude from Figs. \ref{analytical1} and \ref{analytical2} that the average AoI of the two-hop system implementing DTR policy reduces as $\delta_1$ increases and $\delta_2$ decreases in the simulated cases. However, the average number of forwarding actions grows when $\delta_1$ increases and $\delta_2$ decreases in the simulated cases. Therefore, there exists a tradeoff between receiving and forwarding actions in terms of the average AoI and the average number of forwarding actions. In order to achieve minimum average AoI under a resource constraint, the thresholds $\delta_1$ and $\delta_2$ need to be tuned to balance the choice between receiving and forwarding actions. This observation agrees with the analysis provided in Remarks 1 and 2.
\begin{figure}
\centering
 \subfigure[Average AoI]
  {\scalebox{0.36}{\includegraphics {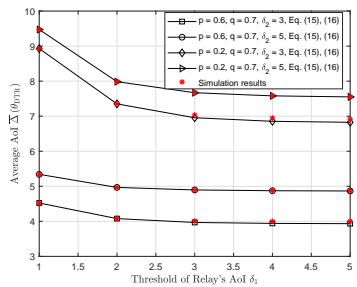}
  \label{analytical11}}}
\hfil
 \subfigure[Average Number of Forwarding Actions]
  {\scalebox{0.36}{\includegraphics {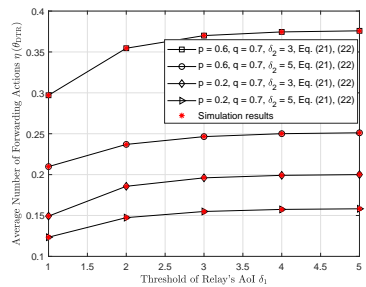}
\label{analytical12}}}
\caption{The average AoI and average number of forwarding actions versus the threshold of relay's AoI $\delta_1$ in the DTR policy for different $p$, $q$ and $\delta_2$.
\label{analytical1}}
\end{figure}
\begin{figure}
\centering
 \subfigure[Average AoI]
  {\scalebox{0.36}{\includegraphics {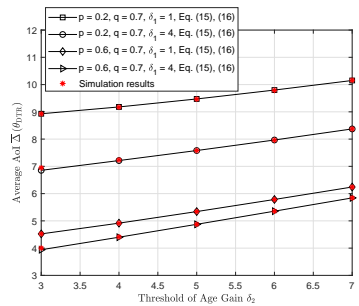}
  \label{analytical21}}}
\hfil
 \subfigure[Average Number of Forwarding Actions]
  {\scalebox{0.36}{\includegraphics {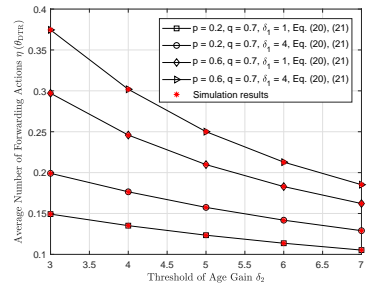}
\label{analytical22}}}
\caption{The average AoI and average number of forwarding actions versus the threshold of age gain $\delta_2$ in the DTR policy for different $p$, $q$ and $\delta_1$.
\label{analytical2}}
\end{figure}

We now show the structure of our proposed DTR policy for different resource constraint in Fig. \ref{FIGDTR}, where the two thresholds $\delta_1$ and $\delta_2$ are optimized numerically based on the closed-form expressions (\ref{AoI1}), (\ref{AoI2}) for the average AoI, and (\ref{AVERAGEtp1}), (\ref{AVERAGEtp2}) for the average number of forwarding actions. We can now compare the structure of the optimal CMDP-based policy with the proposed sub-optimal DTR policy under the same system settings by referring to Figs. \ref{FIGCMDP} and \ref{FIGDTR}. We can observe that the proposed DTR policy takes more forwarding actions when the instantaneous age at the relay is relatively low, at the cost of taking no forwarding actions when the instantaneous age at the relay is relatively high. Besides, the structure of the proposed DTR policy is simple and easy to be implemented. At last, we can deduce from Fig. \ref{FIGDTR} that when the resource constraint is relatively loose, e.g., $\eta_C = 0.45, 0.65$, only one threshold of relay's AoI $\delta_1$ is needed in the proposed DTR policy and the threshold of the age gain $\delta_2 = 2$. Recall that in Remark 1, we have discussed that $\delta_2=2$ indicates the threshold on the age gain is not considered in the DTR policy. This observation can be explained as follows. First of all, the relay should always forward status updates with low instantaneous age and receive new status updates when the stored one becomes stale to keep the age at the destination as low as possible. Therefore, the threshold $\delta_1$ is needed for all the simulated cases. However, when the resource constraint is tight, e.g., $\eta_C = 0.25$, we need a second threshold $\delta_2$ to further balance the receiving and forwarding actions of the relay. Together with $\delta_1$, the relay only forwards those status updates with low instantaneous age, and can decrease the instantaneous age at the destination significantly.
\begin{figure}
\centering
 \subfigure[Resource Constraint $\eta_C = 0.25$]
  {\scalebox{0.36}{\includegraphics {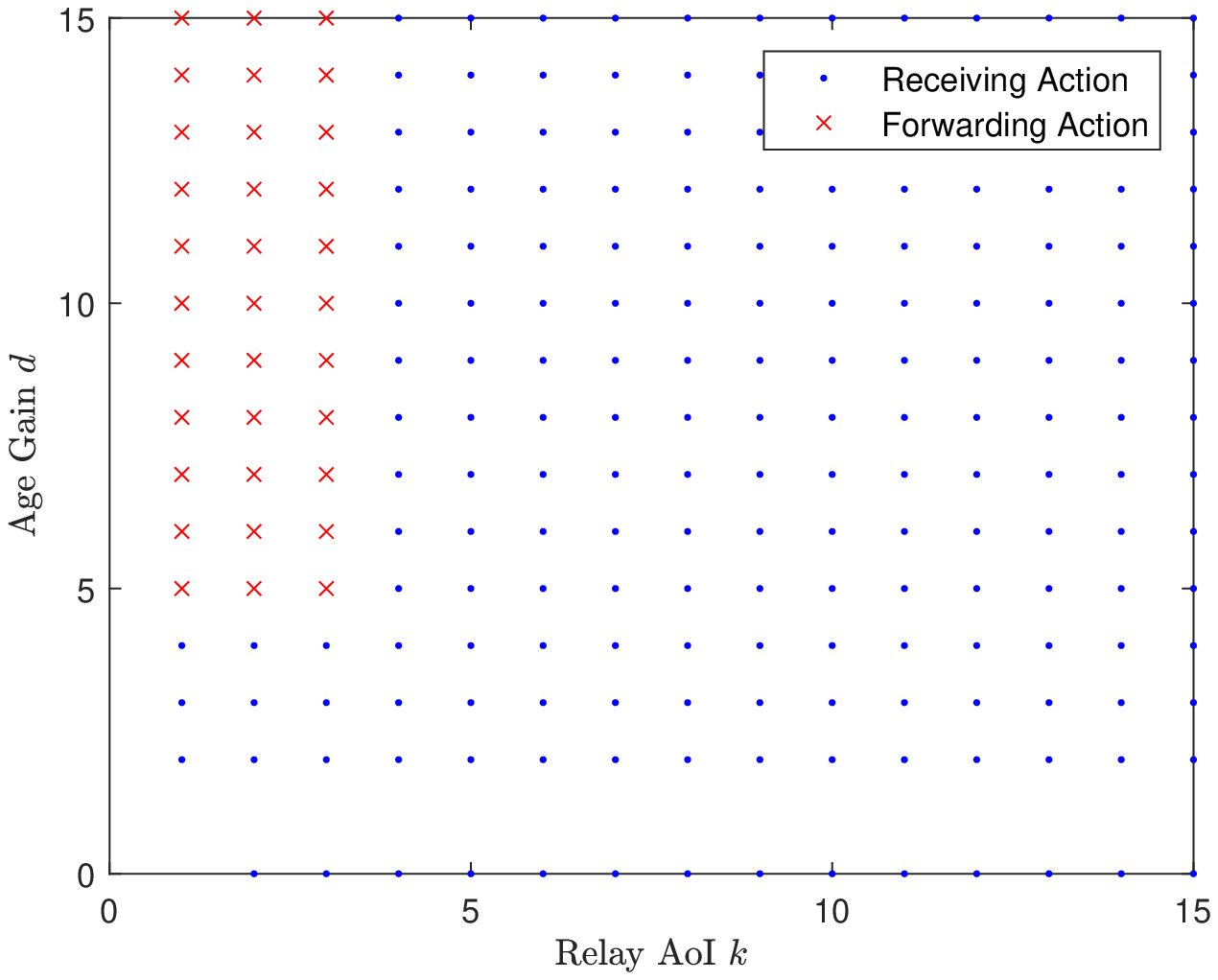}
  \label{fig:DTR1}}}
\hfil
 \subfigure[Resource Constraint $\eta_C = 0.45$]
  {\scalebox{0.36}{\includegraphics {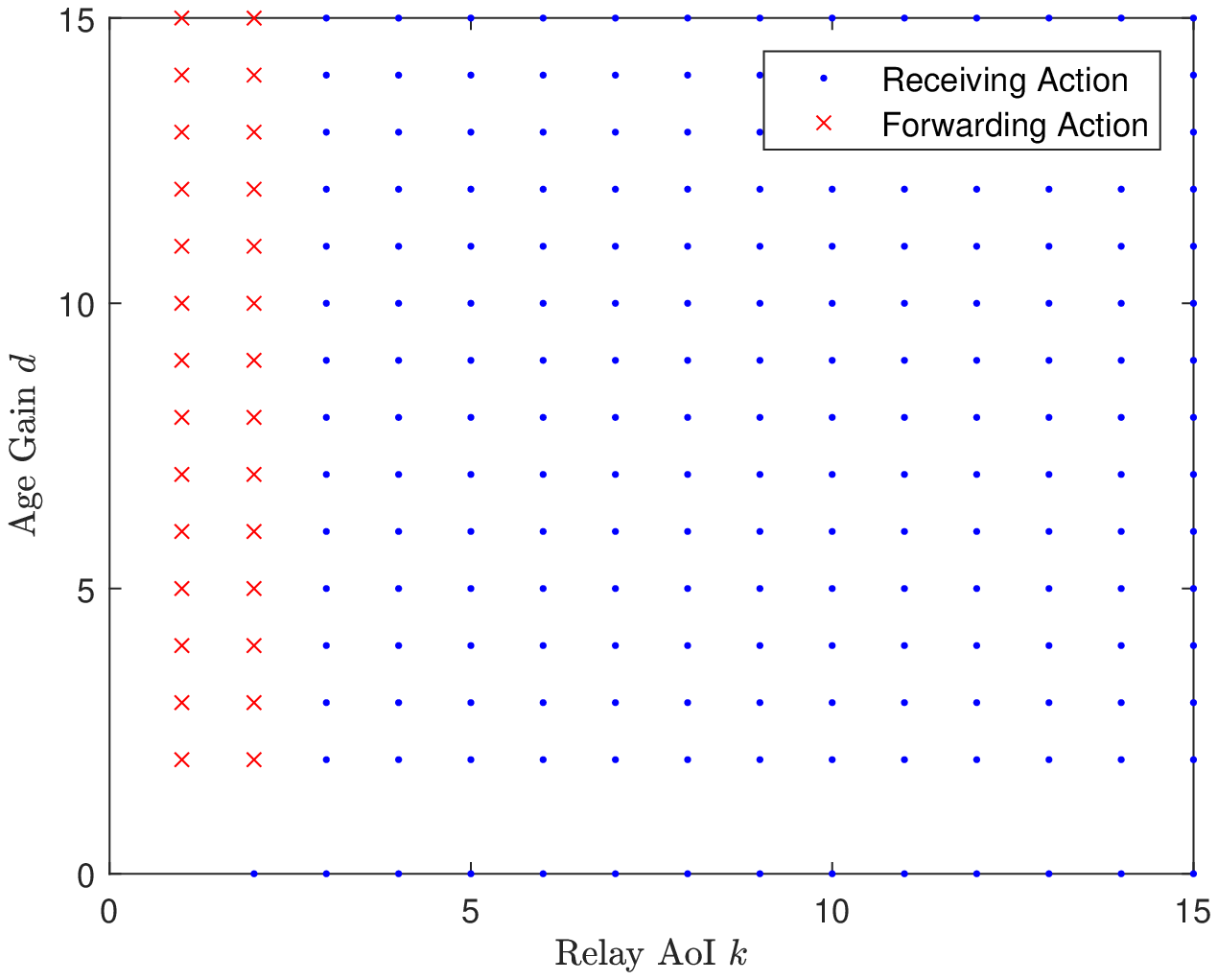}
\label{fig:DTR2}}}
\hfil
 \subfigure[Resource Constraint $\eta_C = 0.65$]
  {\scalebox{0.36}{\includegraphics {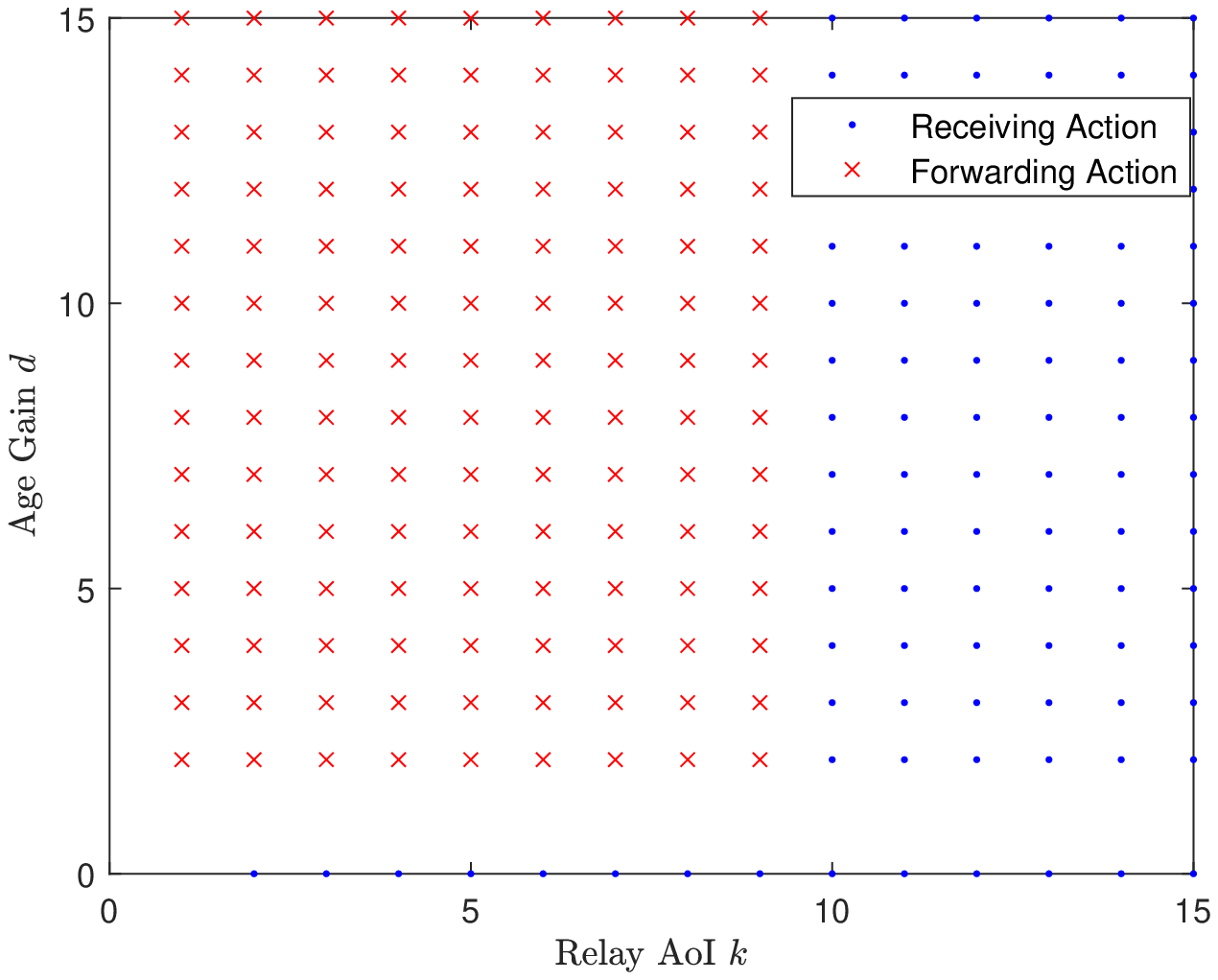}
\label{fig:DTR3}}}
\caption{The proposed DTR policy $\theta_{\rm{DTR}}$ for different resource constraints $\eta_C$ where $p=0.6$ and $q=0.7$.
\label{FIGDTR}}
\end{figure}

Fig. \ref{comparison} compares the average AoI of the proposed CMDP-based policy and DTR policy for different system setups. We simulate the performance of the CMDP-based policy by $1000$ runs, and the time horizon $T$ in each run is set to $10^7$ time slots. In the DTR policy, the minimized average AoI is evaluated by firstly optimizing the two thresholds $\delta_1$ and $\delta_2$ using the derived analytical expressions. In all the simulated cases, we can observe that the proposed low-complexity policy DTR is very close to the optimal policy $\theta_{\rm{CMDP}}$. Therefore, the DTR policy is appealing to practical systems because it can achieve a near-optimal performance and have a simple implementation. At last, the derived closed-form expressions of the DTR policy can further benefit the system design to reveal its performance in terms of the average AoI.
\begin{figure}
\centering \scalebox{0.36}{\includegraphics{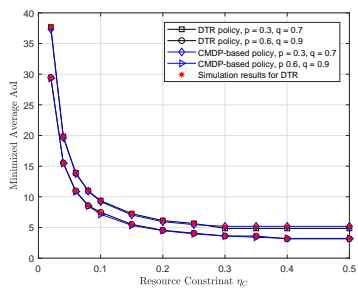}}
\caption{The comparison between CMDP-based and DTR policies in terms of the optimal average AoI for different system setups.}\label{comparison}
\end{figure}


%
\section{Conclusions}
In this paper, we studied the optimal scheduling policy for a two-hop system where a resource constraint is considered at the relay. In Theorem 1, we first derived an optimal CMDP-based policy to minimize the average AoI at the destination under the resource constraint. The CMDP-based policy was obtained by modelling the optimization problem as a CMDP problem. In Theorem 2, we analytically showed that the CMDP-based policy has a multiple threshold structure with a switching-type. Based on the structure of the CMDP-based policy, we then proposed a low-complexity DTR policy where only two thresholds are implemented, one for the relay's AoI and the other one for the age gain between destination and relay. In Theorem 3, we obtained the stationary distribution of the DTR policy in terms of the instantaneous age at the relay and destination by a MC-based method. In Theorem 4 and 5, the average AoI and the average number of forwarding actions for the DTR policy were characterized in closed-form. Numerical results were provided to verify the Theorems 1-5, and revealed that the proposed DTR policy can achieve near-optimal performance in terms of the average AoI compared with the CMDP-based policy. Furthermore, only one threshold for the relay's AoI is needed in the DTR policy when there is no resource constraint or the resource constraint is loose.
\appendices
\section{Proof of Lemma \ref{lemmastatespace}}\label{app1}
To begin, we have $k\ge 1, d \ge 0$ because that the instantaneous AoI at $R$ is at least one for the $S-R$ link, and the age gain cannot be negative. We use $v = \left(k,d\right)$, $v^\prime = \left(k^\prime,d^\prime\right)$ to be the system states in the current time slot, and in the next time slot, respectively, with $v,v^\prime  \in \mathcal{V}$. With the definitions of $v$ and $v^\prime$, we then obtain the state space $\mathcal{V}$ by considering the evolutions described in (\ref{AoID}) for the following three cases.

Case 1: $R$ chooses forwarding action and successfully delivers a status update to $D$, i.e., $w\left(t\right)=1 \cap I_{RD}\left(t\right) = 1$. We have $k^\prime = k+1 \ge 2$ and $d^\prime = 0$ for this case. Note that $d^\prime = 0$ only happens for this case when $D$ receives a status update correctly, and we thus have the subset $\left(k \ge 2, d = 0\right)$ for the state space $\mathcal{V}$.

Case 2: $R$ chooses receiving action and successfully decodes a status update from $S$, i.e., $w\left(t\right)=0 \cap I_{SR}\left(t\right) = 1$. We have $k^\prime = 1$, and $d^\prime = k + d$. From Case 1, because the minimum value of $k$ is $2$ when $d = 0$, jointly considering the fact that $k \ge 1$, $d \ge 0$, we have $d^\prime \ge 2$, and thus the subset $\left(k =1, d \ge 2 \right)$ for the state space.

Case 3: $R$ neither successfully forwards a status update nor decodes a status update. We have $k^\prime = k+1$, and $d^\prime = d$. In the evolution of Case 3, $k$ increases while $d$ remains unchanged, i.e., $k$ can evolve to any positive integer for a given $d$. Jointly considering the subsets characterized in Cases 1 and 2, we can obtain the state space $\mathcal{V}$ given in Lemma \ref{lemmastatespace}.
\section{Proof of Lemmas in Section IV}\label{AppsectionIV}
\subsection{Proof of Lemma \ref{property1}}
From (\ref{MCtransition}), we can see that state $\left(k,d\right)$ can be only reached by state $\left({k-1,d}\right)$, $\forall k, 2\le d \le \delta_2-1$ with a probability of $1-p$. We thus have the evolution described in this Lemma and such result can also be seen from Fig. \ref{fig:markov1} clearly.
\subsection{Proof of Lemma \ref{property2}}
Similar to the proof given for Lemma \ref{property1}, by using (\ref{MCtransition}) and Fig. \ref{fig:markov1}, we can obtain the evolution described in Lemma \ref{property2}.
\subsection{Proof of Lemma \ref{property3}}
To begin, from (\ref{MCtransition}), we can deduce that state $\left({k,0}\right)$ can be reached from state $\left({k-1,0}\right)$, $3 \le k \le \delta_1+1$ with a probability of $1-p$. Besides, state $\left({k,0}\right)$ can also be reached from states $\left({k-1,d}\right)$, $3 \le k \le \delta_1+1$, $d \ge \delta_2$ with a probability of $q$. The above deduction can also be seen clearly from Figs. \ref{fig:markov1} and \ref{fig:markov2}. With this fact, we can obtain that
\begin{equation}\label{col01equation}
\pi_{k,0} = \pi_{k-1,0}\left(1-p\right)+\sum\limits_{d = {\delta _2}}^\infty  {{\pi _{k-1,d}}}q, 3 \le k \le \delta_1+1.
\end{equation}
Recall that $x = \pi_{2,0}$, from the fact that state $\left({2,0}\right)$ can only be reached from states $\left({1,d}\right)$, $d \ge \delta_2$ with a probability of $q$, we have $x = \sum\limits_{d = {\delta _2}}^\infty  {{\pi _{1,d}}}q$ and thus $\sum\limits_{d = {\delta _2}}^\infty  {{\pi _{k-1,d}}} = {x \over q}$. By considering the property given in Lemma \ref{property2}, we can obtain that
\begin{equation}\label{app1result1}
\sum\limits_{d = {\delta _2}}^\infty  {{\pi _{k,d}}} = \sum\limits_{d = {\delta _2}}^\infty  {{\pi _{1,d}}} \left(1-q\right)^{k-1}= {x \over q} \left(1-q\right)^{k-1}, 1 \le k \le \delta_1+1.
\end{equation}
Substitute (\ref{app1result1}) into (\ref{col01equation}), we have obtained the desired result given in Lemma \ref{property3} for the case that $1 \le k \le \delta_1+1$.

We now turn to the case with $k \ge \delta_1+1$. Differently, the state $\left({k,0}\right)$ can only be reached from one state $\left({k-1,0}\right)$ with a probability of $1-p$, see Figs. \ref{fig:markov1} and \ref{fig:markov2} for instance. Similar to (\ref{col01equation}), we have the desired result given in Lemma~\ref{property3}.
\subsection{Proof of Lemma \ref{property4}}
From the transition probability derived in (\ref{MCtransition}), we can first deduce that state $\left({1,2}\right)$ can only be reached by state $\left({2,0}\right)$ with a probability of $p$, see Fig. \ref{fig:markov3} for instance. Thus, we have $\pi_{1,2} = p{\pi_{2,0}}=px$. Besides, based on (\ref{MCtransition}), state $\pi_{1,d}$ can be reached by states $\pi_{k^\prime,d^\prime}$ with a probability of $p$, when $3 \le d \le \delta_2$ and $\forall k^\prime, d^\prime$ with $k^\prime+d^\prime = d$, see the dashed rectangles in Fig. \ref{fig:markov3} for instance. As a result, we can obtain
\begin{equation}\label{area2case1eq1}
\begin{split}
\pi_{1,d}& =  p\left[ {{\pi _{d,0}} + \sum\limits_{n = 2}^{d-1} {{\pi _{d - n,n}}} } \right]\mathop { = }\limits^{\left( a \right)}p\left[ {{\pi _{d,0}} + \sum\limits_{n = 2}^{d-1} {{\pi _{1,n}}} {{\left( {1 - p} \right)}^{d- n-1}}} \right], 3 \le d \le \delta_2,\\
\end{split}
\end{equation}
where the equality $a$ is due to the property given in Lemma \ref{property1}. We then use Lemma \ref{property3} for the term ${\pi _{d,0}}$ in order to further simplify (\ref{area2case1eq1}). Note that the evolution of ${\pi _{d,0}}$ given in (\ref{fact3}) is different for the two cases $ 3 \le d \le \delta_1+1$, and $ d \ge \delta_1+1$, respectively. As a result, when $\delta_1 \ge \delta_2-1$, we have $d \le \delta_2 \le \delta_1+1$ and only the first case holds. Therefore, substitute (\ref{fact3}) into (\ref{area2case1eq1}), we can further derive that
\begin{equation}\label{area2case1eq11}
\begin{split}
\pi_{1,d} &=  p\left( {1 - p} \right)\left[ {{\pi _{d-1,0}} + \sum\limits_{n = 2}^{d - 2} {{\pi _{1,n}}} {{\left( {1 - p} \right)}^{d - n - 2}}} \right] + p{\left( {1 - q} \right)^{d - 2}}x + p{\pi _{1,d-1}}, 3\le d \le \delta_2\\
& ={\pi _{1,d-1}} + p{\left( {1 - q} \right)^{d - 2}}x, 3\le d \le \delta_2.\\
\end{split}
\end{equation}

We next turn to the case where $\delta_1 \le \delta_2-1$, similarly, by substituting (\ref{fact3}) into (\ref{area2case1eq1}) and consider both cases in (\ref{fact3}), we can obtain that
\begin{equation}\label{area2case2eq2}
\pi_{1,d} =
\left\{{
\begin{matrix}
\begin{split}
   &{{\pi _{1,d-1}} + p{{\left( {1 - q} \right)}^{d - 2}}x,\quad 3 \le d \le {\delta _1+1} } \\
   &{{\pi _{1,d-1}},\quad {\delta _1}+2  \le d \le {\delta _2}}  \\
\end{split}
\end{matrix}
}\right..
\end{equation}
\subsection{Proof of Lemma \ref{property5}}
We first focus on the initial term $\pi_{1, \delta_2}$, substituting $k=1$ and $d = \delta_2$ into (8b) and (9b), we can derive the desired result given in (\ref{initialterm}). Note that (8b) and (9b) are derived based on Lemmas \ref{property2} and \ref{property4}. With the initial term, we then consider the evolution for $\pi_{1,d}$ with $d \ge \delta_2+1$. We realize that the evolution of the term $\pi_{1,d}$ is different for the case that $\delta_2+1 \le d \le \delta_1+\delta_2$ and the case that $d \ge \delta_1+\delta_2$ because of the two thresholds in the DTR policy, see the dashed rectangles in Fig. \ref{fig:markov4} for instance. When $\delta_2+1 \le d \le \delta_1+\delta_2$, we note that the state $\left({1,d}\right)$ can be reached from the states $\left({k^\prime,d^\prime}\right)$, $\forall k^\prime, d^\prime$ with $k^\prime + d^\prime = d$, and $d^\prime \le \delta_2-1$. Therefore, we can obtain that
\begin{equation}\label{area3case1eq1}
\begin{split}
\pi_{1,d} = p\left[ {{\pi _{d,0}} + \sum\limits_{n = 2}^{\delta_2-1} {{\pi _{d - n,n}}} } \right]\mathop { = }\limits^{\left( a \right)}p\left[ {{\pi _{d,0}} + \left( {1 - p} \right)\sum\limits_{n = 2}^{{\delta _2} - 1} {{\pi _{d - 1 - n,n}}} } \right],\delta_2+1 \le d \le \delta_1+\delta_2,\\
\end{split}
\end{equation}
where the equality $a$ is due to the property given in Lemma \ref{property1}. To further simplify (\ref{area3case1eq1}), we use the property given in (\ref{fact3}) for Lemma \ref{property3} to replace the term ${\pi _{d,0}}$. However, we realize that it is difficult to further simplify (\ref{area3case1eq1}) because of the two cases described in (\ref{fact3}). We are thus motivated to omit the term $x\left(1-q\right)^{k-1}$ in (\ref{fact3}) such that ${\pi _{d,0}} \approx \left(1-p\right){\pi _{d-1,0}}$ holds for $d \ge \delta_2$. Note that the adopted approximation is very tight for relatively large value of $q$, and it becomes exact when $q \to 1$ ($0 < x <1$). More importantly, the adopted approximation becomes exact when $\delta_1 \le \delta_2-1$, because $d \ge \delta_2 \ge \delta_1+1$ and, ${\pi _{d,0}} = \left(1-p\right){\pi _{d-1,0}}$ always holds according to (\ref{fact3}). With the above analysis, substitute $\pi_{d,0} \approx {\pi _{d-1,0}} \left(1-p\right)$ into (\ref{area3case1eq1}), we can further derive that
\begin{equation}\label{area3case1eq2}
\pi_{1,d} \approx \left( {1 - p} \right){\pi _{1,d - 1}}, \delta_2+1 \le d \le \delta_1+\delta_2.
\end{equation}

We now turn to the evolution of $\pi_{1,d}$ when $d \ge \delta_2+\delta_1$. Differently, the state $\left({1,d}\right)$ can now be readched from the states $\left({k^\prime,d^\prime}\right)$, $\forall k^\prime, d^\prime$ with $k^\prime + d^\prime = d$, and $d^\prime \le d-\delta_1-1$, see the dashed rectangular depicted in Fig. \ref{fig:markov4} for instance. Similar to (\ref{area3case1eq1}), we have
\begin{equation}\label{area3case1eq3}
\begin{split}
\pi_{1,d}&=p\left[ {{\pi _{d,0}} + \sum\limits_{n = 2}^{d - {\delta _1} - 1} {{\pi _{d - n,n}}} } \right],d \ge \delta_2+\delta_1,\\
&=p\left[ {\left( {1 - p} \right){\pi _{d - 1,0}} + \sum\limits_{n = 2}^{d - {\delta _1} - 2} {{\pi _{d - n,n}}}  + {\pi _{{\delta _1} + 1,d - {\delta _1} - 1}}} \right],d \ge \delta_2+\delta_1,\\
&=p\left[ {\left( {1 - p} \right)\left( {{\pi _{d - 1,0}} + \sum\limits_{n = 2}^{d - {\delta _1} - 2} {{\pi _{d - n - 1,n}}} } \right) + {\pi _{{\delta _1} + 1,d - {\delta _1} - 1}}} \right],d \ge \delta_2+\delta_1,\\
&\mathop { = }\limits^{\left( a \right)}{\left( {1 - p} \right){\pi _{1,d-1}} + p\left( {1 - q} \right)^{\delta_1}{\pi _{1,d - {\delta _1-1}}}},d \ge \delta_2+\delta_1,
\end{split}
\end{equation}
where the equality $a$ for the term ${\pi _{{\delta _1} + 1,d - {\delta _1} - 1}}$ is due to the property given in Lemma \ref{property2}.
\subsection{Proof of Lemma \ref{lemmax}}
We solve $x$ by using the fact that the sum of the stationary distribution for all the states in the MC are equal to $1$. Specifically, we have the equation
\begin{equation}\label{xequation}
\begin{split}
\sum\limits_{\forall k,d} {{\pi _{k,d}}}  & =\sum\limits_{k = 2}^\infty  {{\pi _{k,0}}} + \sum\limits_{k = 1}^{\infty} \sum\limits_{d = 2}^{{\delta _2} - 1}  {{\pi _{k,d}}}  + \sum\limits_{k = 1}^\infty  \sum\limits_{d = {\delta _2}}^\infty  {{\pi _{k,d}}} = 1,\\
\end{split}
\end{equation}
where the three terms $\sum\limits_{k = 2}^\infty  {{\pi _{k,0}}}$, $\sum\limits_{k = 1}^{\infty} \sum\limits_{d = 2}^{{\delta _2} - 1}  {{\pi _{k,d}}}$ and $\sum\limits_{k = 1}^\infty  \sum\limits_{d = {\delta _2}}^\infty  {{\pi _{k,d}}}$ represent the sum of stationary distribution for subspaces $A$, $B$ and $C$, respectively. By using the analytical results given in (8a) and (9a) for the stationary distribution of subspace $A$, the term $\sum\limits_{k = 2}^\infty  {{\pi _{k,0}}}$ can be evaluated as
\begin{equation}\label{app3term1}
\begin{split}
\sum\limits_{k = 2}^\infty  {{\pi _{k,0}}} &= \sum\limits_{k = 2}^{{\delta _1} + 1} {{{{{\left( {1 - p} \right)}^{k - 1}} - {{\left( {1 - q} \right)}^{k - 1}}} \over {q - p}}x}  + \sum\limits_{k = {\delta _1} + 2}^\infty  {{{\left[ {{{\left( {1 - p} \right)}^{{\delta _1}}} - {{\left( {1 - q} \right)}^{{\delta _1}}}} \right]{{\left( {1 - p} \right)}^{k - {\delta _1} - 1}}} \over {q - p}}x} \\
&= {{1 - {{\left( {1 - q} \right)}^{{\delta _1}}}} \over {pq}}x.
\end{split}
\end{equation}
We now turn to the second term in (\ref{xequation}), i.e., $\sum\limits_{k = 1}^{\infty} \sum\limits_{d = 2}^{{\delta _2} - 1}  {{\pi _{k,d}}}$ for subspace $B$ based on (8b) and (9b). When $\delta_1 \ge \delta_2-1$, we can obtain that
\begin{equation}\label{app3term21}
\begin{split}
\sum\limits_{k = 1}^{\infty} \sum\limits_{d = 2}^{{\delta _2} - 1}  {{\pi _{k,d}}} &\mathop { = }\limits^{\left( a \right)} \sum\limits_{d = 2}^{{\delta _2} - 1} {\pi _{1,d}}\sum\limits_{k = 1}^\infty  {{{\left( {1 - p} \right)}^{k - 1}}} \\
&\mathop { = }\limits^{\left( b \right)} \sum\limits_{d = 2}^{{\delta _2} - 1} {{\pi _{1,d}} \over p}\\
&\mathop { = }\limits^{\left( c \right)}\sum\limits_{k = 2}^{{\delta _2} - 1} {{{1 - {{\left( {1 - q} \right)}^{k - 1}}} \over q}} x  \\
& = {{q{\delta _2} - q - \left[ {1 - {{\left( {1 - q} \right)}^{{\delta _2} - 1}}} \right]} \over {{q^2}}}x, \\
\end{split}
\end{equation}
where the equality $a$ is according to the property described in Lemma \ref{property1}. The equality $b$ can be obtained by the sum of geometric sequence, and the equality $c$ is derived by substituting $k=1$ into (8b). When $\delta_1 \le \delta_2-1$, similar to the above derivation and using (9b), the term $\sum\limits_{k = 1}^{\infty} \sum\limits_{d = 2}^{{\delta _2} - 1}  {{\pi _{k,d}}}$ for area $B$ can be calculated as
\begin{equation}\label{app3term22}
\begin{split}
\sum\limits_{k = 1}^{\infty} \sum\limits_{d = 2}^{{\delta _2} - 1}  {{\pi _{k,d}}} ={{q{\delta _2} - q - 1 + \left( {1 - q{\delta _2} + q{\delta _1} + q} \right){{\left( {1 - q} \right)}^{{\delta _1}}}} \over {{q^2}}}x.
\end{split}
\end{equation}
The last term in (\ref{xequation}) $\sum\limits_{k = 1}^\infty  \sum\limits_{d = \delta_2}^\infty  {{\pi _{k,d}}}$ for subspace $C$ can be calculated as
\begin{equation}\label{app3term3}
\begin{split}
\sum\limits_{k = 1}^\infty  \sum\limits_{d = \delta_2}^\infty  {{\pi _{k,d}}} &\mathop { = }\limits^{\left( a \right)} \sum\limits_{d = {\delta _2}}^\infty{\pi _{1,d}} \left[\sum\limits_{k = 1}^{{\delta _1} + 1} {{{\left( {1 - q} \right)}^{{k-1}}}}  + {\left( {1 - q} \right)^{{\delta _1}}}\sum\limits_{k = \delta_1+2}^\infty  {{{\left( {1 - p} \right)}^{k-\delta_1-1}}}\right]\\
& \mathop { = }\limits^{\left( b \right)} {{p + \left( {q - p} \right){{\left( {1 - q} \right)}^{{\delta _1}}}} \over {pq}} \sum\limits_{d = {\delta _2}}^\infty{\pi _{1,d}}\\
& \mathop { = }\limits^{\left( c \right)}{{p + \left( {q - p} \right){{\left( {1 - q} \right)}^{{\delta _1}}}} \over {p{q^2}}}x,
\end{split}
\end{equation}
where the equality $a$ can be obtained by Lemma \ref{property2}. The equality $b$ is derived by simplifying the summation terms within the square brackets by the sum of geometric sequence. The equality $c$ is from the analysis given below (\ref{col01equation}) that $\sum\limits_{d = {\delta _2}}^\infty  {{\pi _{1,d}}}   = {x \over q}$. Substitute (\ref{app3term1}), (\ref{app3term21}), (\ref{app3term22}) and (\ref{app3term3}) into (\ref{xequation}), we can solve $x$ and obtain the desired results given in (\ref{x}) for the two cases.
\section{Proof of Lemmas in Section V}\label{AppsectionV}
\subsection{Proof of Lemma \ref{AoItermlemma}}\label{app5}
Based on the stationary distribution of subspace $A$ given in (8a) and (9a), the term $\sum\limits_{k = 2}^\infty  {k{\pi _{k,0}}}$ can be calculated as
\begin{equation}
\begin{split}
\sum\limits_{k = 2}^\infty  {k{\pi _{k,0}}}&=\sum\limits_{k = 2}^{{\delta _1} + 1} {k{{{{\left( {1 - p} \right)}^{k - 1}} - {{\left( {1 - q} \right)}^{k - 1}}} \over {q - p}}x}  + \sum\limits_{k = {\delta _1} + 2}^\infty  {k{{\left[ {{{\left( {1 - p} \right)}^{{\delta _1}}} - {{\left( {1 - q} \right)}^{{\delta _1}}}} \right]{{\left( {1 - p} \right)}^{k - {\delta _1} - 1}}} \over {q - p}}x}\\
&\mathop { = }\limits^{\left( a \right)} {{\left( {p + q} \right)\left[ {1 - {{\left( {1 - q} \right)}^{{\delta _1}}}} \right] - pq{\delta _1}{{\left( {1 - q} \right)}^{{\delta _1}}}} \over {{p^2}{q^2}}}x,
\end{split}
\end{equation}
where the equality $a$ can be obtained by the sum of geometric sequence.
\subsection{Proof of Lemma \ref{AoIterm2lemma}}\label{app6}
In order to evaluate the term $\sum\limits_{k = 1}^\infty  {\sum\limits_{d = 2}^{{\delta _2} - 1} {\left( {k + d} \right){\pi _{k,d}}} }$, we first consider the term $\sum\limits_{k = 1}^\infty  {\left( {k + d} \right){\pi _{k,d}}}$ with $2 \le d \le \delta_2-1$, and it is given by
\begin{equation}\label{Aiterm2}
\begin{split}
\sum\limits_{k = 1}^\infty  {\left( {k + d} \right){\pi _{k,d}}}, 2 \le d \le \delta_2-1&\mathop { = }\limits^{\left( a \right)} \sum\limits_{k = 1}^\infty  {\left( {k + d} \right)\pi_{1,d}\left(1-p\right)^{k-1}},2 \le d \le \delta_2-1, \\
&\mathop { = }\limits^{\left( b \right)}{{pd + 1} \over {{p^2}}}\pi_{1,d},2 \le d \le \delta_2-1,
\end{split}
\end{equation}
where the equality $a$ is according to the property given in Lemma \ref{property1}, and the equality $b$ can be obtained by the sum of series $\sum\limits_{n = 0}^\infty  {\left( {n + a} \right) {{\left( {1 - p} \right)}^n}}  = {{pa + 1} \over {{p^2}}}$ provided by WolframAlpha \cite{wolfram}. With the result given in (\ref{Aiterm2}), the term $\sum\limits_{k = 1}^\infty  {\sum\limits_{d = 2}^{{\delta _2} - 1} {\left( {k + d} \right){\pi _{k,d}}} }$ can be further calculated as
\begin{equation}\label{AoIterm2eq}
\begin{split}
\sum\limits_{k = 1}^\infty  {\sum\limits_{d = 2}^{{\delta _2} - 1} {\left( {k + d} \right){\pi _{k,d}}} } &= {1 \over {{p^2}}}\mathop \sum \limits_{d = 2}^{{\delta _2} - 1} {\pi _{1,d}} + {1 \over p}\mathop \sum \limits_{d = 2}^{{\delta _2} - 1} d{\pi _{1,d}}.\\
\end{split}
\end{equation}
The term ${1 \over {{p^2}}}\mathop \sum \limits_{d = 2}^{{\delta _2} - 1} {\pi _{1,d}}$ in (\ref{AoIterm2eq}) is readily given for the two cases $\delta_1 \ge \delta_2-1$ and $\delta_1 \le \delta_2-1$ by using (\ref{app3term21}) and (\ref{app3term22}), respectively. We now focus on the term ${1 \over p}\mathop \sum \limits_{d = 2}^{{\delta _2} - 1} d{\pi _{1,d}}$ in (\ref{AoIterm2eq}). When $\delta_1 \ge \delta_2-1$, the term ${1 \over p}\mathop \sum \limits_{d = 2}^{{\delta _2} - 1} d{\pi _{1,d}}$ can be evaluated as
\begin{equation}\label{AoIterm2case1eq}
\begin{split}
 {1 \over p}\mathop \sum \limits_{d = 2}^{{\delta _2} - 1} d{\pi _{1,d}} &\mathop { = }\limits^{\left( a \right)}{x \over q}\sum\limits_{d = 2}^{{\delta _2} - 1} d \left[ {1 - {{\left( {1 - q} \right)}^{d - 1}}} \right]\\
& \mathop { = }\limits^{\left( b \right)} {{\left( {{\delta _2} + 1} \right)\left( {{\delta _2} - 2} \right)} \over {2q}}x - {{1 - {q^2} - \left( {1 - q + q{\delta _2}} \right){{\left( {1 - q} \right)}^{{\delta _2} - 1}}} \over {{q^3}}}x,
\end{split}
\end{equation}
where the equality $a$ can be obtained by (8b), and the equality $b$ is derived from the sum of arithmetic sequence and the sum of geometric sequence. When $\delta_1 \le \delta_2-1$, similar to (\ref{AoIterm2case1eq}) and using (9b), we can calculate that
\begin{equation}\label{AoIterm2case2eq}
\begin{split}
{1 \over p}\sum\limits_{d = 2}^{{\delta _2} - 1} d {\pi _{1,d}}  &={{\left( {{\delta _2} + 1} \right)\left( {{\delta _2} - 2} \right)} \over {2q}}x - {{1 - {q^2} - \left( {1 + q + q{\delta _1}} \right){{\left( {1 - q} \right)}^{{\delta _1} + 1}}} \over {{q^3}}}x\\
& \quad  - {{{{\left( {1 - q} \right)}^{{\delta _1}}}\left( {{\delta _1} + {\delta _2} + 1} \right)\left( {{\delta _2} - {\delta _1} - 2} \right)} \over {2q}}x.
\end{split}
\end{equation}
Substitute (\ref{app3term21}), (\ref{app3term22}), (\ref{AoIterm2case1eq}) and (\ref{AoIterm2case2eq}) into (\ref{AoIterm2eq}), after some manipulation, we can obtain the desired results given in Lemma \ref{AoIterm2lemma} and here completes the proof.
\subsection{Proof of Lemma \ref{AoIterm3lemma}}\label{app8}
In order to calculate the term $\sum\limits_{k = 1}^\infty  {\sum\limits_{d = {\delta _2}}^\infty  {\left( {k + d} \right){\pi _{k,d}}} }$, we first derive the term $\sum\limits_{k = 1}^\infty {\left( {k + d} \right){\pi _{k,d}}}$, and it is given by
\begin{equation}\label{term3lemmaeq1}
\begin{split}
\sum\limits_{k = 1}^\infty {\left( {k + d} \right){\pi _{k,d}}}&  \mathop { = }\limits^{\left( a \right)} \sum\limits_{k = 1}^{{\delta _1} + 1} {\left( {k + d} \right)} {\left( {1 - q} \right)^{k - 1}}{\pi _{1,d}} + {\left( {1 - q} \right)^{{\delta _1}}}\sum\limits_{k = {\delta _1} + 2}^\infty  {\left( {k + d} \right)} {\left( {1 - p} \right)^{k - {\delta _1} - 1}}{\pi _{1,d}}\\
&= {\pi _{1,d}}\underbrace {\left[ {\sum\limits_{k = 1}^{{\delta _1} + 1} k {{\left( {1 - q} \right)}^{k - 1}} + {{\left( {1 - q} \right)}^{{\delta _1}}}\sum\limits_{k = {\delta _1} + 2}^\infty  k {{\left( {1 - p} \right)}^{k - {\delta _1} - 1}}} \right]}_{{\Theta _1}}\\
&\quad+d{\pi _{1,d}}\underbrace {\left[ {\sum\limits_{k = 1}^{{\delta _1} + 1} {{{\left( {1 - q} \right)}^{k - 1}}}  + {{\left( {1 - q} \right)}^{{\delta _1}}}\sum\limits_{k = {\delta _1} + 2}^\infty  {{{\left( {1 - p} \right)}^{k - {\delta _1} - 1}}} } \right]}_{{\Theta _2}},
\end{split}
\end{equation}
where the equality $a$ is according to the fact given in Lemma \ref{property2}. By using the sum of geometric sequence, the terms ${{\Theta _1}}$ and ${{\Theta _2}}$ in (\ref{term3lemmaeq1}) are given by
\begin{equation}\label{theta1}
{{\Theta _1}}={\left( {1 - q} \right)^{{\delta _1}}}\left( {{{{\delta _1}} \over p} + {1 \over {{p^2}}} - {{{\delta _1} } \over q} - {1 \over {{q^2}}}} \right)  + {1 \over {{q^2}}},
\end{equation}
\begin{equation}\label{theta2}
{{\Theta _2}} = \left( {{1 \over p} - {1 \over q}} \right){{\left( {1 - q} \right)}^{{\delta _1}}} + {1 \over q}.
\end{equation}
With the results given in (\ref{term3lemmaeq1}), (\ref{theta1}) and (\ref{theta2}), the term $\sum\limits_{k = 1}^\infty  {\sum\limits_{d = {\delta _2}}^\infty  {\left( {k + d} \right){\pi _{k,d}}} }$ can now be further evaluated as
\begin{equation}\label{term3lemmaeq2}
\begin{split}
\sum\limits_{k = 1}^\infty  {\sum\limits_{d = {\delta _2}}^\infty  {\left( {k + d} \right){\pi _{k,d}}} } &= {\Theta _1}\sum\limits_{d = {\delta _2}}^\infty  {{\pi _{1,d}}}  + {\Theta _2}\sum\limits_{d = {\delta _2}}^\infty  {d{\pi _{1,d}}}\mathop { = }\limits^{\left( a \right)}{{{\Theta _1} + {\Theta _2}{\delta _2}} \over q}x+{\Theta _2}\sum\limits_{d = {\delta _2}}^\infty  {\left(d-{\delta _2}\right){\pi _{1,d}}},\\
\end{split}
\end{equation}
where the equality $a$ can be obtained by the analysis given below (\ref{col01equation}) that $\sum\limits_{d = {\delta _2}}^\infty  {{\pi _{1,d}} = } {x \over q}$. ${{\Theta _1}}$ and ${{\Theta _2}}$ in (\ref{term3lemmaeq2}) has been derived in (\ref{theta1}) and (\ref{theta2}), we now consider the only unknown term $\sum\limits_{d = {\delta _2}}^\infty  {\left(d-{\delta _2}\right){\pi _{1,d}}}$ in (\ref{term3lemmaeq2}). By using the stationary distribution calculated in (8c) and (9c) for subspace $C$, the term $\sum\limits_{d = {\delta _2}}^\infty  {\left(d-{\delta _2}\right){\pi _{1,d}}}$  can be evaluated as
\begin{equation}\label{term3lemmaeq3}
\begin{split}
\sum\limits_{d = {\delta _2}}^\infty  {\left(d-{\delta _2}\right){\pi _{1,d}}} &\mathop { = }\limits^{\left( a \right)}{\pi _{1,{\delta _2}}}\sum\limits_{l = 0}^\infty  {\sum\limits_{N = 1}^\infty  {\left[ {l\left( {{\delta _1} + 1} \right) + N - 1} \right]} } {{\left( {N + l - 1} \right)!} \over {l!\left( {N - 1} \right)!}}{\left[ {p{{\left( {1 - q} \right)}^{{\delta _1}}}} \right]^l}{\left( {1 - p} \right)^{N - 1}}\\
&\mathop { = }\limits^{\left( b \right)}{\pi _{1,{\delta _2}}}\sum\limits_{l = 0}^\infty  {{{\left( {1 - q} \right)}^{l{\delta _1}}}{{\left( {p{\delta _1}l + l - p + 1} \right)} \over {{p^2}}}} \\
& \mathop { = }\limits^{\left( c \right)} {\pi _{1,{\delta _2}}}\left[{{{\delta _1}} \over p}\sum\limits_{l = 0}^\infty  l {\left( {1 - q} \right)^{l{\delta _1}}} + {1 \over {{p^2}}}\sum\limits_{l = 0}^\infty  {\left( {l + 1} \right)} {\left( {1 - q} \right)^{l{\delta _1}}} - {1 \over p}\sum\limits_{l = 0}^\infty  {{{\left( {1 - q} \right)}^{l{\delta _1}}}}\right]\\
&\mathop { = }\limits^{\left( d \right)}{\pi _{1,{\delta _2}}}\left\{{{{\delta _1}{{\left( {1 - q} \right)}^{{\delta _1}}}} \over {p{{\left[ {1 - {{\left( {1 - q} \right)}^{{\delta _1}}}} \right]}^2}}} + {1 \over {{p^2}{{\left[ {1 - {{\left( {1 - q} \right)}^{{\delta _1}}}} \right]}^2}}} - {1 \over {p\left[ {1 - {{\left( {1 - q} \right)}^{{\delta _1}}}} \right]}}\right\},
\end{split}
\end{equation}
where the equality $a$ is obtained by substituting $k=1$ into (8c) and (9c). Besides, from (\ref{n}) and (\ref{m}), we have $d-\delta_2 =  n\left(\delta_1+1\right)+m = {l\left( {{\delta _1} + 1} \right) + N - 1}$. The equality $b$ in (\ref{term3lemmaeq3}) is because of the following sum of series calculated on WolframAlpha \cite{wolfram}
\begin{equation}
\sum\limits_{N = 1}^\infty  {\left[ {l\left( {{\delta _1} + 1} \right) + N - 1} \right]}  {{\left( {N + l - 1} \right)!} \over {l!\left( {N - 1} \right)!}}{p^l}{\left( {1 - p} \right)^{N - 1}} = {{\left( {p{\delta _1}l + l - p + 1} \right)} \over {{p^2}}}.
\end{equation}
The equality $c$ in (\ref{term3lemmaeq3}) is derived by some algebraic manipulation. The equality $d$ in (\ref{term3lemmaeq3}) is derived based on the sum of geometric sequence and the sum of arithmetic sequence. Substitute (\ref{theta1}), (\ref{theta2}), (\ref{term3lemmaeq3}) into (\ref{term3lemmaeq2}), we obtain the desired results given in Lemma \ref{AoIterm3lemma}.

\ifCLASSOPTIONcaptionsoff
  \newpage
\fi

\bibliographystyle{IEEEtran}
\bibliography{References}
\end{document}